\documentclass[11pt]{article}

\usepackage[margin=1in]{geometry}
\usepackage{setspace}
\usepackage{amsmath,amssymb,amsthm,mathtools}
\usepackage{mathrsfs}
\usepackage{xcolor}
\usepackage{algorithm}
\usepackage{algpseudocode}
\usepackage{microtype}
\usepackage{enumitem}

\usepackage[round,authoryear]{natbib}
\usepackage[hidelinks]{hyperref}

\usepackage{authblk}

\newtheorem{theorem}{Theorem}[section]
\newtheorem{proposition}[theorem]{Proposition}
\newtheorem{lemma}[theorem]{Lemma}
\newtheorem{corollary}[theorem]{Corollary}
\newtheorem{assumption}[theorem]{Assumption}
\theoremstyle{definition}
\newtheorem{definition}[theorem]{Definition}
\newtheorem{example}[theorem]{Example}
\theoremstyle{remark}
\newtheorem{remark}[theorem]{Remark}

\newcommand{\E}{\mathbb{E}}
\newcommand{\one}{\mathbf{1}}
\newcommand{\R}{\mathbb{R}}
\newcommand{\cA}{\mathcal{A}}
\newcommand{\cS}{\mathcal{S}}
\newcommand{\bbP}{\mathbb{P}}
\newcommand{\TV}{\mathrm{TV}}
\newcommand{\proj}{\mathrm{Proj}}
\newcommand{\eps}{\varepsilon}
\newcommand{\GapMCCE}{\operatorname{Gap}_{\mathrm{MCCE}}}
\newcommand{\GapMPE}{\operatorname{Gap}_{\mathrm{MPE}}}

\title{Equilibrium in Multi-Agent Reinforcement Learning}

\author[1]{Maurizio D'Andrea}
\affil[1]{Department of Analytics and Operations\\ National University of Singapore's Business School\\
\texttt{dmaurizio\_1@nus.edu.sg}}

\author[2]{Bar Light}
\affil[2]{Business School and Institute of Operations Research and Analytics, National University of Singapore\\
\texttt{barlight@nus.edu.sg}}

\begin{document}

\maketitle

\begin{abstract}
Standard solution concepts for stochastic games, such as Markov perfect equilibrium and Markov coarse correlated equilibrium, are computationally difficult, and thus, standard decentralized reinforcement-learning algorithms should not generally be expected to converge to them. In this paper, we study the equilibrium generated by such algorithms. In particular, we introduce a new solution concept for stochastic games, Markov Bayes coarse correlated equilibrium (MBCCE), defined as a distribution over states and stationary policy profiles such that, after observing the state but before observing her recommended action, no player can gain by choosing a different current action, with the sampled policy profile governing play thereafter. We discuss the parallels between MBCCE and coarse correlated equilibrium (CCE) in finite normal-form games and show that MBCCE retains several of its key properties. We then introduce a corresponding regret notion, adaptive Markov coarse regret (AMCR), and show that vanishing AMCR implies that every accumulation point of the empirical distribution of realized states and policy profiles is an MBCCE. Crucially, we show that achieving AMCR reduces to two standard learning tasks: minimizing external regret at each state and accurately evaluating the current joint policy. We then prove that under mild conditions these properties hold for two natural RL algorithmic designs: a decentralized asynchronous actor--critic algorithm through a new two-timescale stochastic-approximation analysis, and a standard episodic multi-agent projected policy-gradient method. Hence, both algorithms generate approximate MBCCEs, and we establish explicit finite-time convergence rates for both. Finally, we relate MBCCE to stationary MCCE and use this connection together with our algorithmic guarantees to identify a natural class of stochastic games in which these algorithms achieve approximate MCCE. Taken together, our results identify MBCCE as a new equilibrium notion for the long-run empirical behavior of standard reinforcement-learning methods in general finite discounted stochastic games.  
\end{abstract}

\section{Introduction}
\label{sec:intro}

A key challenge in multi-agent reinforcement learning is to identify a solution concept that is both strategically meaningful and compatible with decentralized learning.  A Markov perfect equilibrium (MPE) is the  Markovian refinement of Nash equilibrium for stochastic games: it evaluates one stationary product-policy profile and requires optimal behavior at every state. While MPE therefore has a clear strategic interpretation, it is generally not compatible with decentralized learning. In normal-form games, coarse correlated equilibrium (CCE) provides a natural alternative to Nash that is closely tied to learning: no-regret dynamics generate CCE, providing a learning foundation for the concept, and CCE itself is computationally tractable. This suggests seeking an analogous relaxation of MPE in stochastic games. A stationary Markov coarse correlated equilibrium (MCCE), which evaluates one stationary joint-action law, is a natural first candidate for extending the same connection. The analogy, however, generally breaks down. \citet{DaskalakisGolowichZhang2023} show that computing a constant-approximate stationary MCCE is PPAD-hard even in two-player turn-based discounted stochastic games.\footnote{\citet{DaskalakisGolowichZhang2023} also construct an efficient nonstationary Markov CCE. Their policy specifies the future sequence of Markov decision rules in advance. This differs from the learning environments studied here, which are standard in RL, in which future policies are revised after new data are observed and are therefore not known at the initial date.}

This computational difficulty has a natural counterpart from the learning perspective. The no-regret route to stationary MCCE would naturally compare realized play with a fixed Markov-policy deviation in hindsight. In the standard reinforcement-learning setting we study, at each date the players observe the current state, choose actions, receive payoffs, and observe the next state, and use the resulting data to revise their policies. The transition law and payoff functions may be unknown, each player may have limited information about the other players, and because all players learn simultaneously, the policy profile generating the data can continue to change. In such a setting, changing a player's Markov policy generally changes the subsequent state path and therefore the observations used by all players to update their policies. The realized history therefore does not determine the payoff of such a deviation without specifying how the learners would have behaved along the counterfactual history. To the best of our knowledge, whether decentralized learners can achieve sublinear regret against this pathwise Markov-policy benchmark in general stochastic games in the continuing, partial-feedback setting studied here remains open. Our main question is therefore: in a general stochastic game, what equilibrium property can be guaranteed for the adaptive policy path generated by decentralized tabular learning algorithms, even when the policies do not converge?

To answer this question, we introduce a new distributional equilibrium notion, \emph{Markov Bayes coarse correlated equilibrium} (MBCCE), which extends CCE to stochastic games differently from MCCE, as we discuss in detail in Section~\ref{sec:markov-bcce}. An MBCCE is a joint distribution over a public state and a stationary product-policy profile. The players observe the state but not the sampled policy profile, so the state induces a common posterior over continuation policies. A mediator who observes the sampled profile recommends a current action drawn
according to that profile at the observed state. After observing the state but
before seeing her recommendation, a player may instead choose a fixed action at that state. From the next state onward, play follows the sampled stationary
profile. MBCCE requires that, conditional on every state that occurs with positive
probability, no player gains by choosing such a fixed action rather than following the mediator's recommendation.\footnote{The term ``Bayes'' refers to the posterior over continuation policies induced by the observed public state; it does not require private types. The timing is the usual Bayes coarse-obedience timing adapted to a publicly observed state; see \citet{HartlineSyrgkanisTardos2015}.}

The key feature of MBCCE is that each sampled stationary policy profile determines both current play and the continuation value used to evaluate a deviation. Thus, MBCCE preserves the coarse-obedience logic of CCE while accounting for the effect of current actions on future states. 
In a one-state stochastic game, MBCCE reduces to ordinary CCE. More generally,
when a unilateral current action does not affect the next-state distribution,
 the MBCCE condition at each state becomes
the CCE condition of the corresponding stage game. In addition, if the MBCCE distribution
is concentrated on one stationary product-policy profile, and every state has positive probability, it reduces to MPE. MBCCE also retains further properties of CCE. First, Section~\ref{sec:properties} introduces a Markov smoothness condition and shows that the standard smoothness argument in \cite{Roughgarden2009} yields statewise welfare guarantees for MBCCE. Second, MBCCE preserves the linear structure of CCE: for any finite collection of stationary policy profiles, its equilibrium conditions can be formulated as a linear program, as discussed in Section~\ref{sec:properties}.

To connect decentralized learning to MBCCE, we introduce \emph{adaptive Markov
coarse regret} (AMCR). AMCR evaluates a sequence of stationary product-policy
profiles along the realized state path. At every visit to a state, it compares
the player's current mixed action with each fixed action available at that state.
For such a comparison, the player changes only her current action: the opponents
use their contemporaneous policies at the current state, and the contemporaneous
joint policy profile is used to evaluate continuation from the next state onward.
Across the realized visits to each state, AMCR then asks, for each player, which
single fixed action would have produced the largest cumulative gain in hindsight.
Importantly, AMCR accounts for the effect of the current action on future states while requiring only evaluation of the joint policy at that date. 
It therefore does not require specifying how the other learners would have behaved following a counterfactual deviation, making it compatible with adaptive algorithms that learn from data. 
We show that a vanishing AMCR gap implies that every accumulation point of the empirical distribution of realized states and contemporaneous policy profiles is an MBCCE.

Our first result identifies a simple principle for obtaining AMCR: combine
no regret at each state with accurate evaluation of the current joint policy
profile. Joint-policy evaluation assigns each current action a score that
accounts for both its immediate payoff and its effect on future states, assuming
that the current joint policy is followed thereafter. Theorem~\ref{thm:AMCR-regret-tracking}
bounds the AMCR gap by the aggregate statewise external regret under these
scores and the average joint-policy-evaluation error. With exact policy
evaluation, AMCR becomes exactly statewise
external regret. Thus, obtaining
AMCR reduces to two familiar tasks: evaluating the current joint policy and
using no-regret learning among the actions available at each visited state.

Guided by this principle, we consider two natural decentralized reinforcement learning algorithms that yield AMCR. The first is an asynchronous actor--critic algorithm that learns continuously along a single trajectory. Each player observes the current state, the realized joint action, her own payoff, and the next state, but not the opponents' policies, payoff functions, or learning rules, and the transition kernel is unknown. We use a standard temporal-difference update to estimate action values from realized state--action transitions, while the player's own policy and the realized actions of the opponents are used to estimate the continuation value of the current joint policy. We use these values to provide statewise feedback to a no-regret actor and we allow each player--state pair to use any actor from a broad family that has sublinear external regret and changes its policy sufficiently slowly; this family includes Hedge, Euclidean dual averaging, projected gradient ascent, and their optimistic variants, and different players and states may use different actors. To generate the coverage needed for learning in such a setting, the implemented policies include fixed uniform exploration, and we impose only that the state chain generated by uniform joint actions is irreducible. The main analytical difficulty is that the policy profile being evaluated continues to change while the value updates are asynchronous and occur on endogenous local clocks. We develop a new two-timescale analysis showing that the value estimates track this moving joint-policy evaluation; Section~\ref{sec:algorithm} gives the technical overview. Combining this tracking result with statewise no regret, Theorem~\ref{thm:algorithm-AMCR} we show that the behavior-policy path generated by this algorithm is asymptotically an $O(\eps)$-AMCR almost surely, where the approximation comes from fixed exploration. Theorem~\ref{thm:finite-time-AMCR} further provides finite-time rates for such a setting. In particular, for every $r\in(0,1/4)$, the expected AMCR gap is bounded by an $O(\eps)$ term plus $O(T^{-r})$. By the connection between AMCR and MBCCE, these guarantees also produce an approximate MBCCE for the actor-critic algorithm.

Our second algorithm is an episodic projected policy-gradient method, similar to the REINFORCE procedure studied by \cite{LeonardosEtAl2022} for potential Markov games. We show that the same type of decentralized procedure can obtain AMCR in any finite stochastic game. In this setting, the joint policy is held fixed within each episode and each player observes only the states, her own actions, and her own realized payoffs; episodes provide joint-policy evaluation, while projected policy-gradient updates provide statewise no regret. With  suitable step sizes, Theorem~\ref{thm:reinforce-AMCR} shows that the resulting behavior-policy path generated by this algorithm achieves exact AMCR asymptotically in every finite discounted stochastic game. After $E$ episodes, the expected AMCR gap over all stages played in those
episodes is $O(E^{-1/3})$.  Consequently, every accumulation point of the empirical distribution of realized states and contemporaneous policies is an MBCCE. Therefore, these two algorithms, despite both relying on standard features of reinforcement learning, differ substantially and still provide complementary implementations of the same regret–evaluation principle.

We next connect MBCCE to stationary MCCE. Given an MBCCE, we form a stationary
joint-action law by averaging, at each state, the action distributions prescribed
by its policy profiles. We show that the MCCE gap of this averaged law is
controlled by a continuation-value discrepancy that measures how the payoff
effect of a unilateral action changes when continuation is evaluated under the
sampled policy profile or under the averaged stationary law. This result implies,
in particular, that an MBCCE induces an approximate MCCE whenever this continuation-value discrepancy 
has only a small effect on unilateral-deviation payoffs. We use this connection to derive
MCCE guarantees for transition-aggregative stochastic games, in which the
next-state distribution depends Lipschitz continuously on a weighted aggregate
of players' actions, while stage payoffs may have unrestricted strategic
interactions. When players have equal weight each player's influence on the
next-state distribution is $O(1/N)$ where $N$ is the number of players. We show that the empirical joint-action law
generated by our actor--critic algorithm is then an $O(\eps+1/N)$-approximate
MCCE, while the episodic policy-gradient algorithm yields an $O(1/N)$-approximate
MCCE. Dynamic routing and common-pool-resource games provide canonical
examples of such settings, and hence, our algorithms achieve approximate MCCE for such games when the number of players is large. We also show in this setting we can provide welfare guarantees for MBCCE.

The rest of the paper is organized as follows. Section~\ref{sec:model} introduces the stochastic game and the feedback structure. Section~\ref{sec:AMCR} defines AMCR and establishes the regret--joint-policy-evaluation principle. Section~\ref{sec:markov-bcce} introduces MBCCE and establishes its connection to empirical learning paths. Section~\ref{sec:algorithm} studies the continuing actor--critic algorithm, and Section~\ref{sec:reinforce-AMCR} studies episodic projected REINFORCE. Section~\ref{sec:properties} develops further properties of MBCCE, including its linearity of structure and welfare guarantees under smoothness. Section~\ref{sec:mcce} relates MBCCE to stationary MCCE and applies the connection to transition-aggregative stochastic games.

\subsection{Motivating example}
\label{sec:motivating-example}

Before introducing our equilibrium notions, we compare a normal-form game with a
stochastic game that has a similar best-response pattern. The example provides a stylized setting in which stationary MCCE does not
enlarge MPE, while MBCCE produces an enlargement analogous to that of CCE over
Nash equilibrium---the type of enlargement one expects from adaptive learning
in such settings. 

\begin{example}
\label{ex:stochastic-investment-MBCCE}
Consider first the common-interest anti-coordination game
\[
\begin{array}{c|cc}
 & 0 & 1\\ \hline
0 & (0,0) & (1,1)\\
1 & (1,1) & (0,0).
\end{array}
\]
The Nash marginal profiles are $(1,0)$,
$(0,1)$, and $(1/2,1/2)$. The possible action marginals under CCE form the
shaded quadrilateral in
Figure~\ref{fig:anti-coordination-investment-MBCCE}(a). As usual in such games, the CCE set
generated by no-regret learning is larger than the Nash equilibrium set, since
such dynamics generally do not converge to an exact best response.

Now consider a two-player stochastic investment game with states
$\cS=\{0,1\}$ and actions $A_i=\{0,1\}$, where action $1$ means investment. The  following particular primitives
simplify the calculations, but the economic structure is a standard
public-investment problem with strategic substitutes.
\[
u_i(s,a)=s-\frac12a_i,
\qquad
P(1\mid0,a)=\frac{a_1+a_2}{3},
\qquad
P(1\mid1,a)=\frac34,
\qquad
\gamma=\frac45.
\]
The complementary transition probability leads to state $0$. State $1$ yields
positive payoff but investment is costly. Aggregate investment at state $0$
raises the probability of reaching state $1$. 
At state $1$, investment lowers the current payoff by $1/2$ without affecting
transitions, so action $0$ is strictly optimal. Under all three equilibrium
notions considered here (MPE, MCCE, and MBCCE whenever state $1$ occurs with positive probability), players therefore choose action
$0$ at state $1$. Given this behavior, a direct calculation provided in the
appendix shows that at state $0$ a player prefers to invest exactly when the
other player invests with probability below $1/2$. Intuitively, the other
player's investment already raises the probability of reaching state $1$,
reducing the benefit from investing oneself. Thus, the best-response
correspondence at state $0$ is exactly the same as in the normal-form
anti-coordination game. The MPE investment probabilities at state $0$ are
consequently the same three points shown in
Figure~\ref{fig:anti-coordination-investment-MBCCE}(b).

Stationary MCCE does not enlarge this set. In this example, under any stationary
correlated action law, the value
of every stationary deviation depends only on the players' statewise investment
probabilities. Hence, replacing the correlated law by independent Markov
policies with the same marginals does not change the equilibrium values, so a
stationary joint-action law is an MCCE exactly when the corresponding product
policy is an MPE.\footnote{MCCE can still correlate the players' actions. In
particular, at the symmetric marginal profile $(1/2,1/2)$, MCCE allows
correlated joint-action laws with these marginals, whereas MPE uses independent
mixing, but this additional correlation does not affect payoffs, transitions,
or deviation values.} The MCCE investment probabilities at state $0$ are
therefore again $(1,0)$, $(0,1)$, and $(1/2,1/2)$.

MBCCE produces the analogue of the CCE enlargement discussed above. As shown
in Figure~\ref{fig:anti-coordination-investment-MBCCE}(b), MBCCE permits a
two-dimensional set of investment probabilities at state $0$, whereas MPE and
MCCE are restricted to the three exact stationary best-response profiles
described above.\footnote{To obtain the shaded region, we consider the simple
case in which the distribution over state-$0$ continuation policies is
supported on the four pure action profiles $(a,b)\in\{0,1\}^2$. This is the
only restriction imposed here: action $0$ at state $1$ is strictly optimal and
therefore is chosen under every MBCCE in this game. Even this simple
finite-support case already generates the two-dimensional shaded region. The
formal construction and its exact characterization are given in the Appendix.}

In multi-agent reinforcement learning, one would not generally expect the policy path to converge to one of these stationary best-response profiles, much as no-regret dynamics in normal-form games need not converge to a Nash equilibrium. In this sense, MBCCE extends CCE from normal-form games as the distributional equilibrium generated by the empirical state--policy distributions of reinforcement learning algorithms, at least as we show for the actor--critic and policy-gradient algorithms studied in this paper.
\end{example}

\begin{figure}[t]
\centering
\includegraphics[width=0.98\textwidth]
{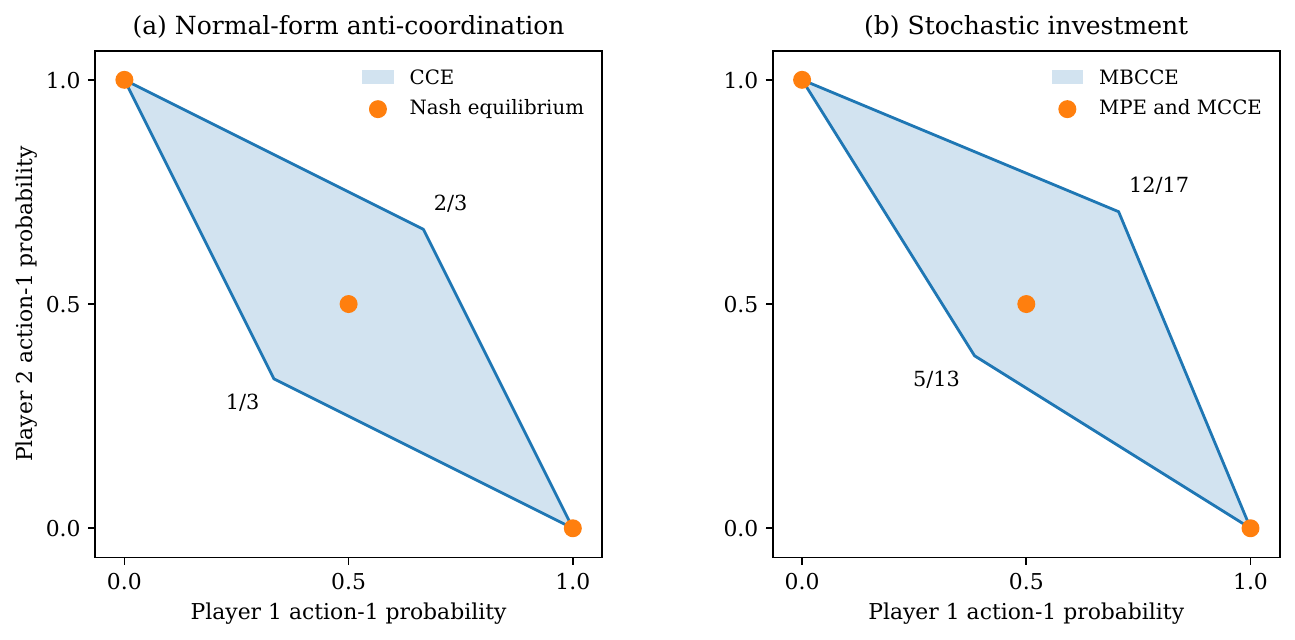}
\caption{Each axis gives a player's probability of action $1$; in panel~(b),
action $1$ is investment at state $0$. The points show Nash equilibrium
marginals in panel~(a) and the common MPE--MCCE marginals in panel~(b). The
shaded regions show CCE marginals in panel~(a) and MBCCE marginals in
panel~(b). The MBCCE region is exact within the four pure stationary profiles
described in Example~\ref{ex:stochastic-investment-MBCCE}.}
\label{fig:anti-coordination-investment-MBCCE}
\end{figure}

\subsection{Related Literature}

Our paper is related to two strands of literature.

\textbf{Learning in stochastic games.}
A growing literature develops algorithms that produce equilibrium policies in general-sum Markov games. \citet{JinLiuWangYu2024} introduce V-learning for episodic Markov games, obtaining correlated or coarse correlated policies in multiplayer general-sum games without dependence on the size of the joint action space. In finite-horizon models, \citet{CaiLuoWeiZheng2024}, \citet{MaoEtAl2024} develop policy-optimization procedures built from statewise no-regret updates. \citet{DaskalakisGolowichZhang2023} establish the computational hardness of stationary MCCE and construct an efficiently learnable nonstationary Markov CCE. Recently, \citet{FarinaKontogiannisPanageasPollatos2026} extend the hardness result to single-controller stochastic games. A broader literature studies learning under additional stochastic-game structure. In two-player zero-sum stochastic games, \citet{DaskalakisFosterGolowich2020} study independent policy-gradient methods, while \citet{XieChenWangYang2023} and \citet{YanLiChenFan2024} study learning with function approximation or offline data. Global policy-gradient guarantees are also available in Markov potential games \citep{LeonardosEtAl2022,DingWeiZhangJovanovic2022}. Mean-field equilibrium and learning, as well as networked multiagent reinforcement learning, are studied in  \citep{AdlakhaJohari2013,QuWiermanLi2022,GuoHuXuZhang2023,GuGuoWeiXu2025,light2025computing}. In general stochastic games, \citet{GiannouLotidisMertikopoulosVlatakisGkaragkounis2022} establish local convergence around second-order stationary Nash policies, while \citet{MaheshwariWuSastry2024} characterize the asymptotic behavior of decentralized actor--critic dynamics using Markov near-potential functions. These contributions focus on computing an equilibrium policy or a prescribed policy sequence, or on convergence of learning dynamics, often under additional game structure. We instead characterize the equilibrium content of the adaptive path produced by decentralized tabular learning, and our AMCR--MBCCE framework is developed for general finite discounted stochastic games.

A related literature studies regret against policy deviations in Markov games.
\citet{FosterGolowichKakade2023} establish computational and statistical
hardness results for independent no-regret learning in finite-horizon episodic
Markov games. \citet{LiuWangJin2022} and \citet{ZhanLeeYang2023} obtain no-regret
guarantees against fixed policy comparators in episodic policy-revealing
settings, where an opponent's past policies are observed;
\citet{ZhanLeeYang2023} also allows general function approximation. \cite{WangLiuBaiJin2023} develop decentralized
policy-space mirror descent and all-policy evaluation in episodic Markov
games under function approximation.
\citet{ErezEtAl2023} study regret against Markov-policy deviations in
finite-horizon common-algorithm self-play and obtain sublinear swap regret
under an additional endogenous condition on the resulting policy iterates.
\citet{NguyenTangArora2024} study one learner against an adaptive opponent,
obtaining positive results under bounded-memory, stationarity, and consistency
conditions together with hardness results outside such structure.
These results use episodic protocols or rely on substantially stronger information or structural
assumptions. They do not cover the continuing, single-trajectory partial-feedback
setting of our actor--critic result.

\textbf{Asynchronous two-timescale stochastic approximation.}
This literature is related specifically to our continuing actor--critic algorithm. The differential-inclusion framework developed in \citet{BenaimHofbauerSorin2005} and its asynchronous extension in \citet{PerkinsLeslie2013}, which also treats coupled two-timescale recursions, provide general asymptotic tools for stochastic-approximation dynamics. More recently, \citet{Borkar2025GeneralCase} allows either timescale to remain nonconvergent and characterizes the possible asymptotic behavior. These frameworks characterize asymptotic behavior and do not provide the quantitative moving-target policy-evaluation bound used in our finite-time AMCR result. In addition, asynchronous analyses typically assume sufficient state--action visitation, whereas in our setting we derive this coverage from a primitive communication condition on the transition dynamics together with the exploration built into the algorithm. \citet{HakamiDehghan2016} analyze a specific three-timescale no-regret $Q$-learning scheme for constrained general-sum stochastic games and establish convergence to the set of stationary correlated equilibria. In contrast, our critic tracks any no-regret actor satisfying mild regret and movement guarantees, even when the actor path does not converge.

Recent finite-time analyses study more structured two-timescale fixed-point recursions. \citet{ChandakHaqueBambos2025} assume that both the fast map and the reduced slow map are contractive. \citet{Chandak2026NonExpansive} permits a nonexpansive reduced slow map and obtains a residual-rate exponent arbitrarily close to one quarter. In these papers, the fast recursion tracks the fixed point associated with a prescribed slow fixed-point iteration whose reduced map is contractive or nonexpansive, and the analysis targets convergence or residual of the slow recursion. Our objective is different: without requiring the slow policy update to be a fixed-point iteration or its reduced map to be contractive or nonexpansive, we establish quantitative policy-evaluation tracking along a possibly nonconvergent policy path and use this tracking to obtain finite-time AMCR guarantees. 

\section{Model}
\label{sec:model}

\subsection{Stochastic game}

We consider an $N$-player discounted stochastic game with finite state space
$\cS$, finite action sets $A_i$, and joint action space
$\cA:=\prod_{i=1}^N A_i$. Write $a=(a_1,\ldots,a_N)$ and
$A_{-i}:=\prod_{j\ne i}A_j$. At stage $t$, the state is $s_t$, each player
$i$ chooses an action $a_{i,t}$, and the resulting joint action is
$a_t=(a_{1,t},\ldots,a_{N,t})$. Player $i$ then receives
$u_i(s_t,a_t)\in[0,1]$, and the next state is drawn from
$P(\cdot\mid s_t,a_t)$. The discount factor is $\gamma\in(0,1)$, and
$H:=(1-\gamma)^{-1}$.

 A stationary Markov policy of player $i$ is a map
$\pi_i:\cS\to\Delta(A_i)$, where $\pi_i(a_i\mid s)$ is the probability of
action $a_i$ at state $s$. A stationary product-policy profile is denoted by
$\Pi=(\pi_i)_{i=1}^N$, with joint action law
$\Pi(\cdot\mid s):=\bigotimes_{i=1}^N\pi_i(\cdot\mid s)$. We write
$\Pi_{-i}(\cdot\mid s):=\bigotimes_{j\ne i}\pi_j(\cdot\mid s)$ for the
marginal of all players other than $i$.  

\subsection{Information and feedback}

At date $t$, the learning rules maintain a target profile $\Pi_t\in\mathcal C$
and select a behavior profile $p_t\in\mathcal C$ where 
\[
\mathcal C:=\prod_{s\in\cS}\prod_{i=1}^N\Delta(A_i)
\]
is the compact set of stationary product-policy profiles. The target profile is the
policy being updated and evaluated, whereas the behavior profile generates the
actions used to collect data. The behavior profile introduces the additional
randomization required for learning.

All random variables are defined on a probability space
$(\Omega,\mathcal F,\bbP)$. Let $\mathcal H_t$ be the $\sigma$-field
generated by all data available in the analysis immediately before the
date-$t$ action.\footnote{Formally, let $\mathcal H_1$ be the $\sigma$-field
generated by the initial state and the initial variables of the learning
rules. Recursively, $\Pi_t$, $p_t$, and all variables maintained before the
date-$t$ action are $\mathcal H_t$-measurable. Conditional on
$\mathcal H_t$ and the realized action, the state moves according to the
transition kernel:
$
\bbP(s_{t+1}=s'\mid\mathcal H_t\vee\sigma(a_t))
=P(s'\mid s_t,a_t)
$.
After the transition, define the next global history by
\[
\mathcal H_{t+1}
:=\sigma\!\left(
\mathcal H_t,
 a_t,
 (u_i(s_t,a_t))_{i=1}^N,
 s_{t+1}
\right).
\]
This definition makes $\mathcal H_t$ the information available immediately before the date-$t$ action, where $\sigma(\cdot)$
denotes the $\sigma$-algebra generated by its arguments.}
Conditional on
$\mathcal H_t$, the players draw their actions independently according to the
behavior profile:
\begin{equation}
\label{eq:behavior-sampling}
\bbP(a_t=a\mid\mathcal H_t)
=p_t(a\mid s_t)
:=\prod_{i=1}^N p_{i,t}(a_i\mid s_t),
\qquad a\in\cA.
\end{equation}

We assume that player $i$ observes the current state, her own payoff, and the next state; she does not observe
the other players' policies, payoff functions, or learning rules, and the
transition kernel is unknown. In particular, players do not have all the information contained in $\mathcal H_t$. 
 The actor--critic
algorithm developed in Section~\ref{sec:algorithm} also assumes that player $i$ observes  the
realized joint action. Importantly, throughout the analysis, player $i$ does not observe the 
policies or counterfactual actions of the other players.

\subsection{Notation}

For finite-dimensional vectors $x$ and $y$, write
$\langle x,y\rangle:=\sum_jx_jy_j$,
$\|x\|_1:=\sum_j|x_j|$, and $\|x\|_\infty:=\max_j|x_j|$.  The notation
$\one\{E\}$ denotes the indicator of an event $E$. For a scalar $x$, let
$(x)_+:=\max\{x,0\}$. For probability vectors $\alpha$ and $\beta$, let
$\TV(\alpha,\beta):=\tfrac12\|\alpha-\beta\|_1$. For stationary product
profiles, use
\[
\|\Pi-\Pi'\|_1
:=\sum_{s\in\cS}\sum_{i=1}^N
\|\pi_i(\cdot\mid s)-\pi_i'(\cdot\mid s)\|_1.
\]

For a horizon $T$, define
$ 
T_s(T):=\{t\le T:s_t=s\}
$, and 
$N_s(T):=|T_s(T)|$. 
When no ambiguity arises, write $T_s$ and $N_s$. Let
$t_1(s)<t_2(s)<\cdots$ denote the successive visits to state $s$, and define
the local state and state--action clocks
$
n_t(s):=\sum_{u=1}^t\one\{s_u=s\}$, and $
n_t(s,a):=\sum_{u=1}^t\one\{s_u=s,a_u=a\}$. 
Thus $n_{t_q(s)}(s)=q$ and $\sum_sN_s(T)=T$. These clocks index the
asynchronous updates below: a state block changes only when that state is
visited, and a state--action coordinate changes only when the corresponding
joint action is observed.

\section{Adaptive Markov coarse regret}
\label{sec:AMCR}

In this section we 
introduce Adaptive Markov coarse regret (AMCR). For each action at a state, AMCR evaluates the gain from using that action at every
realized visit, with the profile of all players at that date as the continuation
benchmark. For each player--state pair, AMCR selects the best fixed action in
hindsight and requires its long-run time-average gain to vanish. 
Intuitively, AMCR requires that no fixed action
yield a persistent average improvement relative to the contemporaneous
profiles. This notion avoids specifying the unobserved future learning
path under a counterfactual deviation which is a key challenge in a general stochastic game setting as we discussed in the introduction.

To define AMCR we first define the value
of a current action under a stationary continuation law. A stationary
statewise joint-action law
$\rho=(\rho^s)_{s\in\cS}\in\prod_s\Delta(\cA)$ determines the canonical
Bellman pair for $(i,s,a)$: 
\begin{equation}
\label{eq:canonical-bellman-system}
\begin{aligned}
Q_i^\rho(s,a)
&=u_i(s,a)+\gamma\sum_{s'}P(s'\mid s,a)V_i^\rho(s'),\\
V_i^\rho(s)
&=\sum_a\rho^s(a)Q_i^\rho(s,a).
\end{aligned}
\end{equation}
The following Lemma follows from standard dynamic programming arguments. All proofs in this paper are in the appendix. 

\begin{lemma}[Canonical Bellman pair]
\label{lem:Bellman-tying-unique}
For every $\rho\in\prod_s\Delta(\cA)$,
system~\eqref{eq:canonical-bellman-system} has a unique solution, and
$Q^\rho,V^\rho\in[0,H]$ componentwise.
\end{lemma}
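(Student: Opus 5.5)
The plan is to eliminate $Q$ and reduce the system to a single fixed-point equation for $V$ alone. Substituting the first line of \eqref{eq:canonical-bellman-system} into the second shows that any solution must satisfy $V_i^\rho=T^\rho V_i^\rho$. Here the operator $T^\rho:\R^{\cS}\to\R^{\cS}$ is
\[
(T^\rho v)(s):=\sum_a\rho^s(a)\Bigl[u_i(s,a)+\gamma\sum_{s'}P(s'\mid s,a)v(s')\Bigr].
\]
Conversely, if $v$ is a fixed point of $T^\rho$, then defining $Q(s,a):=u_i(s,a)+\gamma\sum_{s'}P(s'\mid s,a)v(s')$ gives a pair $(Q,v)$ that solves the system. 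Moreover $Q$ is uniquely determined by $V$ through the first equation. So existence and uniqueness for the pair is equivalent to existence and uniqueness of a fixed point of $T^\rho$, for each fixed $i$.

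Next I would show that $T^\rho$ is a $\gamma$-contraction in $\|\cdot\|_\infty$. For $v,w\in\R^{\cS}$,
\[
|(T^\rho v)(s)-(T^\rho w)(s)|\le\gamma\sum_a\rho^s(a)\sum_{s'}P(s'\mid s,a)|v(s')-w(s')|\le\gamma\|v-w\|_\infty.
\]
This uses only that $\rho^s$ and $P(\cdot\mid s,a)$ are probability vectors. By Banach's fixed-point theorem on the complete space $(\R^{\cS},\|\cdot\|_\infty)$, there is a unique fixed point. Equivalently, one could write $V=(I-\gamma P_\rho)^{-1}r_\rho$, where $P_\rho$ is the stochastic matrix $P_\rho(s,s')=\sum_a\rho^s(a)P(s'\mid s,a)$ and $r_\rho(s)=\sum_a\rho^s(a)u_i(s,a)$; invertibility follows since $\|\gamma P_\rho\|_\infty=\gamma<1$.

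For the bounds, I would observe that $T^\rho$ maps the closed set $[0,H]^{\cS}$ into itself. If $0\le v\le H$, then since $u_i\in[0,1]$,
\[
0\le T^\rho v\le 1+\gamma H=H,
\]
using $H=(1-\gamma)^{-1}$. The unique fixed point is the limit of iterates started from $0$, and the set is closed, so $V_i^\rho\in[0,H]^{\cS}$. The same calculation applied to $Q_i^\rho(s,a)=u_i(s,a)+\gamma\sum_{s'}P(s'\mid s,a)V_i^\rho(s')$ gives $Q_i^\rho\in[0,H]$. No step is a genuine obstacle. The only point needing care is the first one: the equivalence between solutions of the coupled system and fixed points of $T^\rho$, so that uniqueness of $V$ transfers to uniqueness of the pair.
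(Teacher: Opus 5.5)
Your proposal is correct and follows essentially the same route as the paper: both reduce the system to the fixed-point equation $V=r_\rho+\gamma P_\rho V$, use the $\gamma$-contraction in the sup norm, and then recover $Q$ uniquely from $V$, so that the tying equation holds. The only cosmetic difference is that the paper reads off $0\le V\le H$ from the Neumann series $\sum_k\gamma^kP_\rho^kr_\rho$, whereas you use invariance of the closed set $[0,H]^{\cS}$ under the Bellman operator; both arguments are fine.
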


For a product profile $\Pi$, we denote by $(Q^\Pi,V^\Pi)$  the canonical pair
generated by $\Pi(\cdot\mid s)$ at every state. Intuitively, at date $t$, this pair treats
$\Pi_t$ as a stationary continuation benchmark for evaluating the current
decision. With this notation, for a product profile $\Pi$,  we define the deviation advantage
\begin{equation}
\label{eq:true_adv}
\mathcal A_i(\Pi;s,a_i')
:=\sum_{a_{-i}}\Pi_{-i}(a_{-i}\mid s)
Q_i^\Pi(s,(a_i',a_{-i}))-V_i^\Pi(s).
\end{equation}
as the expected gain
from choosing $a_i'$ at the current visit to $s$ compared to drawing an action
from $\pi_i(\cdot\mid s)$, when the other players draw from
$\Pi_{-i}(\cdot\mid s)$ and all players follow $\Pi$ from the next state
onward. 

\begin{definition}[Adaptive Markov coarse regret]
For a path $\Gamma_{1:T}=(\Gamma_t)_{t=1}^T$ of stationary product profiles,
define\footnote{Throughout the paper, all AMCR gaps are evaluated along the realized state path
$(s_t)_{t\ge 1}$ generated by the behavior policies. To make this dependence
explicit, one may write
\[
\mathcal G_T^{\mathrm{AMCR}}
(\Gamma_{1:T};s_{1:T})
:=
\frac1T\sum_{s\in\mathcal S}\sum_{i=1}^N
\left(
\max_{a_i'\in A_i}
\sum_{\substack{t\le T\\ s_t=s}}
\mathcal A_i(\Gamma_t;s,a_i')
\right)_+.
\]
We suppress $s_{1:T}$ from the notation. 
}
\begin{equation}
\label{eq:maf_cce_gap}
\mathcal G_T^{\mathrm{AMCR}}(\Gamma_{1:T})
:=\frac1T\sum_{s\in\cS}\sum_{i=1}^N
\left(
\max_{a_i'\in A_i}\sum_{t\in T_s(T)}
\mathcal A_i(\Gamma_t;s,a_i')
\right)_+.
\end{equation}
We say that the path is a $\delta$-AMCR at horizon $T$ when
$
\mathcal G_T^{\mathrm{AMCR}}(\Gamma_{1:T})\le\delta$, 
and it is asymptotically $\delta$-AMCR when
\begin{equation*}
\limsup_{T\to\infty}
\mathcal G_T^{\mathrm{AMCR}}(\Gamma_{1:T})\le\delta.
\end{equation*}
For the behavior-policy path actually played we write
$\mathcal G_T^{\mathrm{AMCR}}
:=\mathcal G_T^{\mathrm{AMCR}}(p_{1:T})$.
\end{definition}

The maximization in \eqref{eq:maf_cce_gap} selects the best fixed action in
hindsight for each player--state pair and evaluates it across all realized
visits to that state. A key feature is the time average: the
profile and its Bellman values may change from one visit to the next, but a
small gap means that no fixed statewise action would have produced a systematic
continuation-adjusted improvement along the realized path. The deviation is
coarse because it does not condition on a recommendation, and adaptive because
each date is evaluated under its own current profile. As explained above, AMCR
requires neither a forecast of future learning nor a model of how the other
players would react to an alternative history.

\begin{remark}[AMCR and MPE]
\label{rem:AMCR-rationality}
The one-deviation principle characterizes a Markov perfect equilibrium: at
every state, no player can gain by changing only her current action and then
returning to the equilibrium profile.
Formally, we have the following definition: 
\begin{definition}[Stationary Markov perfect equilibrium]
\label{def:MPE}
For a stationary product profile $\Pi$, define
\[
\GapMPE(\Pi)
:=
(1-\gamma)
\max_{\substack{i,\,s\in\cS\\ \beta_i:\cS\to A_i}}
\left(
V_i^{\beta_i,\Pi_{-i}}(s)-V_i^\Pi(s)
\right)_+.
\]
The profile $\Pi$ is a $\delta$-MPE if $\GapMPE(\Pi)\le\delta$.
\end{definition}

AMCR on the other hand aggregates deviation advantages along the realized
visits to each state and requires that no fixed action yield a positive
long-run average gain. If the sequence of policies converges, the continuation values stabilize, and under state space coverage condition, AMCR reduces to MPE. 
\begin{proposition}
\label{prop:AMCR-mpe}
Suppose $\Gamma_t\to\Pi^\star$,
$\mathcal G_T^{\mathrm{AMCR}}(\Gamma_{1:T})\to0$, and
$\liminf_{T\to\infty}N_s(T)/T>0$ for every $s\in\cS$.
Then $\GapMPE(\Pi^\star)=0$.
\end{proposition}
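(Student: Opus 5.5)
The proof has three steps. First, continuity gives that the deviation advantages converge. Second, the vanishing AMCR gap and the coverage condition force the limiting advantages to be nonpositive at every state. Third, the one-deviation principle turns nonpositive one-shot advantages into a zero MPE gap.

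\textbf{Continuity of the advantage.} By Lemma~\ref{lem:Bellman-tying-unique}, $V^\Pi = (I-\gamma P^\Pi)^{-1} u^\Pi$, where $P^\Pi$ is the state transition matrix and $u^\Pi$ the expected stage payoff under $\Pi$. Both depend polynomially on the entries of $\Pi$, and $(I-\gamma P^\Pi)^{-1}$ is continuous because its inverse is bounded by $H$. Hence $\Pi\mapsto V_i^\Pi$ and $\Pi\mapsto Q_i^\Pi$ are continuous on $\mathcal C$. The map $\Pi\mapsto\mathcal A_i(\Pi;s,a_i')$ in \eqref{eq:true_adv} is also continuous, since $\Pi_{-i}(a_{-i}\mid s)$ is polynomial in $\Pi$. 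So $\Gamma_t\to\Pi^\star$ gives $\mathcal A_i(\Gamma_t;s,a_i')\to\mathcal A_i(\Pi^\star;s,a_i')$ for every $(i,s,a_i')$.

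\textbf{Limiting advantages are nonpositive.} Fix $(i,s,a_i')$ and write $c:=\mathcal A_i(\Pi^\star;s,a_i')$. By the Cesàro argument along visits to $s$ (each term is bounded by $H$ and converges to $c$, and $N_s(T)\to\infty$ because $\liminf N_s(T)/T>0$),
\[
\frac{1}{N_s(T)}\sum_{t\in T_s(T)}\mathcal A_i(\Gamma_t;s,a_i')\to c .
\]
Since
\[
\mathcal G_T^{\mathrm{AMCR}}(\Gamma_{1:T})\ge \frac1T\Bigl(\sum_{t\in T_s(T)}\mathcal A_i(\Gamma_t;s,a_i')\Bigr)_+,
\]
we get $\frac{N_s(T)}{T}\bigl(\frac{1}{N_s(T)}\sum_{t\in T_s(T)}\mathcal A_i(\Gamma_t;s,a_i')\bigr)_+\to 0$. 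Because $N_s(T)/T$ is bounded below along large $T$, this yields $(c)_+=0$. Hence $\mathcal A_i(\Pi^\star;s,a_i')\le 0$ for all $i,s,a_i'$.

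\textbf{One-deviation principle.} Fix $i$. Under $\Pi^\star_{-i}$, player $i$ faces a single-agent discounted MDP with reward $r(s,a_i)=\sum_{a_{-i}}\Pi^\star_{-i}(a_{-i}\mid s)u_i(s,(a_i,a_{-i}))$ and the correspondingly averaged kernel. Define $\tilde Q(s,a_i):=\sum_{a_{-i}}\Pi^\star_{-i}(a_{-i}\mid s)Q_i^{\Pi^\star}(s,(a_i,a_{-i}))$. Then $\tilde Q(s,a_i)=r(s,a_i)+\gamma\sum_{s'}\bar P(s'\mid s,a_i)V_i^{\Pi^\star}(s')$. Step 2 says $\max_{a_i}\tilde Q(s,a_i)\le V_i^{\Pi^\star}(s)$. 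Also $V_i^{\Pi^\star}(s)=\sum_{a_i}\pi_i^\star(a_i\mid s)\tilde Q(s,a_i)\le\max_{a_i}\tilde Q(s,a_i)$. So $V_i^{\Pi^\star}$ is a fixed point of the Bellman optimality operator of this MDP, and by uniqueness of that fixed point (contraction) it equals the optimal value. In particular $V_i^{\beta_i,\Pi^\star_{-i}}(s)\le V_i^{\Pi^\star}(s)$ for every deterministic $\beta_i$ and every $s$. Equivalently, iterate $V^{\beta_i,\Pi^\star_{-i}}=T^{\beta_i}V^{\beta_i,\Pi^\star_{-i}}$ using $T^{\beta_i}V_i^{\Pi^\star}\le V_i^{\Pi^\star}$ and monotonicity of $T^{\beta_i}$. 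Therefore $\GapMPE(\Pi^\star)=0$.

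The main obstacle is modest. It is making sure the positive part and the division by $T$ rather than $N_s$ do not lose information. This is exactly where the coverage condition $\liminf N_s(T)/T>0$ enters, converting a vanishing $1/T$-normalized sum into a vanishing per-visit average. Continuity of $Q^\Pi$ in $\Pi$ is routine.
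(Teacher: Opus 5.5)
Your proof is correct, and it reaches the result by a more direct route than the paper's. The paper's proof is a few lines long and invokes Proposition~\ref{prop:empirical-markov-bcce}:
\begin{itemize}
\item Any weak accumulation point of $\widehat\mu_T=T^{-1}\sum_{t\le T}\delta_{(s_t,\Gamma_t)}$ is an MBCCE.
\item Because $\Gamma_t\to\Pi^\star$, its policy marginal is $\delta_{\Pi^\star}$, and the coverage condition gives its state marginal full support.
\item The MBCCE obedience inequalities therefore reduce to $\mathcal A_i(\Pi^\star;s,b_i)\le 0$ for all $i,s,b_i$.
\item The one-deviation principle, taken as standard (cf.\ Remark~\ref{rem:AMCR-rationality}), then gives $\GapMPE(\Pi^\star)=0$.
\end{itemize}

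You unpack the same chain directly. Your Ces\`aro average along the visits to $s$, combined with $N_s(T)/T$ being bounded below, replaces the compactness and weak-limit step. The continuity ingredient the paper uses is exactly Lemma~\ref{lem:AMCR-advantage-lipschitz}, which you could cite instead of re-deriving continuity through the resolvent. If you keep your own derivation, note a small phrasing slip: it is $(I-\gamma P^\Pi)^{-1}$ itself whose norm is at most $H$, and continuity then follows from the resolvent identity. You also give an explicit proof of the one-deviation principle, via the Bellman optimality operator of player $i$'s induced MDP or, equivalently, monotone iteration of $T^{\beta_i}$.

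The paper's route is shorter and presents the result as a corollary of the MBCCE framework: an MBCCE concentrated on one profile, with full-support states, is an MPE. Yours is self-contained, avoids weak convergence, and shows exactly where the coverage hypothesis enters.
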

\end{remark}

We next show in Theorem~\ref{thm:AMCR-regret-tracking} that two properties
naturally desired of a learning algorithm imply AMCR: the actor should have
small statewise regret relative to the scores it uses, and those scores should
evaluate the true current joint profile accurately. We first define the
statewise regret induced by a given sequence of scores.
Suppose an algorithm maintains value tables $(Q_t,V_t)$. At a visit $t$ to
state $s_t$, define the target-law gain vector
\begin{equation}
\label{eq:target-law-gain}
g_{i,t}^{\mathrm{tar}}(a_i)
:=\sum_{a_{-i}}\Pi_{-i,t}(a_{-i}\mid s_t)
Q_{i,t}(s_t,(a_i,a_{-i}))
\end{equation}
and the target-law regret over the first $n$ visits to state $s$,
\begin{equation}
\label{eq:target-law-regret}
R_{i,s}^{\mathrm{tar}}(n)
:=\left(
\max_{a_i'\in A_i}\sum_{q=1}^n
\left[
 g_{i,t_q(s)}^{\mathrm{tar}}(a_i')
 -\left\langle
 \pi_{i,t_q(s)}(\cdot\mid s),
 g_{i,t_q(s)}^{\mathrm{tar}}
 \right\rangle
\right]
\right)_+.
\end{equation}
Set $R_{i,s}^{\mathrm{tar}}(0):=0$. This is the standard external-regret
criterion applied statewise: it compares the cumulative score obtained by the
actor at visits to state $s$ with the cumulative score of the best fixed action
in hindsight.

\begin{theorem}[Regret--joint-policy-evaluation principle]
\label{thm:AMCR-regret-tracking}
Let
$e_t:=\|(Q_t,V_t)-(Q^{\Pi_t},V^{\Pi_t})\|_\infty$.
For every sample path and every $T$,
\begin{equation}
\label{eq:AMCR-regret-tracking}
\mathcal G_T^{\mathrm{AMCR}}(\Pi_{1:T})
\le
\frac1T\sum_{s,i}R_{i,s}^{\mathrm{tar}}(N_s(T))
+\frac{2N}{T}\sum_{t=1}^T e_t.
\end{equation}
\end{theorem}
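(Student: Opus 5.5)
The plan is a pathwise perturbation argument, carried out separately for each player--state pair $(i,s)$. Fix a visit $t$ to $s$ and an action $a_i'$. The true deviation advantage can be written in the same form as the actor's score. Indeed, by the Bellman system \eqref{eq:canonical-bellman-system}, $V_i^{\Pi_t}(s)=\sum_a \Pi_t(a\mid s)Q_i^{\Pi_t}(s,a)$. Because $\Pi_t$ is a product profile, this equals $\langle \pi_{i,t}(\cdot\mid s), g_{i,t}^{\star}\rangle$, where
\[
g_{i,t}^{\star}(a_i):=\sum_{a_{-i}}\Pi_{-i,t}(a_{-i}\mid s)\,Q_i^{\Pi_t}(s,(a_i,a_{-i})).
\]
Hence
\[
\mathcal A_i(\Pi_t;s,a_i')=g_{i,t}^{\star}(a_i')-\langle \pi_{i,t}(\cdot\mid s),g_{i,t}^{\star}\rangle .
\]
This is exactly the summand of $R^{\mathrm{tar}}_{i,s}$, except that the true $Q^{\Pi_t}$ appears in place of the estimated table $Q_t$.

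Next I compare the two gain vectors. The vector $g^{\mathrm{tar}}_{i,t}$ averages $Q_{i,t}(s,\cdot)$ against the probability measure $\Pi_{-i,t}(\cdot\mid s)$, and $g^\star_{i,t}$ averages $Q^{\Pi_t}_i(s,\cdot)$ against the same measure. Therefore $|g^{\mathrm{tar}}_{i,t}(a_i)-g^\star_{i,t}(a_i)|\le e_t$ for every $a_i$. The inner product with the probability vector $\pi_{i,t}(\cdot\mid s)$ then differs by at most $e_t$ as well, so for every $a_i'$,
\[
\mathcal A_i(\Pi_t;s,a_i')\le g^{\mathrm{tar}}_{i,t}(a_i')-\langle\pi_{i,t}(\cdot\mid s),g^{\mathrm{tar}}_{i,t}\rangle+2e_t .
\]
An alternative route uses $V_t$ directly, but the product-form route only needs the $Q$-part of $e_t$. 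Either way the error constant is $2$ per $(i,t)$.

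Now sum this inequality over $t\in T_s(T)$ for a fixed $a_i'$, take the maximum over $a_i'$, and then the positive part. Since $(x+c)_+\le (x)_++c$ for $c\ge0$,
\[
\Bigl(\max_{a_i'}\sum_{t\in T_s}\mathcal A_i(\Pi_t;s,a_i')\Bigr)_+\le R^{\mathrm{tar}}_{i,s}(N_s(T))+2\sum_{t\in T_s}e_t .
\]
Finally, sum over $s$ and $i$. The sets $T_s(T)$ partition $\{1,\dots,T\}$, so $\sum_s\sum_{t\in T_s}e_t=\sum_{t=1}^T e_t$. The sum over the $N$ players then produces the factor $2N$, and dividing by $T$ gives \eqref{eq:AMCR-regret-tracking}.

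The main point needing care is the identity $V_i^{\Pi_t}(s)=\langle\pi_{i,t}(\cdot\mid s),g^\star_{i,t}\rangle$, which relies on $\Pi_t$ being a product profile. It must hold exactly, with no error term, because the regret $R^{\mathrm{tar}}$ is defined through the actor's mixed action $\pi_{i,t}$ rather than through $V_t$. The other subtlety is that the positive part and the maximum must be taken after summing over visits, matching the order in both definitions. Everything else is elementary.
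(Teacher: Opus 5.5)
Your proposal is correct and follows the paper's proof essentially step for step. Like the paper, you write the true advantage and the target-law score through the tying equation and the product structure of $\Pi_t$, bound their pointwise difference by $2e_t$ using only the $Q$-part of the error, and then sum over visits, maximize, apply $(x+y)_+\le x_++y$, and use the partition of $\{1,\ldots,T\}$ by the sets $T_s(T)$.
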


Theorem~\ref{thm:AMCR-regret-tracking} separates the two sources of error. The
first term in \eqref{eq:AMCR-regret-tracking} measures how well the actor
responds to its estimated statewise scores. The second measures the error in
estimating the canonical Bellman pair of the current joint profile. Because the profile $\Pi_t$ may
change with $t$, $e_t$ is an online joint-policy-evaluation error: at each date,
the maintained tables by the learning algorithm are compared with the Bellman values of the
contemporaneous profile. Consequently, sublinear aggregate regret
and vanishing average joint-policy-evaluation error imply AMCR. 

We note that
from the argument in the proof of
Theorem~\ref{thm:AMCR-regret-tracking} we have 
\[
\left|
\mathcal G_T^{\mathrm{AMCR}}(\Pi_{1:T})
-
\frac{1}{T}\sum_{s\in\cS}\sum_{i=1}^N
R_{i,s}^{\mathrm{tar}}(N_s(T))
\right|
\le
\frac{2N}{T}\sum_{t=1}^T e_t.
\]
Consequently, if the average joint-policy-evaluation error vanishes, then
$
\mathcal G_T^{\mathrm{AMCR}}(\Pi_{1:T})\to 0
$ if and only if we have sub-linear regret, i.e., $
\sum_{s\in\cS}\sum_{i=1}^N
R_{i,s}^{\mathrm{tar}}(N_s(T))=o(T).
$ 
In particular, with exact policy evaluation, namely $e_t=0$ for every $t$, the AMCR gap equals
 regret at every horizon. This leads to a natural question of how 
exactly AMCR extends the no-regret interpretation of equilibrium in
normal-form games to stochastic games. We discuss this connection in detail in the next section. 

\section{Markov Bayes coarse correlated equilibrium}
\label{sec:markov-bcce}

We next relate AMCR to a static equilibrium notion that plays for Markov
perfect equilibrium a role analogous to that of coarse correlated equilibrium
for Nash equilibrium. The relevant static equilibrium notion preserves 
the timing of a stochastic game: the current state is publicly observed before
actions are chosen. Thus, if one considers a static game preceded by a public
state, the natural coarse equilibrium notion is a Bayes coarse correlated
equilibrium (BCCE), in which deviations may depend on the observed state but
must be chosen before observing the assigned action; see
\cite{HartlineSyrgkanisTardos2015}.

Let $\mu$ be a Borel probability measure on $\cS\times\mathcal C$. Ex ante,
its marginal on $\mathcal C$ is the common prior over stationary
product-policy profiles. Nature draws $(s,\Pi)$ according to $\mu$. Players
observe the public state $s$ but not $\Pi$, so whenever
$\mu(\{s\}\times\mathcal C)>0$ their common posterior over the continuation
policy is $\mu(d\Pi\mid s)$.
After observing the state, a mediator who also observes $\Pi$ draws
$a\sim\Pi(\cdot\mid s)$. Before observing this assigned action, player $i$
may follow the mediator's recommendation or replace it by a fixed action $b_i\in A_i$.
Continuation play from the next state onward remains governed by $\Pi$.
Thus $Q_i^\Pi(s,a)$ is the payoff from following the  recommendation and
$Q_i^\Pi(s,(b_i,a_{-i}))$ is the payoff from the deviation. 

\begin{definition}[Markov Bayes coarse correlated equilibrium]
\label{def:markov-bcce}
A probability measure $\mu\in\Delta(\cS\times\mathcal C)$ is a
\emph{Markov Bayes coarse correlated equilibrium} (MBCCE) if, for every
player $i$, every state $s$ with
$\mu(\{s\}\times\mathcal C)>0$, and every $b_i\in A_i$,
\begin{equation}
\label{eq:markov-bcce-obedience}
\E_{\substack{\Pi\sim\mu(\cdot\mid s)\\
a\sim\Pi(\cdot\mid s)}}
\left[Q_i^\Pi(s,a)\right]
\ge
\E_{\substack{\Pi\sim\mu(\cdot\mid s)\\
a\sim\Pi(\cdot\mid s)}}
\left[Q_i^\Pi(s,(b_i,a_{-i}))\right].
\end{equation}
\end{definition}

The timing in Definition~\ref{def:markov-bcce} is the usual coarse-obedience
timing adapted to the public state of a stochastic game. A player may condition
her deviation on the observed state, and hence on the posterior that the state
induces over continuation policies, but not on the action drawn by the
mediator.

The next result gives a stochastic game analogue of the familiar connection between
no regret and coarse correlated equilibrium in normal form games. Vanishing AMCR implies that every accumulation point of the states-policies empirical distribution is an MBCCE. 

\begin{proposition}[AMCR and empirical MBCCE]
\label{prop:empirical-markov-bcce}
For a path $\Gamma_{1:T}$, let
$\widehat\mu_T:=T^{-1}\sum_{t=1}^T\delta_{(s_t,\Gamma_t)}$.
Then
\begin{align}
&\sum_{\substack{s\in\cS\\N_s(T)>0}}
\frac{N_s(T)}{T}
\sum_{i=1}^N
\left(
\max_{b_i\in A_i}
\E_{\substack{\Pi\sim\widehat\mu_T(\cdot\mid s)\\
a\sim\Pi(\cdot\mid s)}}
\left[
Q_i^\Pi(s,(b_i,a_{-i}))-Q_i^\Pi(s,a)
\right]
\right)_+
\nonumber
=
\mathcal G_T^{\mathrm{AMCR}}(\Gamma_{1:T}).
\label{eq:AMCR-empirical-mbcce}
\end{align}
Consequently, if  $\mathcal G_T^{\mathrm{AMCR}}(\Gamma_{1:T})\to0$, every weak accumulation
point of $(\widehat\mu_T)$ is an MBCCE.
\end{proposition}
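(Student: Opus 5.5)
The plan is to prove the identity by direct computation and then pass to the limit through a continuity argument.

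\textbf{The identity.} Fix $s$ with $N_s(T)>0$. Under $\widehat\mu_T$, the conditional law $\widehat\mu_T(\cdot\mid s)$ is the uniform average of the Dirac masses $\delta_{\Gamma_t}$ over $t\in T_s(T)$. Hence for each $i$ and $b_i$,
\[
\E_{\substack{\Pi\sim\widehat\mu_T(\cdot\mid s)\\ a\sim\Pi(\cdot\mid s)}}
\left[Q_i^\Pi(s,(b_i,a_{-i}))-Q_i^\Pi(s,a)\right]
=\frac{1}{N_s(T)}\sum_{t\in T_s(T)}
\E_{a\sim\Gamma_t(\cdot\mid s)}\left[Q_i^{\Gamma_t}(s,(b_i,a_{-i}))-Q_i^{\Gamma_t}(s,a)\right].
\]
I will simplify each summand as follows.

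1. Because $\Gamma_t(\cdot\mid s)$ is a product law, the first term becomes $\sum_{a_{-i}}\Gamma_{-i,t}(a_{-i}\mid s)Q_i^{\Gamma_t}(s,(b_i,a_{-i}))$.
2. By the second line of the canonical Bellman system~\eqref{eq:canonical-bellman-system}, which is well defined by Lemma~\ref{lem:Bellman-tying-unique}, the second term equals $V_i^{\Gamma_t}(s)$.
3. So each summand is exactly $\mathcal A_i(\Gamma_t;s,b_i)$ from~\eqref{eq:true_adv}.

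Multiplying by $N_s(T)/T>0$ commutes with both $\max_{b_i}$ and $(\cdot)_+$. Summing over $i$ and over the visited states then gives~\eqref{eq:maf_cce_gap}. Unvisited states contribute zero to $\mathcal G_T^{\mathrm{AMCR}}$, since $T_s(T)=\emptyset$ there and the term is $(0)_+=0$. This proves the equality.

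\textbf{Accumulation points.} Let $\widehat\mu_{T_k}\to\mu$ weakly, and fix $s$ with $\mu(\{s\}\times\mathcal C)>0$, a player $i$, and $b_i\in A_i$. Define
\[
f(s',\Pi):=\one\{s'=s\}\sum_{a}\Pi(a\mid s)\left[Q_i^\Pi(s,(b_i,a_{-i}))-Q_i^\Pi(s,a)\right].
\]
The key step is that $f$ is bounded and continuous on $\cS\times\mathcal C$.

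1. Boundedness holds because $Q^\Pi\in[0,H]$ by Lemma~\ref{lem:Bellman-tying-unique}.
2. The indicator is continuous because $\cS$ is finite and discrete.
3. For continuity of $\Pi\mapsto Q^\Pi$, write $V_i^\Pi=(I-\gamma P^\Pi)^{-1}u_i^\Pi$. Here $P^\Pi(s'\mid s)=\sum_a\Pi(a\mid s)P(s'\mid s,a)$ and $u_i^\Pi(s)=\sum_a\Pi(a\mid s)u_i(s,a)$, both polynomial in $\Pi$. The matrix $I-\gamma P^\Pi$ is invertible with $\|(I-\gamma P^\Pi)^{-1}\|_\infty\le H$, so matrix inversion is continuous along this family. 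Then $Q_i^\Pi(s,a)=u_i(s,a)+\gamma\sum_{s'}P(s'\mid s,a)V_i^\Pi(s')$ inherits continuity.

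By the identity, $\int f\,d\widehat\mu_{T_k}=\frac{N_s(T_k)}{T_k}\E[\cdots]$. This single nonnegative-part term is at most $\mathcal G_{T_k}^{\mathrm{AMCR}}$, so $\int f\,d\widehat\mu_{T_k}\le\mathcal G_{T_k}^{\mathrm{AMCR}}\to0$. Weak convergence then gives $\int f\,d\mu\le0$.

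Since $\{s\}\times\mathcal C$ is clopen, $\int f\,d\mu=\mu(\{s\}\times\mathcal C)\,\E_{\Pi\sim\mu(\cdot\mid s),\,a\sim\Pi(\cdot\mid s)}[\cdots]$. Dividing by $\mu(\{s\}\times\mathcal C)>0$ yields~\eqref{eq:markov-bcce-obedience}.

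I expect the main obstacle to be continuity of $\Pi\mapsto Q^\Pi$ on the compact set $\mathcal C$, handled via the uniform bound on the resolvent. Everything else is bookkeeping.
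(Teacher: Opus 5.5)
Your proposal is correct and follows essentially the same route as the paper. You obtain the identity by writing $\widehat\mu_T(\cdot\mid s)$ as the uniform average of $\delta_{\Gamma_t}$ over visits and using the tying equation to recognize $\mathcal A_i(\Gamma_t;s,b_i)$; you get the limit statement by bounding each statewise sum by $\mathcal G_{T_k}^{\mathrm{AMCR}}$ and passing to the weak limit of a bounded continuous integrand. The only difference is that you prove continuity of $\Pi\mapsto Q^\Pi$ directly from the resolvent formula, whereas the paper cites its Lipschitz bound (Lemma~\ref{lem:AMCR-advantage-lipschitz}); either suffices.
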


MBCCE also relates to MPE in a similar way to the relation between CCE and Nash
equilibrium. If there is no uncertainty over the policy profile, so that the
policy marginal of $\mu$ is concentrated on some $\Pi$, and every state has
positive probability, then the posterior after every state is concentrated
on the same $\Pi$. Condition~\eqref{eq:markov-bcce-obedience} then reduces to
$\sum_{a_{-i}}\Pi_{-i}(a_{-i}\mid s)
Q_i^\Pi(s,(b_i,a_{-i}))\le V_i^\Pi(s)$ for every $i,s,b_i$.
These are exactly the statewise one-deviation conditions for a stationary
MPE (see Remark \ref{rem:AMCR-rationality}). Thus,  an  MBCCE concentrated on a single
stationary profile is an MPE in the stochastic game, paralleling the familiar
normal-form result that a CCE whose action distribution is a product
distribution is a Nash equilibrium.

More generally, we can think of MBCCE as another way to generalize CCE to stochastic games, as explained in the following remark.
\begin{remark}[Extensions of CCE to stochastic games]
\label{rem:two-CCE-extensions}
There are two equivalent representations of a CCE in a finite normal-form
game. First, a CCE is a joint-action distribution
$\lambda\in\Delta(\cA)$ satisfying the usual coarse-obedience inequalities.
Alternatively, let
$\nu\in\Delta\left(\prod_i\Delta(A_i)\right)$ be a distribution over
mixed-strategy profiles $x=(x_i)_{i=1}^N$, draw $x\sim\nu$, and then draw
$a\sim\bigotimes_i x_i$. The induced joint-action distribution is
$
\lambda_\nu(a)
:=
\E_{x\sim\nu}\left[\prod_{i=1}^N x_i(a_i)\right].
$
Then, for every player $i$ and fixed deviation $b_i\in A_i$,
\[
\E_{\substack{x\sim\nu\\a\sim\bigotimes_i x_i}}
\left[
u_i(a)-u_i(b_i,a_{-i})
\right]
=
\E_{a\sim\lambda_\nu}
\left[
u_i(a)-u_i(b_i,a_{-i})
\right].
\]
Hence, coarse obedience under $\nu$ is equivalent to the CCE inequalities
for $\lambda_\nu$. Conversely, every joint-action distribution $\lambda$
admits such a representation by assigning probability $\lambda(a)$ to the
pure mixed-strategy profile $(e_{a_i})_{i=1}^N$ for each $a\in\cA$.

These two equivalent representations lead to distinct extensions in a
stochastic game. Stationary MCCE extends the joint-action representation:
it specifies a stationary statewise joint-action law
$\sigma=(\sigma^s)_{s\in\cS}$ and evaluates continuation under $\sigma$.
MBCCE instead extends the distribution-over-mixed-strategy-profiles
representation: conditional on the observed state $s$, it assigns a
distribution $\mu(d\Pi\mid s)$ over stationary product-policy profiles.
In a normal-form game, passing from $\nu$ to its induced joint-action
distribution $\lambda_\nu$ does not change any coarse-deviation payoff.
In a stochastic game, however, the analogous statewise averaging generally
changes the continuation values used to evaluate deviations. Thus, the two
representations of CCE that are equivalent in a normal-form game give rise
to distinct solution concepts in a stochastic game: stationary MCCE and
MBCCE. Section~\ref{sec:mcce} makes the connection between MCCE and MBCCE precise. 
\end{remark}

\section{Actor--critic algorithm and AMCR guarantees}
\label{sec:algorithm}

Theorem~\ref{thm:AMCR-regret-tracking} reduces the target-path AMCR gap to
two terms: statewise target-law regret and online joint-policy-evaluation
error. We now develop an online actor--critic algorithm that controls both
terms up to a fixed-exploration error.

For each state $s$, let $\mathsf U_i(\cdot\mid s)$ be the uniform distribution
on $A_i$, so $\mathsf U_i(a_i\mid s)=1/|A_i|$, and let $\mathsf U$ denote
the corresponding product profile. Fix $\eps\in(0,1)$ and define the
behavior profile by
\begin{equation}
\label{eq:fixed-mixing}
p_{i,t}(\cdot\mid s)
=(1-\eps)\pi_{i,t}(\cdot\mid s)
+\eps\mathsf U_i(\cdot\mid s),
\qquad i=1,\ldots,N,\ s\in\cS.
\end{equation}
The target profile $\Pi_t$ is updated by the actors whereas
$p_t$ is the profile actually played.

The fixed mixing in \eqref{eq:fixed-mixing} is used to guarantee state-space coverage
by the algorithm.  In particular,
it provides full support over joint actions whenever a state is
visited and combined with
the communication condition imposed below, it yields the state--action
coverage required by the asynchronous critic that is typically needed for stochastic approximation analysis.  Importantly, the analysis derives
coverage directly from the mild communication assumption given below and the algorithm's
exploration and not from exogenous state--action visitation frequencies
assumption. The price of fixed exploration is that Algorithm~\ref{alg:bao-ac} achieves an
$O(\eps)$-approximate AMCR guarantee rather than exact AMCR. Alternatively,
one could set $p_t=\Pi_t$ and impose directly the state--action visitation
frequencies required by the critic. This approach would yield exact AMCR
guarantees and simplify the asynchronous stochastic-approximation analysis,
but at the cost of an exogenous coverage assumption. 
We now define the primitive communication property needed to prove our results.\footnote{Note that a stronger condition is
$P(s'\mid s,a)\ge\underline P>0$ for all $s,s',a$. This condition can be enforced by a small full-support perturbation of the transition kernel, at the cost of an additional approximation error relative to the original game. We expect the AMCR guarantee to transfer to the original game with an additional approximation error that vanishes with the size of the perturbation so in this sense we believe that we can remove the communication assumption. We do not pursue this extension here. } 

\begin{assumption}
\label{ass:communication}
The finite-state chain with transition matrix under uniform joint actions $P^{\mathrm U}$ given by 
\begin{equation}
\label{eq:uniform-action-transition}
P^{\mathrm U}(s'\mid s)
:=\frac1{|\cA|}\sum_{a\in\cA}P(s'\mid s,a).
\end{equation}
is irreducible.
\end{assumption}

 In the actor--critic algorithm, each player maintains $Q$ and $V$ critic
tables. Under the model of Section~\ref{sec:model}, player $i$
observes $a_{-i,t}$ but not the opponents' policies. After this observation,
she forms the vector  $Q_{i,t}(s_t,(\cdot,a_{-i,t}))$.
Thus, one realization of the opponents' action supplies a score for every
action of player $i$ through her current $Q$ table. The $Q$ table update is a standard tabular one-step temporal-difference
update for the realized joint action. The $V$ update averages the current $Q$ values under player $i$'s target policy, while the opponents' realized action provides the corresponding averaging over their behavior without requiring their mixed policies to be observed.

The actor uses the same gain vector as feedback for a statewise no-regret
algorithm. We allow for general no-regret actors that belong to the family we now define. 

\begin{definition}[Admissible statewise actor]
\label{def:admissible-actor}
Fix a finite action set $A$ and a positive nonincreasing sequence
$(\beta_n)_{n\ge1}$. On round $n$, a statewise actor chooses
$\pi_n\in\Delta(A)$ before observing $g_n\in[0,H]^A$, using past gains and
any predictable internal state. It is \emph{admissible} if there are finite
nonnegative constants $D_{\mathrm A}$, $C_{\mathrm A}$, and
$L_{\mathrm A}$ such that, for every gain sequence and every $n\ge1$,
\begin{equation}
\label{eq:actor-regret-interface}
\max_{a\in A}\sum_{q=1}^n
\left[g_q(a)-\langle\pi_q,g_q\rangle\right]
\le
\frac{D_{\mathrm A}}{\beta_n}
+C_{\mathrm A}\sum_{q=1}^n\beta_q,
\end{equation}
and
\begin{equation}
\label{eq:actor-movement-interface}
\|\pi_{n+1}-\pi_n\|_1
\le L_{\mathrm A}\beta_n.
\end{equation}
\end{definition}

Condition~\eqref{eq:actor-regret-interface} is a pathwise external-regret
bound. Condition~\eqref{eq:actor-movement-interface} limits how rapidly the policy can change and is typically satisfied by standard no regret algorithms such as  follow the regularized leader (FTRL) and online mirror ascent as we now show. 

 Fix a norm $\|\cdot\|$ on $\R^A$ and let $\|\cdot\|_*$ denote its dual.
For the cumulative gain $S_{n-1}:=\sum_{q<n}g_q$, FTRL with
regularizer $h$ chooses
\[
\pi_n
=
\arg\max_{\pi\in\Delta(A)}
\left\{
\beta_n\langle S_{n-1},\pi\rangle-h(\pi)
\right\}.
\]

We next define online mirror ascent. Suppose that $h$ is differentiable, and
let
$
D_h(\pi,z)
:=
h(\pi)-h(z)-\langle\nabla h(z),\pi-z\rangle
$ 
be its Bregman divergence. Starting from $\pi_1\in\Delta(A)$, ordinary online
mirror ascent updates the policy after observing $g_n$ according to
\[
\pi_{n+1}
\in
\arg\max_{\pi\in\Delta(A)}
\left\{
\beta_n\langle\pi,g_n\rangle-D_h(\pi,\pi_n)
\right\}.
\]

\begin{proposition}[Standard admissible actors]
\label{prop:standard-admissible-actors}
Let $\beta_n=\beta_0n^{-c}$ with $c\in(0,1)$. FTRL is admissible
when $h$ is lower semicontinuous, strongly convex with respect to some norm,
and has finite oscillation
$
\sup_{\pi\in\Delta(A)}h(\pi)
-\inf_{\pi\in\Delta(A)}h(\pi)<\infty.
$ 
Online mirror ascent is admissible when $h$ is differentiable, strongly
convex, and
$
\max_{a\in A}\sup_{\pi\in\Delta(A)}D_h(e_a,\pi)<\infty,
$ 
where $e_a$ is the unit mass on $a$. The corresponding optimistic variants are also admissible when their predictable gain forecasts are uniformly bounded in the associated dual norm.
\end{proposition}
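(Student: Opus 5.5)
We verify the two interface conditions of Definition~\ref{def:admissible-actor} separately for each actor class, with $\beta_n=\beta_0 n^{-c}$ nonincreasing. Gains lie in $[0,H]^A$, so $\|g_n\|_*\le G:=H\max_{a}\|e_a\|_*$ for any fixed norm, since the dual norm is bounded on a bounded set in finite dimensions. Rescale $h$ so that it is $1$-strongly convex with respect to $\|\cdot\|$. Let $\kappa$ be a constant with $\|x\|_1\le\kappa\|x\|$. This lets us convert movement in $\|\cdot\|$ into movement in $\|\cdot\|_1$ at the end.

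\textbf{FTRL.} For the regret bound we use the standard FTRL analysis with nonincreasing learning rates. Write $\pi_n=\arg\max\{\langle S_{n-1},\pi\rangle-h(\pi)/\beta_n\}$, which is equivalent to the stated form after dividing by $\beta_n$. The usual be-the-leader argument then gives
\[
\max_a\sum_{q\le n}[g_q(a)-\langle\pi_q,g_q\rangle]
\le \frac{\operatorname{osc}(h)}{\beta_n}+\sum_{q\le n}\langle\pi_{q+1}-\pi_q,g_q\rangle+\text{(terms from changing }\beta).
\]
Because $\beta_q$ is nonincreasing and $h$ has finite oscillation, the $\beta$-change terms telescope into $\operatorname{osc}(h)/\beta_n$.

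For stability, the objective $\langle S,\pi\rangle-h(\pi)/\beta$ is $(1/\beta)$-strongly concave. Its maximizer is therefore $\beta$-Lipschitz in $S$ with respect to the dual norm. Hence the shift $S_{n-1}\to S_n$ moves the maximizer by at most $\beta_n G$. The change $\beta_n\to\beta_{n+1}$ must be bounded separately. Comparing the optimality conditions gives a displacement of order $\sqrt{(1/\beta_{n+1}-1/\beta_n)\operatorname{osc}(h)\,\beta_{n+1}}$.

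\textbf{Main obstacle: the learning-rate change.} Since $1/\beta_{n+1}-1/\beta_n=O(n^{c-1})$, the crude bound above is $O(n^{(c-1)/2}n^{-c/2})=O(n^{-1/2})$. This is not $O(\beta_n)=O(n^{-c})$ when $c>1/2$. So I would sharpen it with a first-order argument. Strong concavity plus the optimality inequality at both points gives
\[
\|\pi'-\pi\|^2\lesssim \beta_{n+1}\Bigl(\tfrac1{\beta_{n+1}}-\tfrac1{\beta_n}\Bigr)\,|\langle\nabla h\text{ or subgradient},\pi'-\pi\rangle|.
\]
If a bounded subgradient is unavailable, I would instead work with $\beta_n S_{n-1}$ as the argument. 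Then $\pi_n=\nabla h^*(\beta_nS_{n-1})$ and $\nabla h^*$ is $1$-Lipschitz. The change is bounded by $\|\beta_{n+1}S_n-\beta_nS_{n-1}\|_*\le\beta_{n+1}G+(\beta_n-\beta_{n+1})\|S_{n-1}\|_*$. Using $\|S_{n-1}\|_*\le nG$ and $\beta_n-\beta_{n+1}=O(\beta_0 n^{-c-1})$, the second term is $O(n^{-c})=O(\beta_n)$. This gives \eqref{eq:actor-movement-interface} cleanly, so I would adopt this route.

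Plugging stability into the regret display yields $\sum_q\langle\pi_{q+1}-\pi_q,g_q\rangle\le C\sum_q\beta_q$ with $C=O(G^2)$. This establishes \eqref{eq:actor-regret-interface}.

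\textbf{Online mirror ascent.} The standard one-step inequality is
\[
\beta_n\langle e_a-\pi_n,g_n\rangle\le D_h(e_a,\pi_n)-D_h(e_a,\pi_{n+1})+\tfrac{\beta_n^2}{2}\|g_n\|_*^2 .
\]
Divide by $\beta_n$ and sum by Abel summation. The bound $D_h(e_a,\cdot)\le D_{\max}<\infty$ together with $\beta$ nonincreasing gives at most $D_{\max}/\beta_n$ from the telescoping, plus $\tfrac{G^2}{2}\sum\beta_q$. For movement, the prox step with $1$-strongly convex $h$ gives $\|\pi_{n+1}-\pi_n\|\le\beta_n\|g_n\|_*$ directly, from the optimality conditions of the Bregman projection.

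\textbf{Optimistic variants.} These replace $g_n$ by $g_n-m_n$ in the stability terms and add the forecast $m_{n+1}$ to the argument. With $\|m_n\|_*\le M$, the same bounds go through with $G$ replaced by $G+2M$ in the constants. This follows from the identical Lipschitz argument applied to $\beta_n(S_{n-1}+m_n)$ for FTRL, and from the two-step prox analysis for mirror ascent.
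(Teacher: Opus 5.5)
Your proposal is correct. For the movement bounds and for mirror ascent it follows the paper's argument. The paper also writes $\pi_n=Q_h(\beta_nS_{n-1})$ with a $1/\mu$-Lipschitz choice map and bounds $\|\beta_{n+1}S_n-\beta_nS_{n-1}\|_*\le(1+c)G\beta_n$ via $n(\beta_n-\beta_{n+1})\le c\beta_n$. For (optimistic) mirror ascent it uses the Rakhlin--Sridharan one-step inequality, Abel summation against $\max_a\sup_\pi D_h(e_a,\pi)$, and strong convexity of the prox steps, as you do.

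The one real difference is the FTRL regret bound. The paper cites McMahan's adaptive-FTRL theorem, with incremental regularizers $(\beta_{q+1}^{-1}-\beta_q^{-1})h\ge0$ after shifting $\inf h$ to $0$. That theorem gives the term $\frac1{2\mu}\sum_q\beta_q\|g_q\|_*^2$ directly and never uses the learning-rate-change part of the stability estimate. You instead use a self-contained potential (be-the-leader) argument. With $h\ge0$ it gives $\operatorname{osc}(h)/\beta_{n+1}\le2^c\operatorname{osc}(h)/\beta_n$ plus $\sum_q\langle g_q,\pi_{q+1}-\pi_q\rangle$. You then reuse the full movement bound, including the $(\beta_q-\beta_{q+1})\|S_{q-1}\|_*$ term, to bound this by $(1+c)G^2\sum_q\beta_q$. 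This costs only a constant, which is irrelevant for admissibility, and one one-step estimate then serves both interface conditions.

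Two small points should be tightened:
\begin{enumerate}
\item ``Rescale $h$'' should be done by rescaling the norm, or by jointly rescaling $h$ and $\beta_0$. Replacing $h$ by $h/\mu$ alone changes the FTRL iterates.
\item For optimistic FTRL, make the regret step explicit as the paper does. With $\bar\pi_n:=Q_h(\beta_nS_{n-1})$, the Lipschitz choice map gives $\|\pi_n-\bar\pi_n\|\le\beta_nM/\mu$. Hence the regret of $(\pi_n)$ exceeds that of $(\bar\pi_n)$ by at most $(GM/\mu)\sum_q\beta_q$. Your phrase ``replace $g_n$ by $g_n-m_n$ in the stability terms'' is not by itself a derivation.
\end{enumerate}
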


The proposition covers several familiar actors. For
$h(\pi)=\sum_a\pi(a)\log\pi(a)$, FTRL is Hedge. For
$h(\pi)=\|\pi\|_2^2/2$, FTRL is Euclidean dual averaging and
mirror ascent is projected gradient ascent. 
In addition, we show that the corresponding optimistic variants, including optimistic Hedge and the Euclidean optimistic mirror update, are also admissible when based on uniformly bounded predictable gain forecasts; their  updates and verification are given in the proof of Proposition~\ref{prop:standard-admissible-actors}.

\begin{remark}[Heterogeneous actor choices]
\label{rem:heterogeneous-actors}
Each player--state pair $(i,s)$ may use a different admissible actor, with
its own regularizer, forecast, and admissibility constants. The analysis
depends only on the resulting finite collection of constants in
\eqref{eq:actor-regret-interface}--\eqref{eq:actor-movement-interface}.
Adaptive or time-varying regularized learners are therefore also covered
whenever they satisfy these two inequalities.
\end{remark}

For each $(i,s)$, local round $n$ of actor $\mathfrak A_{i,s}$ occurs at date
$t_n(s)$. The policy $\pi_{i,t_n(s)}(\cdot\mid s)$ is the actor's $n$th
decision, and the gain supplied on that round is
$Q_{i,t_n(s)}(s,(\cdot,a_{-i,t_n(s)}))$.  After receiving this gain, the actor produces its
$(n+1)$st decision. 

The algorithm uses the local clocks defined in Section~\ref{sec:model}. At the $n$th
visit to state $s$, the coordinate $V_i(s)$ uses the state clock $n$,
whereas $Q_i(s,a)$ uses the number of observations of the particular
state--action pair $(s,a)$. Thus, unvisited coordinates do not change and
different coordinates evolve on endogenous clocks.  On their respective local clocks, the critic steps are of order
$n^{-b}$, while \eqref{eq:actor-movement-interface} makes the
target-policy movement of order $n^{-c}$. Since $c>b$, the actor movement
is asymptotically negligible on the critic timescale. Together with the communication condition and the coverage generated by \eqref{eq:fixed-mixing}, this separation allows us to show that the critic tracks its moving fixed point even when the target policy does not converge and satisfies only the mild conditions in Definition~\ref{def:admissible-actor}.

\begin{algorithm}[t]
\caption{Asynchronous no-regret actor--critic with fixed exploration}
\label{alg:bao-ac}
\begin{algorithmic}[1]
\State Fix $b\in(\tfrac12,1)$, $c\in(b,1)$,
$\alpha_0,\eta_0\in(0,1]$, $\beta_0>0$, and $\eps\in(0,1)$. For every
$s\in\cS$, $a\in\cA$, and $n\ge1$, set
\[
\eta_{s,n}=\eta_0n^{-b},\qquad
\alpha_{s,a,n}=\alpha_0n^{-b},\qquad
\beta_{s,n}=\beta_0n^{-c}.
\]
\State For every $i$, $s$, and $a$, initialize
$Q_{i,1}(s,a)\in[0,H]$. For every $i$ and $s$, initialize
$V_{i,1}(s)\in[0,H]$, choose an admissible actor
$\mathfrak A_{i,s}$ with schedule $(\beta_{s,n})_{n\ge1}$, and set
$\pi_{i,1}(\cdot\mid s)$ equal to the decision it produces before its first
local gain.
\For{$t=1,2,\ldots$}
  \State Observe $s=s_t$ and set $n=n_t(s)$.
  \State For every player $i$ and state $r$, set
  \[
  p_{i,t}(\cdot\mid r)
  =(1-\eps)\pi_{i,t}(\cdot\mid r)
  +\eps\mathsf U_i(\cdot\mid r).
  \]
  Independently across players, draw
  $a_{i,t}\sim p_{i,t}(\cdot\mid s)$.
  \State Each player $i$ observes $a_t$, $u_i(s,a_t)$, and $s_{t+1}$, and
  forms
  $Q_{i,t}(s,(\cdot,a_{-i,t}))$. 
  \State Let $k=n_t(s,a_t)$. For every player $i$, simultaneously update
  \[
  \begin{aligned}
  Q_{i,t+1}(s,a_t)
  &=(1-\alpha_{s,a_t,k})Q_{i,t}(s,a_t)
  +\alpha_{s,a_t,k}
  \bigl(u_i(s,a_t)+\gamma V_{i,t}(s_{t+1})\bigr),\\
  V_{i,t+1}(s)
  &=(1-\eta_{s,n})V_{i,t}(s)
  +\eta_{s,n}
  \left\langle
\pi_{i,t}(\cdot\mid s),
Q_{i,t}(s,(\cdot,a_{-i,t}))
\right\rangle.
  \end{aligned}
  \]
  All other critic coordinates are unchanged.
  \State For every player $i$, supply $Q_{i,t}(s,(\cdot,a_{-i,t}))$ as the $n$th gain to $\mathfrak A_{i,s}$, advance this actor to local round $n+1$,
  and set $\pi_{i,t+1}(\cdot\mid s)$ equal to the resulting decision.
  For every $r\ne s$, set
  $\pi_{i,t+1}(\cdot\mid r)=\pi_{i,t}(\cdot\mid r)$.
\EndFor
\end{algorithmic}
\end{algorithm}

To state the main results of this section, freeze a target profile $\Pi$, let
$X=(Q,V)$ and define
\begin{align}
G_\eps^Q(\Pi,X)(i,s,a)
&:=u_i(s,a)+\gamma\sum_{s'}P(s'\mid s,a)V_i(s'),
\label{eq:critic-drift-q}\\
G_\eps^V(\Pi,X)(i,s)
&:=\E_{a_{-i}\sim p_{-i}^\eps(\Pi)(\cdot\mid s)}
\left[
\left\langle
\pi_i(\cdot\mid s),Q_i(s,(\cdot,a_{-i}))
\right\rangle
\right],
\label{eq:critic-drift-v}
\end{align}
where
$p_i^\eps(\Pi)(\cdot\mid s)
:=(1-\eps)\pi_i(\cdot\mid s)+\eps\mathsf U_i(\cdot\mid s)$ and
$p_{-i}^\eps(\Pi):=\bigotimes_{j\ne i}p_j^\eps(\Pi)$. Write
$G_\eps:=(G_\eps^Q,G_\eps^V)$. These two maps are the conditional means of
the $Q$ and $V$ update targets when $\Pi$ is held fixed.

Let $X_t:=(Q_t,V_t)\in\mathcal X:=[0,H]^m$, where $m$ is the number of critic
coordinates. Proposition~\ref{prop:critic-regularity} shows that
$G_\eps(\Pi,\cdot)$ is a contraction in a weighted sup norm. Denote its unique
fixed point by
$\Lambda_\eps(\Pi)=(Q_\eps^\Pi,V_\eps^\Pi)$. At $\eps=0$, behavior and target
policies coincide, so the fixed-point equations reduce to
\eqref{eq:canonical-bellman-system} and
$\Lambda_0(\Pi)=\Lambda(\Pi):=(Q^\Pi,V^\Pi)$.

We can now state the main learning guarantee.

\begin{theorem}[No-regret actor--critic AMCR guarantee]
\label{thm:algorithm-AMCR}
Under Assumption~\ref{ass:communication}, choose any admissible actor for
each  player–state pair $(i,s)$. There are finite constants
$C_{\mathrm{reg}},\kappa_{\mathrm{AM}}$, independent of $\eps$, such that
Algorithm~\ref{alg:bao-ac} satisfies, almost surely,
\begin{align}
\|X_t-\Lambda_\eps(\Pi_t)\|_\infty&\longrightarrow0,
\label{eq:main-critic-tracking}\\
\limsup_{T\to\infty}\frac1T\sum_{s,i}
R_{i,s}^{\mathrm{tar}}(N_s(T))
&\le C_{\mathrm{reg}}\eps,
\label{eq:main-target-regret}\\
\limsup_{T\to\infty}\mathcal G_T^{\mathrm{AMCR}}(p_{1:T})
&\le\kappa_{\mathrm{AM}}\eps.
\label{eq:algorithm-pathwise-AMCR-main}
\end{align}
\end{theorem}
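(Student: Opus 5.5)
The proof has three steps. First, show the critic tracks its moving fixed point, \eqref{eq:main-critic-tracking}, by an asynchronous two-timescale stochastic-approximation argument. Second, convert tracking plus actor admissibility into the target-law regret bound \eqref{eq:main-target-regret}. Third, pass from the target path $\Pi_{1:T}$ to the behavior path $p_{1:T}$ using Theorem~\ref{thm:AMCR-regret-tracking} and Lipschitz dependence of Bellman values on the profile.

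\emph{Step 1: critic tracking.} This is the main obstacle, and I would proceed in three parts.

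\begin{enumerate}[label=(\roman*)]
\item \emph{Coverage.} Because of the mixing in \eqref{eq:fixed-mixing}, every joint action has conditional probability at least $(\eps/\max_i|A_i|)^N$ at each visit. Combined with Assumption~\ref{ass:communication}, this shows that from any state, every state is reached within $|\cS|$ steps with probability bounded below uniformly in the policy path. A Borel--Cantelli/martingale argument then gives $\liminf_t n_t(s,a)/t>0$ almost surely for every $(s,a)$. So all local clocks grow linearly, and the step sizes $n^{-b}$ measured in global time are comparable to $t^{-b}$.
\item \emph{Drift and noise.} The targets in the $Q$ and $V$ updates have conditional means $G_\eps(\Pi_t,X_t)$ at the corresponding coordinates. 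The differences from these means are bounded martingale increments. Since $b>1/2$, the weighted sums $\sum \alpha_k\xi_k$ converge almost surely.
\item \emph{Moving fixed point.} By Proposition~\ref{prop:critic-regularity}, $G_\eps(\Pi,\cdot)$ is a weighted-sup-norm contraction. I expect $\Lambda_\eps$ to be Lipschitz in $\Pi$: its fixed-point equations are linear in $X$ with coefficients polynomial in $\Pi$, and the contraction modulus is uniform. By \eqref{eq:actor-movement-interface}, the target drifts by $\|\Lambda_\eps(\Pi_{t+1})-\Lambda_\eps(\Pi_t)\|\lesssim n_t(s_t)^{-c}$.
\end{enumerate}

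With these pieces, write a recursion for $D_t:=\|X_t-\Lambda_\eps(\Pi_t)\|_w$. The difficulty is asynchrony: a contraction occurs only after all coordinates have been refreshed. I would handle this with a Tsitsiklis-style box argument. Partition time into epochs during which every coordinate is updated enough that the cumulative step size in each coordinate is of order a fixed constant. Within an epoch the target moves only $O(\text{epoch length}\cdot t^{-c})=o(t^{-b}\cdot\text{length})$ because $c>b$, while the noise contributes $o(1)$. Contraction then shrinks the box by a fixed factor per epoch, which yields $D_t\to0$ almost surely.

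\emph{Step 2: target-law regret.} Let $\tilde g_{i,t}(\cdot)=Q_{i,t}(s_t,(\cdot,a_{-i,t}))$ be the gain actually supplied to the actor. By \eqref{eq:actor-regret-interface} with $\beta_n=\beta_0n^{-c}$, the regret against $\tilde g$ is $O(n^{1-c}+n^c)=o(n)$. Now compare $\tilde g$ with $g^{\mathrm{tar}}$. The conditional mean of $\tilde g$ averages over $p_{-i,t}$ rather than $\Pi_{-i,t}$, which introduces a bias of at most $2H(N-1)\eps$ in sup norm. The remaining difference is a bounded martingale increment, because $\pi_{i,t}$ is predictable. By Azuma and a union bound over actions, the martingale part is $o(N_s)$. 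Summing over $(i,s)$ and dividing by $T$ gives \eqref{eq:main-target-regret} with $C_{\mathrm{reg}}=O(NH\cdot N)$, which is independent of $\eps$.

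\emph{Step 3: AMCR for the behavior path.} Apply Theorem~\ref{thm:AMCR-regret-tracking} to $\Pi_{1:T}$. The error term is $e_t\le\|X_t-\Lambda_\eps(\Pi_t)\|+\|\Lambda_\eps(\Pi_t)-\Lambda(\Pi_t)\|$. The first term vanishes by Step 1. The second is $O(\eps)$ by the same Lipschitz estimate, since $\eps$ enters only through $p^\eps$ and $\|p^\eps(\Pi)-\Pi\|_1\le 2\eps$ per block. This gives $\limsup\mathcal G_T^{\mathrm{AMCR}}(\Pi_{1:T})=O(\eps)$. Finally, $\mathcal A_i(\cdot;s,a_i')$ is Lipschitz in the profile: $Q^\Pi,V^\Pi$ are Lipschitz with constant of order $H^2$ by the resolvent formula, so $|\mathcal A_i(p_t;s,a')-\mathcal A_i(\Pi_t;s,a')|=O(\eps)$ uniformly. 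Summing and using $\sum_sN_s=T$ gives \eqref{eq:algorithm-pathwise-AMCR-main} with $\kappa_{\mathrm{AM}}$ independent of $\eps$.
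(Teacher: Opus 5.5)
\textbf{Gap: the coverage you prove in Step~1(i) does not support the epoch argument.} You conclude only $\liminf_t n_t(s,a)/t>0$. The epoch argument needs more: every critic coordinate must be updated on the order of $t^b$ times in every sufficiently late window of length on the order of $t^b$.

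Linear growth of the clocks does not give this. Visits to a pair $(s,a)$ can come in bursts separated by gaps of length $o(t)$ but $\gg t^b$, while $n_t(s,a)\ge\mu t$ still holds. In that case the epoch needed to give each coordinate a fixed cumulative step $\delta$ may have length of order $t$. Over such an epoch, the only available bound on the movement of $\Lambda_\eps(\Pi_t)$ is of order $\sum_{u=t}^{2t}u^{-c}\sim t^{1-c}$, which diverges, not $t^{-(c-b)}$. Your estimate $O(\text{length}\cdot t^{-c})$ is small only if the length is $O(t^b)$, and that is exactly what remains unproved. The same window control is what supplies the upper bound on the cumulative step per epoch used in the block contraction.

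\textbf{How to fix it.} You already have the needed ingredient. After any history, every $(s,a)$ is hit within $|\cS|$ stages with conditional probability at least $\theta_\eps>0$. Split $[t,t+m_t)$, with $m_t=\lceil t^b\rceil$, into $\lfloor m_t/|\cS|\rfloor$ consecutive subintervals and apply Azuma. This bounds the probability that some pair is visited fewer than $\kappa_\eps m_t$ times by $Ce^{-c_0t^b}$. That bound is summable in $t$, so Borel--Cantelli gives per-window coverage for all large $t$ almost surely. Combined with linear growth of the clocks, this is the paper's Lemma~\ref{lem:uniform-critic-blocks}. With it, your epochs can be the deterministic windows $[t,t+m_t)$: each coordinate's cumulative step lies in $[\delta_\eps,B_\eps]$, and the target moves by at most $D_\eps t^{-(c-b)}$.

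\textbf{What is already fine.} Apart from this, your plan matches the paper's: frozen-policy contraction with a Lipschitz fixed point, block contraction, admissible-actor regret plus a martingale term plus the $2H(N-1)\eps$ bias, then Theorem~\ref{thm:AMCR-regret-tracking} with Proposition~\ref{prop:critic-regularity}(iv) and Lemma~\ref{lem:AMCR-advantage-lipschitz}. Your noise treatment also works for the almost-sure statement. Each coordinate's squared steps sum to at most $\alpha_0^2\sum_n n^{-2b}<\infty$, and the increments are $(\mathcal H_t)$-martingale differences, so the noise series converge. This is simpler than the paper's blockwise maximal inequality, which the paper needs only for Theorem~\ref{thm:finite-time-AMCR}.
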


The first conclusion is our main two-timescale result. To prove this, we show that the moving critic recursion is an
asymptotically vanishing perturbation of an asynchronous contraction toward
$\Lambda_\eps(\Pi_t)$.
The admissible-actor regret bound is then used to give \eqref{eq:main-target-regret};
combining this estimate with critic tracking in
Theorem~\ref{thm:AMCR-regret-tracking} yields the AMCR guarantee. 


\begin{remark}
\label{rem:target-behavior-transfer}
Recall that the behavior path $p_{1:T}$ describes the implemented play, whereas
$\Pi_{1:T}$ is the target-policy path used in the regret--evaluation
argument. We note that Lemma~\ref{lem:AMCR-advantage-lipschitz} implies
that every AMCR bound for either path transfers to the other at an additional
$O(\eps)$ cost. We therefore state the equilibrium guarantees for the
behavior path that is actually played by the algorithms.
\end{remark}

The same discrete-time analysis used to prove Theorem \ref{thm:algorithm-AMCR}  also yields a finite-horizon rate. 

\begin{theorem}[Finite-time AMCR rate]
\label{thm:finite-time-AMCR}
Suppose Assumption~\ref{ass:communication} holds, fix $\eps\in(0,1)$,
and choose any admissible actor for each player–state pair $(i,s)$. For every
$r\in(0,1/4)$, run Algorithm~\ref{alg:bao-ac} with $b=1-2r$ and
$c=1-r$. There is a finite constant $C_{\eps,r}$, independent of $T$, such
that, for every $T\ge2$,
\begin{equation}
\label{eq:finite-time-AMCR-behavior}
\E\!\left[
\mathcal G_T^{\mathrm{AMCR}}(p_{1:T})
\right]
\le
\kappa_{\mathrm{AM}}\eps+C_{\eps,r}T^{-r}.
\end{equation}
\end{theorem}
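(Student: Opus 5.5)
The plan is to combine Theorem~\ref{thm:AMCR-regret-tracking} with quantitative versions of the two ingredients behind Theorem~\ref{thm:algorithm-AMCR}, and then take expectations. By Remark~\ref{rem:target-behavior-transfer} and Lemma~\ref{lem:AMCR-advantage-lipschitz}, $\mathcal G_T^{\mathrm{AMCR}}(p_{1:T})\le\mathcal G_T^{\mathrm{AMCR}}(\Pi_{1:T})+O(\eps)$ pathwise. So it suffices to bound $\E[\mathcal G_T^{\mathrm{AMCR}}(\Pi_{1:T})]$. Theorem~\ref{thm:AMCR-regret-tracking} gives
\[
\E\bigl[\mathcal G_T^{\mathrm{AMCR}}(\Pi_{1:T})\bigr]
\le
\frac1T\E\Bigl[\sum_{s,i}R_{i,s}^{\mathrm{tar}}(N_s(T))\Bigr]
+\frac{2N}{T}\sum_{t=1}^T\E[e_t].
\]
We then split $e_t\le\|X_t-\Lambda_\eps(\Pi_t)\|_\infty+\|\Lambda_\eps(\Pi_t)-\Lambda(\Pi_t)\|_\infty$. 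The second piece is $O(\eps)$ uniformly in $\Pi$, using the contraction and Lipschitz dependence of the fixed point on $\eps$ from Proposition~\ref{prop:critic-regularity}. The $O(\eps)$ contributions are collected into $\kappa_{\mathrm{AM}}\eps$, consistently with the asymptotic constant.

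\textbf{Regret term.} The actor at $(i,s)$ is fed the realized gain $Q_{i,t}(s,(\cdot,a_{-i,t}))$, not the target-law gain. The difference between the two decomposes into two parts. The first is a martingale difference: conditional on $\mathcal H_t$, $a_{-i,t}$ is drawn from $p_{-i,t}$. The second is a bias of order $\eps H$ from using $p_{-i,t}$ instead of $\Pi_{-i,t}$. Admissibility with $\beta_n=\beta_0n^{-c}$ gives pathwise regret $O(n^{c}+n^{1-c})=O(n^{c})$ for $c=1-r\ge 1/2$. The martingale part contributes $O(\sqrt{n})$ in expectation by Azuma or a second-moment bound, with the maximum over finitely many actions. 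Summing over states, Jensen's inequality and $\sum_sN_s=T$ give an expected regret term of $O(T^{c-1})+O(T^{-1/2})+O(\eps)=O(T^{-r})+O(\eps)$.

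\textbf{Tracking term (main obstacle).} We need $\frac1T\sum_t\E\|X_t-\Lambda_\eps(\Pi_t)\|_\infty=O(T^{-r})$. The first step is a coverage estimate. By Assumption~\ref{ass:communication} and the $\eps$-uniform mixing, the behavior chain dominates $\eps^{N}$ times the uniform-action chain. Hence within any window of length $K=O(|\cS|)$ every state--action pair is visited with probability at least some $\delta_\eps>0$, uniformly over policies. A geometric-trials argument then gives local clocks $n_t(s,a)\ge c_\eps t$ except on events of probability exponentially small in $t$, which costs only $O(1/T)$ in the average.

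On the good event, I would run a Lyapunov recursion for $D_t:=\|X_t-\Lambda_\eps(\Pi_t)\|_w$ in the weighted sup norm in which $G_\eps(\Pi,\cdot)$ is a contraction. The drift contracts $D_t$ at rate $\asymp t^{-b}$ in aggregate. The noise is a bounded martingale with step $t^{-b}$. The target moves by $\|\Lambda_\eps(\Pi_{t+1})-\Lambda_\eps(\Pi_t)\|\le L\|\Pi_{t+1}-\Pi_t\|_1=O(t^{-c})$, using admissibility \eqref{eq:actor-movement-interface} and Lipschitz continuity of $\Lambda_\eps$. A standard moving-target bound then gives $\E D_t\lesssim t^{-b/2}+t^{-(c-b)}$, and with $b=1-2r$, $c=1-r$ this is $t^{-(1/2-r)}+t^{-r}\lesssim t^{-r}$ because $r<1/4$.

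The genuinely delicate point is the asynchrony. Coordinates update on different endogenous clocks, so a single sup-norm contraction per step fails. I would handle this by analyzing blocks of length $K$, within which every coordinate updates with positive probability, obtaining an effective contraction per block, and controlling the variation of step sizes within a block by $O(t^{-b-1}K)$. Averaging $t^{-r}$ over $t\le T$ gives $O(T^{-r})$. Combining the three estimates yields \eqref{eq:finite-time-AMCR-behavior}, with all $\eps$-dependence of rates absorbed into $C_{\eps,r}$.
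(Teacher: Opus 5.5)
Your reduction follows the paper: transfer from $p_{1:T}$ to $\Pi_{1:T}$ via Lemma~\ref{lem:AMCR-advantage-lipschitz}, apply Theorem~\ref{thm:AMCR-regret-tracking}, pay $C_\Lambda^\infty\eps$ for $\Lambda_\eps$ versus $\Lambda$, bound actor regret with a local-clock martingale and $\sum_s N_s(T)^c\le|\cS|^{1-c}T^c$, and take $b=1-2r$, $c=1-r$. One small point there: $N_s(T)$ is random, so you need to bound $\E\max_{m\le T}$ of the martingale, not its value at a fixed $n$.

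The gap is in the tracking step, which is the substantive part of the theorem. A Lyapunov recursion on $D_t=\|X_t-\Lambda_\eps(\Pi_t)\|_\omega$ itself cannot give $t^{-b/2}$, because the weighted sup norm is not smooth. A zero-mean innovation of size $t^{-b}$ can raise $D_t$ at first order. For example, if $x=(1,1)$ and $\xi=\pm(\delta,-\delta)$ with equal probability, then $\|x+\xi\|_\infty=1+\delta$ always. The best scalar recursion is then $\E D_{t+1}\le(1-ct^{-b})\E D_t+Ct^{-b}$, which yields only $O(1)$. Your constant-length blocks have the same defect. If each coordinate is updated only with conditional probability at least $\theta_\eps$, what contracts is the conditional-mean block map, not the realized maximum over coordinates. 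The expectation of a maximum does not contract when only some coordinates move. Separately, ``on the good event'' must use a predictable event, or the innovations lose their zero conditional mean.

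There are two ways to repair this.

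\textbf{The paper's route.} Let the block length grow with the critic timescale, $m_t=\lceil t^b\rceil$.
\begin{enumerate}[label=(\roman*)]
\item Apply Azuma to the indicators of hitting each $(s,a)$ within windows of length $|\cS|$, together with the predictable clock event $\mathcal V_t\in\mathcal H_t$. Outside probability $O(e^{-c_0t^b})$, every critic coordinate then receives cumulative step at least $\delta_\eps$ on the block.
\item A deterministic asynchronous-contraction lemma then gives a fixed factor $q_\eps=\sqrt\gamma+(1-\sqrt\gamma)e^{-\delta_\eps}<1$ per block.
\item Noise enters only through the block sum $\max_k\|\sum_{u=t}^{k-1}\mathsf A_u\xi_{u+1}\|_\omega$. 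After splitting each stage into action and transition half-steps, this is a vector martingale with about $t^b$ increments of size $t^{-b}$, so Doob gives expected size $O(t^{-b/2})$.
\item The target moves $O(t^{-(c-b)})$ over the block. Solving $u_{r+1}\le q_\eps u_r+C(t_r^{-(c-b)}+t_r^{-b/2})$ gives the rate.
\end{enumerate}

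\textbf{The smooth-envelope route.} Keep short blocks, but replace $D_t$ by a smooth envelope of $\|\cdot\|_\omega^2$ (a generalized Moreau envelope) applied to the conditional-mean block map. Noise then contributes only $O(t^{-2b})$ per step, and Young's inequality absorbs the moving-target cross term. This gives $\E D_t^2\lesssim t^{-b}+t^{-2(c-b)}$.

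Either mechanism must be supplied explicitly. A moving-target bound for asynchronous sup-norm contractions is not a standard result you can cite.
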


Since $r$ can be chosen arbitrarily close to $1/4$, the largest
limiting exponent permitted in the last theorem has supremum
$1/4$.

\textbf{Challenges and technical overview of the analysis.}
The proof of Theorems \ref{thm:algorithm-AMCR} and \ref{thm:finite-time-AMCR}  combines three features that are usually treated separately in stochastic-approximation analyses if at all. The critic is asynchronous and uses endogenous local clocks: a value coordinate is updated only when its state is visited, and an action-value coordinate only when the corresponding state–action pair is observed. At the same time, its evaluation target moves continuously because the statewise actors keep updating, and the theorems only assume mild conditions on the actors in order to capture a wide range of no-regret algorithms. Finally, the observations are generated by an exploratory behavior policy that creates a systematic bias. We are not aware of a standard two-timescale stochastic-approximation theorem that directly covers this combination while providing both the almost-sure moving-target tracking and the finite-time estimates. 

The analysis proceeds through a direct blockwise tracking argument. We first freeze the target-policy profile. For every fixed profile, the conditional-mean critic map is a contraction and has a unique fixed point. This fixed point is the evaluation that the critic would approach if the actors stopped moving. We then show that it varies Lipschitz continuously with the target profile and differs from the canonical evaluation used in AMCR only by the exploration-induced discrepancy between the behavior and target policies.

The next issue is to ensure that every critic coordinate is updated sufficiently often. This cannot be inferred from action exploration alone because the state process is generated by the game and the evolving policies. The communication assumption and fixed exploration together imply that, after every possible history, each state–action pair has a uniformly positive probability of being observed within a bounded number of stages. A concentration argument converts this hitting property into uniform visitation guarantees over moving time intervals. Consequently, on every sufficiently late block of the appropriate length, each critic coordinate receives a cumulative update bounded away from zero.
The block length is chosen to match the critic timescale. 
Over one block, each critic coordinate receives sufficient cumulative update to ensure a uniform contraction toward the fixed point associated with the target policy at the beginning of the block. 
At the same time, the block remains short enough that the target policy
changes only by a vanishing amount. Thus, over each block, the critic makes
substantial progress toward the current evaluation target while that target
is nearly frozen. 
The accumulated critic noise is also small. Its treatment requires some care: the realized action both determines which action-value coordinate is updated and generates the value-critic innovation, while the subsequent state transition generates the action-value innovation. We separate these two sources by splitting each stage into two half-steps and constructing  a valid martingale decomposition that allows maximal concentration inequalities to control the noise accumulated over the block.

Combining these ingredients gives the central recursion: the tracking error at the end of a block is at most a fixed fraction of the error at its beginning, plus the movement of the target and the accumulated martingale noise. Iterating this recursion proves almost-sure tracking, while retaining the expected sizes of the two perturbations gives the finite-time rate. 

The actor argument is then carried out separately on the visit clock of each state and is more standard. Each observed score is decomposed into its conditional mean and a martingale difference. The pathwise regret guarantee of each admissible actor, together with a standard martingale analysis yields sublinear target-policy regret up to this exploration error. Once critic tracking and target-policy regret have been established, Theorem~\ref{thm:AMCR-regret-tracking} gives the AMCR guarantee.

\section{Policy gradient algorithm and AMCR guarantees}
\label{sec:reinforce-AMCR}

In this section we obtain exact AMCR with an
episodic projected stochastic policy gradient method.

We use the stochastic-game primitives from Section~\ref{sec:model} with an
episodic timing protocol. Fix an episode-start distribution
$\rho_0\in\Delta(\cS)$. At the beginning of each episode, the environment
draws the initial state from $\rho_0$. At every stage of the episode, actions
and payoffs are generated as in Section~\ref{sec:model}. After the payoff is
realized, the episode ends with probability $1-\gamma$, independently of the
past. If it continues, the next state is drawn from
$P(\cdot\mid s,a)$. The behavior profile remains fixed throughout the
episode. After termination, all players update, and the next episode begins
from a new draw of the initial state. Because continuation occurs
independently with probability $\gamma$ after each payoff, for every fixed
behavior profile and initial state, the expected undiscounted episode return
equals the corresponding $\gamma$-discounted value.
At each played stage, player $i$ observes the current state, her own action,
her own payoff, and the next state when the episode continues. Such episodic
policy-update protocols are standard in theoretical multi-agent
policy-gradient analyses; see
\cite{DaskalakisFosterGolowich2020,LeonardosEtAl2022,DingWeiZhangJovanovic2022}.

We note two main differences between Algorithm~\ref{alg:bao-ac} and
Algorithm~\ref{alg:reinforce-AMCR} in terms of feedback and updates. Algorithm~\ref{alg:reinforce-AMCR}
uses only each player's own payoff feedback, whereas
Algorithm~\ref{alg:bao-ac} also observes the realized joint action.
In addition, Algorithm~\ref{alg:reinforce-AMCR} updates policies only at
common episode boundaries, while Algorithm~\ref{alg:bao-ac} learns
continuously along a single trajectory. 

Despite their differences, Algorithm~\ref{alg:bao-ac} and
Algorithm~\ref{alg:reinforce-AMCR} obtain AMCR through the same underlying principle: they combine
evaluation of the current joint policy with statewise no-regret learning.
Algorithm~\ref{alg:bao-ac} performs the evaluation explicitly through its
critic and feeds the resulting scores to a no-regret actor. In contrast,
Algorithm~\ref{alg:reinforce-AMCR} evaluates the current joint policy
implicitly from frozen-policy episode returns and uses these returns to form
policy-gradient estimates; the projected-gradient update then provides the
corresponding statewise regret control.

Algorithm~\ref{alg:reinforce-AMCR} follows the standard REINFORCE approach to
estimating policy gradients and uses the same direct parameterization,
likelihood-ratio estimator, and exploratory policy considered in
\cite{LeonardosEtAl2022}.

\begin{algorithm}[t]
\caption{Projected REINFORCE under own-payoff feedback}
\label{alg:reinforce-AMCR}
\begin{algorithmic}[1]
\State Choose positive nonincreasing sequences
$(\varepsilon_e)_{e\ge1}\subset(0,1)$ and $(\eta_e)_{e\ge1}$.
For every player $i$, initialize
$\pi_{i,1}\in\prod_{s\in\cS}\Delta(A_i)$.
\For{$e=1,2,\ldots$}
    \State Every player $i$ sets
    \begin{equation}
    \label{eq:reinforce-behavior}
    p_{i,e}(\cdot\mid s)
    :=
    (1-\varepsilon_e)\pi_{i,e}(\cdot\mid s)
    +
    \varepsilon_e\mathsf U_i(\cdot\mid s),
    \qquad s\in\cS,
    \end{equation}
    and let $p_e=(p_{i,e})_{i=1}^N$.
    \State Independently of the past, the environment draws
    $s_{e,0}\sim\rho_0$ and, independently, an unobserved terminal index
    $L_e$ satisfying
    \begin{equation}
    \label{eq:reinforce-horizon}
    \bbP(L_e=\ell)=(1-\gamma)\gamma^\ell,
    \qquad \ell=0,1,\ldots.
    \end{equation}
    \For{$k=0,\ldots,L_e$}
        \State Every player independently draws
        $a_{i,e,k}\sim p_{i,e}(\cdot\mid s_{e,k})$.
        \State Each player $i$ observes $s_{e,k}$, $a_{i,e,k}$, and
        $u_i(s_{e,k},a_{e,k})$. If $k<L_e$, the environment draws
        $s_{e,k+1}\sim P(\cdot\mid s_{e,k},a_{e,k})$.
    \EndFor
    \State Every player $i$ forms
    \begin{equation}
    \label{eq:reinforce-estimator}
    \widehat G_{i,e}
    :=
    \left(
    \sum_{k=0}^{L_e}u_i(s_{e,k},a_{e,k})
    \right)
    \left(
    \sum_{k=0}^{L_e}
    \nabla_{\pi_i}
    \log p_{i,e}(a_{i,e,k}\mid s_{e,k})
    \right),
    \end{equation}
    where $\nabla_{\pi_i}$ denotes the ambient Euclidean gradient with
    respect to the coordinates
    $(\pi_i(a_i\mid s))_{s\in\cS,a_i\in A_i}$, evaluated at $\pi_{i,e}$.
    \State Every player updates before the next episode by Euclidean projection:
    \begin{equation}
    \label{eq:reinforce-update}
    \pi_{i,e+1}
    :=
    \proj_{\prod_{s\in\cS}\Delta(A_i)}
    \left(
    \pi_{i,e}+\eta_e\widehat G_{i,e}
    \right).
    \end{equation}
\EndFor
\end{algorithmic}
\end{algorithm}

Index all played stages consecutively across episodes by $t=1,2,\ldots$.
If stage $t$ belongs to episode $e$, set $p_t:=p_e$, where $p_e$ is the
behavior profile held fixed throughout that episode, and let $s_t$ and $a_t$
denote the corresponding state and joint action. Thus
$(p_t,s_t,a_t)_{t\ge1}$ describes the sequence of played stages across
episodes. Throughout this section,
$\mathcal G_T^{\mathrm{AMCR}}$ is evaluated on the first $T$ stages of this
sequence. 
Let $K_E:=\sum_{e=1}^E(L_e+1)$ be the total number of played stages in the
first $E$ episodes where $L_e$
is the terminal stage  of episode $e$ defined in Algorithm \ref{alg:reinforce-AMCR}. 
Set $A_{\max}:=\max_i|A_i|$.

We now show that for general stochastic games,
Algorithm~\ref{alg:reinforce-AMCR} generates AMCR.

\begin{theorem}
\label{thm:reinforce-AMCR}
For any $\rho_0\in\Delta(\cS)$, run
Algorithm~\ref{alg:reinforce-AMCR} with
$\varepsilon_e=(e+1)^{-1/3}$ and
$\eta_e=\sqrt{|\cS|}(e+1)^{-2/3}/(A_{\max}H^2)$.
There is a universal constant $C<\infty$ such that, for every $E\ge1$,
\begin{equation}
\label{eq:reinforce-expected-rate}
\E\!\left[\mathcal G_{K_E}^{\mathrm{AMCR}}\right]
\le
C N H A_{\max}^{3/2}\sqrt{|\cS|}\,E^{-1/3}.
\end{equation}
Moreover, $\mathcal G_T^{\mathrm{AMCR}}\to0$ almost surely.
\end{theorem}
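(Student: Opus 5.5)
Since the behavior profile is frozen within each episode, the AMCR gap over the first $K_E$ stages can be regrouped by episode. For each episode $e$ and each state $s$, let $M_e(s):=\sum_{k=0}^{L_e}\one\{s_{e,k}=s\}$ be the number of visits to $s$ in that episode. Then
\[
\sum_{t\le K_E,\,s_t=s}\mathcal A_i(p_t;s,a_i')=\sum_{e=1}^E M_e(s)\,\mathcal A_i(p_e;s,a_i').
\]
Conditional on the past, $\E[M_e(s)\mid\mathcal F_{e-1}]=H\,d^{p_e}_{\rho_0}(s)$, where $d$ is the normalized discounted occupancy measure. Note that $\E[K_E]=EH$ and $K_E\ge E$.

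By the performance-difference lemma, $H\sum_s d^{p_e}_{\rho_0}(s)\,\mathcal A_i(p_e;s,\cdot)$ is exactly the directional derivative of $V_i^{p}(\rho_0)$ in player $i$'s own coordinates $p_i(\cdot\mid s)$. The deviation toward $e_{a_i'}$ at state $s$ is therefore the linear functional $\langle\nabla_{p_i(\cdot\mid s)}V_i^{p_e}(\rho_0),\,e_{a_i'}-p_{i,e}(\cdot\mid s)\rangle$.

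So after replacing $M_e(s)$ by its conditional mean, which costs a martingale term, the per-state cumulative advantage is a linear online regret for player $i$ on the simplex at state $s$. The gains are $g_{e}=\nabla_{p_i(\cdot\mid s)}V_i$. The comparison with the target $\pi_{i,e}$ rather than $p_{i,e}$ costs $O(\varepsilon_e H)$ per unit of visits, by a Lipschitz estimate as in Lemma~\ref{lem:AMCR-advantage-lipschitz}. Summed over episodes, this gives $O(H\sum_e\varepsilon_e)=O(HE^{2/3})$.

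The projected update \eqref{eq:reinforce-update} is projected online gradient ascent on the product of simplices, and it decouples across states. The gradient with respect to $\pi_i$ equals $(1-\varepsilon_e)$ times the gradient with respect to $p_i$. The estimator \eqref{eq:reinforce-estimator} is unbiased for $\nabla_{\pi_i}V_i^{p_e}(\rho_0)$ up to the factor $H$ from the undiscounted-return identity for geometric horizons. This is the standard REINFORCE calculation of \cite{LeonardosEtAl2022}.

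The standard online-gradient-descent bound with a time-varying step and noisy unbiased gradients gives, for every fixed comparator (in particular $e_{a_i'}$ at every state simultaneously, and hence at each state separately),
\[
\E\Big[\max\text{regret}\Big]\lesssim \frac{|\cS|}{\eta_E}+\sum_e\eta_e\,\E\|\widehat G_{i,e}\|_2^2 .
\]
The max over $a_i'$ inside the expectation needs care. I would handle it with the usual ghost-iterate trick, or by bounding the sum of the noise terms $\langle \widehat G-G,e_{a'}\rangle$ via a martingale maximal inequality over the finitely many $a'$.

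The main obstacle is the second moment of $\widehat G$. Because $\nabla\log p_{i,e}(a\mid s)=e_{(s,a)}/p_{i,e}(a\mid s)$ and $p_{i,e}\ge\varepsilon_e/|A_i|$, we get
\[
\E\|\widehat G\|^2\lesssim H^2\,\E\Big[(L+1)^2\sum_k p(a_k\mid s_k)^{-1}\Big]\lesssim H^4A_{\max}/\varepsilon_e
\]
(one factor of $1/p$ cancels in expectation over the action). With $\eta_e\asymp\sqrt{|\cS|}e^{-2/3}/(A_{\max}H^2)$ and $\varepsilon_e=e^{-1/3}$, this gives:
\begin{itemize}[label={}]
\item the step term $\sum_e\eta_eH^4A_{\max}e^{1/3}\asymp\sqrt{|\cS|}H^2E^{2/3}$;
\item the projection term $|\cS|/\eta_E\asymp\sqrt{|\cS|}A_{\max}H^2E^{2/3}$;
\item the exploration bias $HE^{2/3}$.
\end{itemize}
Taking the positive part and summing over $s$ and $i$ requires Cauchy--Schwarz over states. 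Summing the per-state regrets instead of using the joint-comparator bound costs the $\sqrt{|\cS|}$ and part of the $A_{\max}^{3/2}$, which is where I expect the bookkeeping to be delicate. Separately, the per-episode martingale $M_e(s)-\E[M_e(s)\mid\mathcal F_{e-1}]$ has variance $O(H^2)$ and contributes $O(H\sqrt E)$ per $(i,s)$.

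Dividing by $K_E$ would be handled as follows. Each bound above is a total of order $H^2E^{2/3}$ times constants. Dividing by $K_E$, whose expectation is $EH$, yields $E^{-1/3}$ with one factor of $H$ removed. To avoid a ratio of random quantities, I would use $K_E\ge E$ together with the fact that the numerator bounds already carry the episode-length factor. Alternatively, on the event $K_E\ge EH/2$, whose complement has exponentially small probability, the expected numerator can be divided directly.

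For almost-sure convergence of $\mathcal G_T\to0$, I would strengthen each expected bound to a pathwise one along the subsequence $T=K_E$. This uses Azuma/Freedman bounds on the martingale pieces with summable tail probabilities, plus Borel--Cantelli and the law of large numbers $K_E/E\to H$. Intermediate $T$ are handled by noting that a single episode has length $O(\log E)$ eventually, almost surely, which changes the normalized sum negligibly.
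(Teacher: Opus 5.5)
Your rate argument follows the paper's main line. It uses per-state pathwise projected-gradient regret against $e_{b_i}-\pi_{i,e}(\cdot\mid s)$. The likelihood-ratio identity turns the conditional mean of $\langle e_{b_i}-\pi_{i,e}(\cdot\mid s),\widehat G_{i,e}^{\,s}\rangle$ into expected visits times $(1-\varepsilon_e)$ times the advantage. There is an $\varepsilon_eH$-per-visit exploration correction, and Cauchy--Schwarz over states and deviations controls $\sum_s\max_{b_i}|M_{i,s,b_i}(E)|$. The paper keeps the realized $N_{e,s}$ and folds visit-count noise and estimator noise into one martingale; your split is equivalent.

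The genuine difference is the division by $K_E$. The paper computes exact inverse moments of the negative-binomial $K_E$, decouples $\widehat G_{i,e}$ from $K_E$ by splitting episodes into halves, and uses exchangeability, $\E[Y_e/K_E]=1/E$. Your truncation on $\{K_E\ge EH/2\}$, combined with $\mathcal G^{\mathrm{AMCR}}_{K_E}\le NH$ and nonnegativity of every numerator term, is valid and shorter. You do need the lower tail to be uniform in $\gamma$. It is: for example $\bbP(K_E<EH/2)\le e^{-E/16}$, by the one-sided Chernoff bound for nonnegative summands with $\E[Y_e^2]\le 2H^2$. Your other option, $K_E\ge E$, would lose a full factor of $H$.

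Several steps need repair.

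\textbf{Second moment.} It is of order $A_{\max}^2H^4/\varepsilon_e$, not $A_{\max}H^4/\varepsilon_e$. After one factor of $1/p$ cancels you are left with $\sum_{a_i}1/p_{i,e}(a_i\mid s)$, which is about $|A_i|^2/\varepsilon_e$ when $\pi_{i,e}(\cdot\mid s)$ is nearly deterministic. The theorem survives: the step term becomes $O(NA_{\max}\sqrt{|\cS|}HE^{-1/3})$, and the martingale term, with $|A_i|$ deviations per state, produces the $A_{\max}^{3/2}$. But your displayed bound is false.

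\textbf{Exploration correction.} It should come from the one-line identity \eqref{eq:reinforce-exploration-comparison}, with the same opponent-averaged $q_{i,e}^s$. Lemma~\ref{lem:AMCR-advantage-lipschitz} is the wrong tool: it perturbs every player's policy and brings in $L_{\mathcal A}$ and $N|\cS|$. The total correction is $H\sum_e\varepsilon_eY_e$, of order $H^2E^{2/3}$, not $HE^{2/3}$.

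\textbf{Almost-sure convergence.} This part fails as sketched, for two reasons.
\begin{enumerate}
\item The martingale increments are unbounded, of size up to $(L_e+1)^2A_{\max}/\varepsilon_e$ with geometric $L_e$. Azuma and Freedman therefore do not apply without truncation.
\item Nonnegative non-martingale terms such as $E^{-1}\sum_e\eta_e\|\widehat G_{i,e}\|_2^2$ carry only an $O(E^{-1/3})$ expectation bound. Markov plus Borel--Cantelli at every episode boundary gives nothing summable.
\end{enumerate}
Use the paper's route instead. Since $\E[(\Delta M_{i,s,b_i,e})^2]=O(e^{1/3})$, you get $\sum_e\E[(\Delta M_{i,s,b_i,e})^2]/e^2<\infty$, hence $M_{i,s,b_i}(E)/E\to0$ by $L^2$-martingale convergence and Kronecker. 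For the monotone nonnegative numerators, apply Markov at $E=2^m$, then Borel--Cantelli, and interpolate by monotonicity. Your handling of intermediate horizons via $L_e=O(\log e)$ eventually is fine.
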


Theorem~\ref{thm:reinforce-AMCR} and
Proposition~\ref{prop:empirical-markov-bcce} imply that
Algorithm~\ref{alg:reinforce-AMCR} asymptotically generates MBCCE,
i.e., that every weak accumulation point of the empirical distribution of the
realized state and the behavior profile used at that stage,
$
\frac{1}{T}\sum_{t=1}^T\delta_{(s_t,p_t)}
$
is an MBCCE a.s.

\section {Further properties of MBCCE}
\label{sec:properties}

Beyond the comparison in Example~\ref{ex:stochastic-investment-MBCCE}, we next
discuss two further properties of MBCCE that illustrate its parallels with coarse
correlated equilibrium in finite normal-form games: its linear structure and its
welfare guarantees under smoothness.

 Let
$\Pi^1,\ldots,\Pi^K$ be fixed candidate profiles, obtained, for example, from
a discretization of the policy space or an observed policy path, and fix a
state distribution $d\in\Delta(\cS)$. For each $\Pi^k$, first evaluate the
policy by solving \eqref{eq:canonical-bellman-system}. Once these evaluations
are fixed, the probabilities assigned to the candidate profiles can be chosen
by the following linear program:
\begin{equation}
\label{eq:markov-bcce-finite-lp}
\begin{aligned}
\min_{x,z}\quad
&\sum_{s\in\cS}\sum_{i=1}^N z_{i,s}\\
\text{subject to}\quad
&\sum_{k=1}^K x_{s,k}
\E_{a\sim\Pi^k(\cdot\mid s)}
\left[
Q_i^{\Pi^k}(s,(b_i,a_{-i}))
-
Q_i^{\Pi^k}(s,a)
\right]
\le z_{i,s}
&&\forall i,s,b_i,\\
&\sum_{k=1}^K x_{s,k}=d(s)
&&\forall s,\\
&x_{s,k}\ge0,\qquad z_{i,s}\ge0
&&\forall i,s,k.
\end{aligned}
\end{equation}

Here $x_{s,k}$ is the joint probability assigned to state $s$ and policy
profile $\Pi^k$. Thus, conditional on a state with $d(s)>0$, the ratios
$x_{s,k}/d(s)$ give the posterior probabilities over the candidate
continuation policies. 
The left-hand side of the first constraint is the deviation gain in
\eqref{eq:markov-bcce-obedience} multiplied by the probability of the state,
while $z_{i,s}$ bounds its positive violation for player $i$ at state $s$.
 Hence, the value of \eqref{eq:markov-bcce-finite-lp} is the smallest sum,
over player--state pairs, of the largest state-probability-weighted positive
obedience violation attainable by a distribution supported on
$\{\Pi^1,\ldots,\Pi^K\}$ with state marginal $d$.\footnote{Proposition~\ref{prop:empirical-markov-bcce} is the
empirical counterpart of this linear program. At horizon $T$, let
$\Pi^1,\ldots,\Pi^K$ be the distinct profiles appearing in
$\Gamma_{1:T}$, set $d(s)=N_s(T)/T$, and fix
\[
x_{s,k}
:=
\frac{
\left|\{t\le T:s_t=s,\ \Gamma_t=\Pi^k\}\right|
}{T}.
\]
With these empirical probabilities fixed, minimizing
\eqref{eq:markov-bcce-finite-lp} only over $z$ gives exactly
$\mathcal G_T^{\mathrm{AMCR}}(\Gamma_{1:T})$. Thus AMCR uses the empirical
weights generated by the learning path, whereas the linear program is free
to reweight the same candidate profiles to find the distribution with the
smallest Markov-BCCE obedience violation.}

MBCCE also preserves the welfare implications of smoothness developed in
\cite{Roughgarden2009}: a pointwise deviation inequality transfers to
equilibrium outcomes through coarse obedience. Here the deviation is chosen
after observing the public state, and its payoff is evaluated by the
corresponding $Q$-value. The difference from the static setting is that a
current action can also change the next-state distribution and therefore
continuation payoffs. The following definition separates the usual
smoothness term from a nonnegative residual $\xi$ that bounds this dynamic
effect.

For every state $s$, fix
$a^\star(s)\in\arg\max_{a'\in\cA}\sum_{i=1}^N u_i(s,a')$.

\begin{definition}[Markov smoothness]
\label{def:markov-smoothness}
Fix $\lambda>0$, $\nu\ge0$, and a nonnegative bounded measurable function
$\xi:\cS\times\mathcal C\times\cA\to\R_+$. The stochastic game is
$(\lambda,\nu,\xi)$-Markov smooth if, for every
$s\in\cS$, $\Pi\in\mathcal C$, and $a\in\cA$,
\begin{equation}
\label{eq:markov-q-smoothness}
\begin{aligned}
\sum_{i=1}^N
\left[
Q_i^\Pi(s,(a_i^\star(s),a_{-i}))-Q_i^\Pi(s,a)
\right]
\ge{}&
\lambda\sum_{i=1}^N u_i(s,a^\star(s))\\
&-(1+\nu)\sum_{i=1}^N u_i(s,a)
-\xi(s,\Pi,a).
\end{aligned}
\end{equation}
\end{definition}

When $\gamma=0$ and $\xi=0$, condition
\eqref{eq:markov-q-smoothness} is equivalent to the usual utility-smoothness
condition
\[
\sum_{i=1}^N u_i(s,(a_i^\star(s),a_{-i}))
\ge
\lambda\sum_{i=1}^N u_i(s,a^\star(s))
-\nu\sum_{i=1}^N u_i(s,a).
\]
Thus Proposition~\ref{prop:mbcce-smoothness} below recovers the familiar
welfare factor $\lambda/(1+\nu)$. Although
\eqref{eq:markov-q-smoothness} involves continuation values,
Section~\ref{subsec:transition-aggregative-mcce} shows that it follows
from ordinary statewise smoothness and a bound on the effect of unilateral
actions on transitions.

\begin{proposition}
\label{prop:mbcce-smoothness}
Suppose \eqref{eq:markov-q-smoothness} holds. Then every MBCCE $\mu$
satisfies, for each state in the support of its state marginal,
\begin{equation}
\label{eq:mbcce-smoothness-bound}
\begin{aligned}
\E_{\substack{\Pi\sim\mu(\cdot\mid s) \\
a\sim\Pi(\cdot\mid s)}}
\left[\sum_{i=1}^N u_i(s,a)\right]
\ge
\frac{1}{1+\nu}
\left(
\lambda\sum_{i=1}^N u_i(s,a^\star(s))
-
\E_{\substack{\Pi\sim\mu(\cdot\mid s)\\
a\sim\Pi(\cdot\mid s)}}
[\xi(s,\Pi,a)]
\right).
\end{aligned}
\end{equation}
\end{proposition}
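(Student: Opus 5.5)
The argument follows Roughgarden's smoothness proof, run conditionally on the public state. Fix a state $s$ with $\mu(\{s\}\times\mathcal C)>0$. For each player $i$, use as the coarse deviation in Definition~\ref{def:markov-bcce} the fixed action $b_i=a_i^\star(s)$. This is admissible because $a^\star(s)$ depends only on $s$, so the deviation is chosen after seeing the state but before seeing the recommendation. Obedience~\eqref{eq:markov-bcce-obedience} then gives, for every $i$,
\[
\E_{\substack{\Pi\sim\mu(\cdot\mid s)\\ a\sim\Pi(\cdot\mid s)}}\left[Q_i^\Pi(s,(a_i^\star(s),a_{-i}))-Q_i^\Pi(s,a)\right]\le 0 .
\]
Summing over $i$ shows that the conditional expectation of the left-hand side of \eqref{eq:markov-q-smoothness} is nonpositive.

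Next, apply the Markov smoothness inequality \eqref{eq:markov-q-smoothness} pointwise for each $(\Pi,a)$ in the support, and take the conditional expectation under $\Pi\sim\mu(\cdot\mid s)$, $a\sim\Pi(\cdot\mid s)$. Integrability holds because $Q^\Pi\in[0,H]$ by Lemma~\ref{lem:Bellman-tying-unique}, $u_i\in[0,1]$, and $\xi$ is bounded and measurable. Measurability of $\Pi\mapsto Q^\Pi$ follows from its continuity: $Q^\Pi$ is the unique solution of a linear system whose coefficients are continuous in $\Pi$. Combining with the previous step gives
\[
0\ge \lambda\sum_i u_i(s,a^\star(s))-(1+\nu)\,\E\left[\sum_i u_i(s,a)\right]-\E[\xi(s,\Pi,a)] .
\]
Rearranging and dividing by $1+\nu>0$ yields \eqref{eq:mbcce-smoothness-bound}.

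There is no real obstacle here. The only care needed is to check that the deviation $a_i^\star(s)$ is a legitimate coarse deviation under the MBCCE timing, and that the conditional expectations are well defined on states of positive probability, which is exactly the support restriction in the statement.
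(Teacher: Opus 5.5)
Your proposal is correct and is essentially the paper's proof: apply the obedience inequality \eqref{eq:markov-bcce-obedience} with $b_i=a_i^\star(s)$, sum over players, take the conditional expectation of \eqref{eq:markov-q-smoothness}, and rearrange. Your remarks on admissibility of the deviation and on integrability and measurability are sound additions that the paper leaves implicit.
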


\section{Markov coarse correlated equilibrium}
\label{sec:mcce}

In this section we study the connection between MBCCE and MCCE. 
 MCCE allows a player to replace her entire
stationary policy and evaluates the deviation against one stationary
Markov policy.

For a stationary statewise joint-action law
$\sigma=(\sigma^s)_{s\in\cS}\in\prod_s\Delta(\cA)$, let $V_i^\sigma$ denote
the canonical value defined by \eqref{eq:canonical-bellman-system}. For a
 deterministic Markov policy $\beta_i:\cS\to A_i$, let
$V_i^{\beta_i,\sigma_{-i}}(s)$ be player $i$'s value from state $s$ when she
uses $\beta_i$ and the opponents use the marginals $\sigma_{-i}$. By standard dynamic programming arguments a 
deterministic stationary best response to the opponents exists in our setting.

\begin{definition}[Stationary MCCE]
\label{def:mcce}
For a stationary joint-action law $\sigma$, define
\begin{equation}
\label{eq:normalized-mcce-gap}
\GapMCCE(\sigma)
:=(1-\gamma)
\max_{\substack{i,\,s\in\cS\\ \beta_i:\cS\to A_i}}
\left(V_i^{\beta_i,\sigma_{-i}}(s)-V_i^\sigma(s)\right)_+.
\end{equation}
The law $\sigma$ is a $\delta$-MCCE if $\GapMCCE(\sigma)\le\delta$.
\end{definition}

Definition~\ref{def:mcce} uses the perfect criterion, which tests every
initial state and agrees with Definition~1 in
\cite{DaskalakisGolowichZhang2023}. 
We now give the exact connection of MCCE with MBCCE. Let
$\mu\in\Delta(\cS\times\mathcal C)$ assign positive probability to every
state and define its statewise average action law by
\begin{equation}
\label{eq:mbcce-induced-action-law}
\sigma_\mu^s(a)
:=\E_{\Pi\sim\mu(\cdot\mid s)}[\Pi(a\mid s)],
\qquad s\in\cS,\ a\in\cA.
\end{equation}
Let
$\omega_{i,s_0}^{\beta_i,\sigma_{-i}}(s)$ denote the normalized discounted
probability of visiting state $s$ when play starts from $s_0$, player $i$ uses
$\beta_i$, and the opponents use $\sigma_{-i}$:
\begin{equation}
\label{eq:normalized-deviation-occupancy}
\omega_{i,s_0}^{\beta_i,\sigma_{-i}}(s)
:=(1-\gamma)\sum_{m=0}^{\infty}\gamma^m
\Pr_{\beta_i,\sigma_{-i}}(S_m=s\mid S_0=s_0).
\end{equation}
Thus $\omega_{i,s_0}^{\beta_i,\sigma_{-i}}$ is a probability distribution on
$\cS$ and let the one-step deviation gain under $\sigma_\mu$ be 
\begin{equation}
\label{eq:averaged-one-step-gain}
g_i^\mu(s,b_i)
:=\sum_{a_{-i}}\sigma_{\mu,-i}^s(a_{-i})
Q_i^{\sigma_\mu}(s,(b_i,a_{-i}))-V_i^{\sigma_\mu}(s).
\end{equation}
Note that the following decomposition holds
\begin{align}
g_i^\mu(s,b_i)
=\E_{\substack{\Pi\sim\mu(\cdot\mid s)\\ a\sim\Pi(\cdot\mid s)}}
\Big[&Q_i^\Pi(s,(b_i,a_{-i}))-Q_i^\Pi(s,a)
\notag\\[-0.2em]
&+\gamma\langle
P(\cdot\mid s,(b_i,a_{-i}))-P(\cdot\mid s,a),
V_i^{\sigma_\mu}-V_i^\Pi\rangle\Big].
\label{eq:exact-mbcce-mcce-local}
\end{align}
The $Q$-difference in \eqref{eq:exact-mbcce-mcce-local} is exactly the
deviation gain in \eqref{eq:markov-bcce-obedience}. The second term is the
difference between the continuation benchmarks: MBCCE retains
$V_i^\Pi$, whereas MCCE uses $V_i^{\sigma_\mu}$ after the hidden policies
are averaged state by state.

\begin{lemma}
\label{lem:exact-mbcce-mcce}
For every $\mu$ with full-support state marginal,
\begin{equation}
\label{eq:exact-mbcce-mcce}
\GapMCCE(\sigma_\mu)
=
\max_{\substack{i,\,s_0\in\cS\\ \beta_i:\cS\to A_i}}
\left(
\sum_{s\in\cS}
\omega_{i,s_0}^{\beta_i,\sigma_{\mu,-i}}(s)
g_i^\mu(s,\beta_i(s))
\right)_+.
\end{equation}
\end{lemma}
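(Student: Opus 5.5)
The argument is the performance-difference lemma applied to the deviation $\beta_i$ against $\sigma_\mu$. Throughout, $\sigma$ denotes $\sigma_\mu$ and $\tilde\sigma := (\beta_i,\sigma_{-i})$ is the statewise law in which player $i$ plays $\beta_i(s)$ deterministically at $s$ and the opponents draw from $\sigma_{-i}^s$. Since player $i$'s action is deterministic, the marginal $\sigma_{-i}^s$ is the relevant object, and $\tilde\sigma^s(a)=\one\{a_i=\beta_i(s)\}\sigma_{-i}^s(a_{-i})$ is a legitimate element of $\prod_s\Delta(\cA)$. Lemma~\ref{lem:Bellman-tying-unique} therefore gives a unique canonical pair, and $V_i^{\beta_i,\sigma_{-i}}=V_i^{\tilde\sigma}$. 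Because $Q$ in \eqref{eq:canonical-bellman-system} depends on $\rho$ only through $V$, I will write $\bar Q(s,b_i):=\sum_{a_{-i}}\sigma_{-i}^s(a_{-i})Q_i^{\sigma}(s,(b_i,a_{-i}))$. By \eqref{eq:averaged-one-step-gain}, this gives $g_i^\mu(s,b_i)=\bar Q(s,b_i)-V_i^\sigma(s)$.

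\textbf{Step 1: performance difference.} Let $\Delta:=V_i^{\tilde\sigma}-V_i^{\sigma}$. Write $\tilde P(s'\mid s):=\sum_{a_{-i}}\sigma_{-i}^s(a_{-i})P(s'\mid s,(\beta_i(s),a_{-i}))$ and $\tilde u(s):=\sum_{a_{-i}}\sigma_{-i}^s(a_{-i})\,u_i(s,(\beta_i(s),a_{-i}))$. The canonical system for $\tilde\sigma$ is then $V_i^{\tilde\sigma}=\tilde u+\gamma\tilde PV_i^{\tilde\sigma}$. From the $Q$ equation for $\sigma$ we get $\bar Q(s,\beta_i(s))=\tilde u(s)+\gamma(\tilde PV_i^\sigma)(s)$. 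Subtracting the two,
\[
\Delta(s)=\bigl[\bar Q(s,\beta_i(s))-V_i^\sigma(s)\bigr]+\gamma(\tilde P\Delta)(s)=g_i^\mu(s,\beta_i(s))+\gamma(\tilde P\Delta)(s).
\]
The map $x\mapsto g+\gamma\tilde Px$ is a $\gamma$-contraction in sup norm. Its unique fixed point is therefore the Neumann series $\Delta=\sum_{m\ge0}\gamma^m\tilde P^m g_i^\mu(\cdot,\beta_i(\cdot))$.

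\textbf{Step 2: identify the occupancy.} The matrix $\tilde P$ is exactly the state transition kernel when player $i$ uses $\beta_i$ and the opponents use $\sigma_{-i}$. Hence $\tilde P^m(s_0,s)=\Pr_{\beta_i,\sigma_{-i}}(S_m=s\mid S_0=s_0)$, and by \eqref{eq:normalized-deviation-occupancy},
\[
(1-\gamma)\Delta(s_0)=\sum_s\omega_{i,s_0}^{\beta_i,\sigma_{-i}}(s)\,g_i^\mu(s,\beta_i(s)).
\]

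\textbf{Step 3: take the maximum.} In Definition~\ref{def:mcce}, $(1-\gamma)$ times the positive part commutes with multiplication by the positive constant $(1-\gamma)$. Maximizing the identity of Step~2 over $i$, $s_0$ and deterministic $\beta_i$ then yields \eqref{eq:exact-mbcce-mcce}. Full support of the state marginal is used only so that $\mu(\cdot\mid s)$, and hence $\sigma_\mu^s$, is defined at every state. The decomposition \eqref{eq:exact-mbcce-mcce-local} is not needed for this lemma.

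The only real care point is Step~1. One must check that the deviating value, which is defined probabilistically, coincides with the canonical Bellman value of the law $\tilde\sigma$, and that $V_i^\sigma$ is the value of $\sigma$ (the Bellman pair, not a correlated-play interpretation). Both follow from the uniqueness in Lemma~\ref{lem:Bellman-tying-unique} together with the standard fact that discounted expected payoffs under a stationary Markov law satisfy the same linear system. Everything else is the routine Neumann-series computation.
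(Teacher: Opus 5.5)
Your proposal is correct and follows essentially the same argument as the paper: the paper writes the Bellman equation of the deviation $\beta_i$ against $\sigma_{\mu,-i}$, subtracts $V_i^{\sigma_\mu}$ to get $V_i^{\beta_i,\sigma_{\mu,-i}}-V_i^{\sigma_\mu}=g_i^\mu(\cdot,\beta_i(\cdot))+\gamma P_i^{\beta_i,\sigma_{\mu,-i}}(V_i^{\beta_i,\sigma_{\mu,-i}}-V_i^{\sigma_\mu})$, iterates to obtain the occupancy-weighted identity, and maximizes. The only difference is that the paper's proof also derives the local decomposition \eqref{eq:exact-mbcce-mcce-local} along the way, which, as you note, is not needed for the lemma itself.
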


Lemma~\ref{lem:exact-mbcce-mcce} writes the gain from a stationary policy
deviation as an occupancy-weighted sum of one-stage gains. MBCCE
controls the $Q$-difference in \eqref{eq:exact-mbcce-mcce-local} at every
state. A stationary MCCE deviation selects one action at every state, and
\eqref{eq:exact-mbcce-mcce} weights these gains by the states reached under
that deviation. If $\mu$ is an MBCCE, only the continuation term in
\eqref{eq:exact-mbcce-mcce-local} can increase the MCCE gap.
To isolate this remaining  term, define
\begin{align}
\mathsf C(\mu)
:=\max_{\substack{i,\,s_0\in\cS\\ \beta_i:\cS\to A_i}}
\Bigg(&
\sum_{s\in\cS}
\omega_{i,s_0}^{\beta_i,\sigma_{\mu,-i}}(s)
\E_{\substack{\Pi\sim\mu(\cdot\mid s)\\ a\sim\Pi(\cdot\mid s)}}
\Big[
\notag\\[-0.3em]
&\left\langle
P(\cdot\mid s,(\beta_i(s),a_{-i}))-P(\cdot\mid s,a),
V_i^{\sigma_\mu}-V_i^\Pi
\right\rangle
\Big]
\Bigg)_+.
\label{eq:continuation-discrepancy}
\end{align}
Because every discounted occupancy is a probability distribution,
\eqref{eq:exact-mbcce-mcce-local} and
Lemma~\ref{lem:exact-mbcce-mcce} imply
\begin{align}
\GapMCCE(\sigma_\mu)
\le{}&
\max_{i,s,b_i}
\left(
\E_{\substack{\Pi\sim\mu(\cdot\mid s)\\a\sim\Pi(\cdot\mid s)}}
[Q_i^\Pi(s,(b_i,a_{-i}))-Q_i^\Pi(s,a)]
\right)_+
+\gamma\mathsf C(\mu).
\label{eq:mbcce-mcce-upper-bound}
\end{align}
Thus, a full-support MBCCE satisfies
$\GapMCCE(\sigma_\mu)\le\gamma\mathsf C(\mu)$. Markov
BCCE evaluates a unilateral deviation using the continuation value associated
with the policy $\Pi$, whereas MCCE uses the continuation value associated with
the statewise average law $\sigma_\mu$. Hence, an MBCCE induces an
approximate MCCE whenever replacing $V_i^\Pi$ by $V_i^{\sigma_\mu}$ changes
the payoff from a unilateral deviation by only a small amount. For the empirical
distributions generated by our algorithms, this compares the continuation value
under the policy used at each visit to a state with the continuation value under
the statewise average of the policies used at visits to that state. Therefore,
our algorithms achieve approximate MCCE whenever this continuation-value
replacement has a small effect on unilateral-deviation payoffs. We next make
this connection precise and then provide a class of games where this effect is small.

Take $\Gamma_t=p_t$ in Proposition~\ref{prop:empirical-markov-bcce} and let
$\widehat\mu_T:=T^{-1}\sum_{t=1}^T\delta_{(s_t,p_t)}$. At horizons for which
every state has been visited, define
\begin{align*}
\sigma_T^s(a)
&:=\frac1{N_s(T)}\sum_{t\in T_s(T)}p_t(a\mid s), \quad
\widehat\sigma_T^s(a)
=\frac1{N_s(T)}\sum_{t\in T_s(T)}\one\{a_t=a\}.
\end{align*}
By construction, $\sigma_T=\sigma_{\widehat\mu_T}$. Thus $\sigma_T$ is the
stationary action law induced by the empirical state--policy distribution,
whereas $\widehat\sigma_T$ is constructed from realized actions.

\begin{corollary}
\label{cor:algorithm-empirical-mcce}
Consider either of the following cases.
\begin{enumerate}[label=(\roman*)]
\item Assumption~\ref{ass:communication} holds and
Algorithm~\ref{alg:bao-ac} is run with fixed exploration $\eps$ and any
admissible actor for each player--state pair.
\item Algorithm~\ref{alg:reinforce-AMCR} is run under the schedules in
Theorem~\ref{thm:reinforce-AMCR}, and the episode-start distribution has full
support: $\rho_0(s)>0$ for every $s\in\cS$.
\end{enumerate}
There is a finite constant $\kappa_{\mathrm M}$, independent of $\eps$, such
that, almost surely,
\begin{equation}
\label{eq:algorithm-empirical-mcce}
\limsup_{T\to\infty}\GapMCCE(\widehat\sigma_T)
\le
\begin{cases}
\kappa_{\mathrm M}\eps
+\gamma\limsup_{T\to\infty}\mathsf C(\widehat\mu_T),
&\text{in case (i)},\\
\gamma\limsup_{T\to\infty}\mathsf C(\widehat\mu_T),
&\text{in case (ii)}.
\end{cases}
\end{equation}
\end{corollary}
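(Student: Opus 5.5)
The plan has three steps. First, bound the MCCE gap of $\sigma_T$ by the AMCR gap and the continuation discrepancy. Second, replace $\sigma_T$ by the realized law $\widehat\sigma_T$ using a martingale argument. Third, plug in the algorithmic AMCR guarantees.

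\emph{Step 1 (from AMCR to MCCE for $\sigma_T$).} Work at horizons $T$ large enough that every state has been visited, so $\widehat\mu_T$ has full-support state marginal. In case (i) this holds eventually a.s. by Assumption~\ref{ass:communication} and the fixed exploration (the coverage argument behind Theorem~\ref{thm:algorithm-AMCR}). In case (ii) it holds because $\rho_0$ has full support and episodes are i.i.d. restarts. Since $\sigma_T=\sigma_{\widehat\mu_T}$, inequality \eqref{eq:mbcce-mcce-upper-bound} gives
\[
\GapMCCE(\sigma_T)\le \max_{i,s,b_i}\Big(\E_{\Pi\sim\widehat\mu_T(\cdot\mid s),\,a\sim\Pi(\cdot\mid s)}[Q_i^\Pi(s,(b_i,a_{-i}))-Q_i^\Pi(s,a)]\Big)_+ +\gamma\,\mathsf C(\widehat\mu_T).
\]
By Proposition~\ref{prop:empirical-markov-bcce}, the first term at state $s$, multiplied by $N_s(T)/T$, is bounded by $\mathcal G_T^{\mathrm{AMCR}}(p_{1:T})$. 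We therefore need $\liminf_T N_s(T)/T>0$ a.s. for every $s$.
\begin{itemize}
\item In case (i), I would get this from the same uniform hitting-time and concentration argument used for critic coverage.
\item In case (ii), each episode starts at $s$ with probability $\rho_0(s)$ and has mean length $H$, so the strong law of large numbers gives $N_s(T)/T\ge \rho_0(s)/H$ asymptotically.
\end{itemize}
Dividing by $\min_s N_s(T)/T$ then gives $\GapMCCE(\sigma_T)\le c\,\mathcal G_T^{\mathrm{AMCR}}+\gamma\mathsf C(\widehat\mu_T)$ eventually, with a constant $c$ that is finite a.s. One caveat: for $\kappa_{\mathrm M}$ to be independent of $\eps$, the lower bound on the visit frequencies must not degrade with $\eps$. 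If that cannot be arranged, I would accept an $\eps$-dependent constant, since in case (i) it only multiplies a term of order $\eps$.

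\emph{Step 2 (replacing $\sigma_T$ by $\widehat\sigma_T$).} For each $(s,a)$,
\[
\widehat\sigma_T^s(a)-\sigma_T^s(a)=\frac{1}{N_s(T)}\sum_{t\in T_s(T)}\bigl(\one\{a_t=a\}-p_t(a\mid s)\bigr).
\]
The summand is a bounded martingale difference with respect to $\mathcal H_t$, by \eqref{eq:behavior-sampling}. Since $N_s(T)\to\infty$ linearly, the martingale strong law (or Azuma plus Borel--Cantelli) gives $\widehat\sigma_T-\sigma_T\to0$ a.s. Next I would show that $\GapMCCE$ is continuous on $\prod_s\Delta(\cA)$, indeed Lipschitz: the canonical values $V_i^\sigma$ and $V_i^{\beta_i,\sigma_{-i}}$ solve linear Bellman systems with $\gamma$-contraction, so they are Lipschitz in $\sigma$ with constant $O(H^2)$. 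The $\sigma_{-i}$ marginals are Lipschitz in $\sigma$ as well, and the maximum over finitely many $(i,s,\beta_i)$ preserves this. Hence $|\GapMCCE(\widehat\sigma_T)-\GapMCCE(\sigma_T)|\to0$ a.s.

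\emph{Step 3 (plugging in the algorithmic guarantees).} Take $\limsup_{T\to\infty}$ of the bound from Step 1.
\begin{itemize}
\item In case (i), Theorem~\ref{thm:algorithm-AMCR} gives $\limsup_T\mathcal G_T^{\mathrm{AMCR}}(p_{1:T})\le\kappa_{\mathrm{AM}}\eps$, so $\kappa_{\mathrm M}=c\,\kappa_{\mathrm{AM}}$.
\item In case (ii), Theorem~\ref{thm:reinforce-AMCR} gives $\mathcal G_T^{\mathrm{AMCR}}\to0$ a.s., and only $\gamma\limsup_T\mathsf C(\widehat\mu_T)$ remains.
\end{itemize}

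I expect the main obstacle to be the linear visitation lower bound in case (i), because the state process is driven by nonstationary adaptive policies. I would first try a minorization argument. Fixed exploration and irreducibility of $P^{\mathrm U}$ give, after any history, a probability of at least $(\eps/A_{\max})^{N|\cS|}$-type of hitting $s$ within $|\cS|$ steps. A martingale strong law on these blocks then gives $\liminf_T N_s(T)/T>0$.
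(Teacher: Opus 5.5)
\textbf{Case (ii) and Step 2 are fine.} In case (ii), your argument matches the paper: full-support restarts give $\liminf_T N_s(T)/T\ge\rho_0(s)/H$, and you then multiply the vanishing AMCR gap by $T/N_s(T)$. Your Step 2 also matches: the martingale strong law combined with Lipschitz continuity of $\GapMCCE$ replaces $\sigma_T$ by $\widehat\sigma_T$.

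\textbf{The gap is in case (i).} The statement requires $\kappa_{\mathrm M}$ to be independent of $\eps$, and dividing the aggregate bound $\kappa_{\mathrm{AM}}\eps$ by $\min_s N_s(T)/T$ cannot give this. The only available visitation bound is $n_t(s)\ge\mu_\eps t$ from \eqref{eq:critic-global-frequency}, where $\mu_\eps=\theta_\eps/(8|\cS|)$ is of order $\eps^{N|\cS|}$. Your own minorization estimate gives the same order.

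This degradation is genuine, not an artifact of the proof. Suppose a state is reachable only through a joint action that the target policies never prescribe. Then it is visited at frequency of order $\eps^N$.

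Your route therefore yields a term of order $\eps^{1-N|\cS|}$, which is not $O(\eps)$ and diverges as $\eps\to0$. Your fallback, that an $\eps$-dependent constant ``only multiplies a term of order $\eps$,'' is incorrect. The constant overwhelms the factor $\eps$, and the bound becomes vacuous for small exploration. The root cause is that the aggregate gap weights state $s$ by $N_s(T)/T$. Once you pass through it, you lose all per-state control at rarely visited states, and dividing back inflates the contributions of every other state as well.

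\textbf{The missing idea is to bypass the aggregate gap entirely and work state by state on the local visit clock.} This requires only that every state be visited infinitely often.
\begin{enumerate}
\item For fixed $(i,s)$, sum the pointwise inequality \eqref{eq:appendix-AMCR-pointwise} over the first $n$ visits to $s$.
\item Proposition~\ref{prop:observed-feedback-regret} gives $\limsup_n R_{i,s}^{\mathrm{tar}}(n)/n\le 4H(N-1)\eps$.
\item Combine \eqref{eq:tracking-main} and \eqref{eq:critic-Lambda-comparison} with Ces\`aro's lemma along the visit times $t_q(s)\to\infty$. This bounds the averaged evaluation term by $2C_\Lambda^\infty\eps$ asymptotically.
\item Lemma~\ref{lem:AMCR-advantage-lipschitz}, together with $\|p_t-\Pi_t\|_1\le 2N|\cS|\eps$, transfers the bound from $\Pi_t$ to $p_t$.
\end{enumerate}
The result is that $\bigl(N_s(T)^{-1}\sum_{t\in T_s(T)}\mathcal A_i(p_t;s,b_i)\bigr)_+$ has $\limsup$ at most $[4H(N-1)+2C_\Lambda^\infty+2N|\cS|L_{\mathcal A}]\eps$. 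This holds uniformly over the finitely many triples $(i,s,b_i)$ and does not depend on visitation rates. Inserting it into \eqref{eq:mbcce-mcce-upper-bound} gives the case (i) bound with an $\eps$-independent $\kappa_{\mathrm M}$.
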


\subsection{Applications: transition-aggregative games}
\label{subsec:transition-aggregative-mcce}

Many economic and operations models have transitions that depend on a weighted
aggregate of individual actions. In large games, a single player's action can
then have only a small effect on the next-state distribution, even when stage
payoffs exhibit unrestricted strategic interactions. The previous section shows
that this is precisely the type of setting in which our algorithms can achieve
approximate MCCE. We now formalize this idea through
transition-aggregative stochastic games.

\begin{definition}[Transition-aggregative stochastic game]
\label{def:transition-aggregative-game}
The game is transition-aggregative if there are weights
$w_1,\ldots,w_N\ge0$ with $\sum_jw_j=1$ and the following property. For every
state $s$, there exist $d_s\in\mathbb N$, maps
$\phi_{j,s}:A_j\to\R^{d_s}$, a map $\mathcal P_s$ from feasible aggregate
vectors into $\Delta(\cS)$, and $L_s<\infty$ such that
\begin{align}
P(\cdot\mid s,a)
&=\mathcal P_s\left(\sum_{j=1}^Nw_j\phi_{j,s}(a_j)\right),
&&\forall a\in\cA,
\label{eq:transition-aggregative-kernel}\\
\TV(\mathcal P_s(x),\mathcal P_s(y))
&\le L_s\|x-y\|,
&&\text{for all feasible }x,y.
\label{eq:transition-aggregative-lipschitz}
\end{align}
\end{definition}

For each $i$ and $s$, let
$\Delta_{i,s}:=\max_{a_i,b_i\in A_i}
\|\phi_{i,s}(b_i)-\phi_{i,s}(a_i)\|$.
Then $L_sw_i\Delta_{i,s}$ bounds the change in the next-state law caused by a
unilateral action at $s$. For the sharper bound below, we also use the common
minorization condition
\begin{equation}
\label{eq:transition-common-minorization}
P(\cdot\mid s,a)\ge\eta q(\cdot)
\qquad\forall s,a,
\end{equation}
componentwise, for some $q\in\Delta(\cS)$ and $\eta\in(0,1]$.

\begin{proposition}[Transition influence and approximate MCCE]
\label{prop:transition-aggregative-mcce}
Under either case of Corollary~\ref{cor:algorithm-empirical-mcce}, suppose the
game is transition-aggregative. In case~(i), almost surely,
\[
\limsup_{T\to\infty}\GapMCCE(\widehat\sigma_T)
\le
\kappa_{\mathrm M}\eps
+
\frac{2\gamma}{1-\gamma}
\max_{i,s} L_s w_i\Delta_{i,s}.
\]
In case~(ii), the same bound holds with $\kappa_{\mathrm M}\eps$ deleted.
Under \eqref{eq:transition-common-minorization}, the denominator $1-\gamma$
can be replaced by $1-\gamma(1-\eta)$.
\end{proposition}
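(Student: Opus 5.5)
By Corollary~\ref{cor:algorithm-empirical-mcce}, it suffices to bound $\mathsf C(\mu)$ uniformly over all $\mu\in\Delta(\cS\times\mathcal C)$. In particular, we will show that $\mathsf C(\mu)\le \frac{2}{1-\gamma}\max_{i,s}L_sw_i\Delta_{i,s}$ for every $\mu$, and obtain the sharper version under \eqref{eq:transition-common-minorization}. Applying this to $\mu=\widehat\mu_T$ and taking $\limsup$ in \eqref{eq:algorithm-empirical-mcce} then gives both cases of the proposition at once, since the corollary already isolates the $\kappa_{\mathrm M}\eps$ term in case (i) and has no such term in case (ii).

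\textbf{Bounding the transition difference.} Fix $i,s,\Pi,a$ and $b_i=\beta_i(s)$. The aggregates $\sum_j w_j\phi_{j,s}(a_j)$ and $\sum_j w_j\phi_{j,s}((b_i,a_{-i})_j)$ differ only in the $i$th term, which is $w_i(\phi_{i,s}(b_i)-\phi_{i,s}(a_i))$ and has norm at most $w_i\Delta_{i,s}$. By \eqref{eq:transition-aggregative-lipschitz},
\[
\TV\bigl(P(\cdot\mid s,(b_i,a_{-i})),P(\cdot\mid s,a)\bigr)\le L_sw_i\Delta_{i,s}.
\]
Write $D:=V_i^{\sigma_\mu}-V_i^\Pi$. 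For a signed measure $\nu$ with zero total mass, $|\langle \nu,D\rangle|\le \TV\cdot \operatorname{osc}(D)\le 2\,\TV\,\|D\|_\infty$. Here $\operatorname{osc}(D)\le \max D-\min D\le H$, because both $V_i^{\sigma_\mu}$ and $V_i^\Pi$ lie in $[0,H]$ by Lemma~\ref{lem:Bellman-tying-unique}. So each inner term of \eqref{eq:continuation-discrepancy} is at most $L_sw_i\Delta_{i,s}\cdot H$, which is $\le 2H\,L_sw_i\Delta_{i,s}$.

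\textbf{Aggregating over the occupancy.} Averaging over $\Pi\sim\mu(\cdot\mid s)$ and $a\sim\Pi(\cdot\mid s)$ preserves this bound. Since $\omega_{i,s_0}^{\beta_i,\sigma_{\mu,-i}}$ is a probability distribution, the weighted sum is at most $\max_s 2H L_s w_i\Delta_{i,s}$. This gives $\gamma\,\mathsf C(\mu)\le \frac{2\gamma}{1-\gamma}\max_{i,s}L_sw_i\Delta_{i,s}$. The factor $2$ is slack, since $\operatorname{osc}\le H$ already suffices; I will keep it to match the stated constant.

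\textbf{The minorization refinement.} This is the only delicate step. Under \eqref{eq:transition-common-minorization}, both kernels are at least $\eta q$, so $P(\cdot\mid s,a)=\eta q+(1-\eta)R(\cdot\mid s,a)$ for a residual kernel $R$. The $\eta q$ part cancels in the difference, but that alone does not shrink the bound. What shrinks is the oscillation of values: the continuation values should be replaced by quantities whose oscillation is at most $1/(1-\gamma(1-\eta))$ rather than $H$.

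I would show this by a coupling argument. The values satisfy $V^\rho(s)=\bar u^\rho(s)+\gamma\eta\langle q,V^\rho\rangle+\gamma(1-\eta)\langle R^\rho(\cdot\mid s),V^\rho\rangle$, so $V^\rho-\gamma\eta\langle q,V^\rho\rangle/(1-\gamma)$ shifts by a constant. Constants are annihilated by the zero-mass measure, so oscillation is what matters. Iterating, $\operatorname{osc}(V^\rho)\le 1+\gamma(1-\eta)\operatorname{osc}(V^\rho)$, which gives $\operatorname{osc}(V^\rho)\le 1/(1-\gamma(1-\eta))$.

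For the difference $D$ itself, I will not bound $\operatorname{osc}(D)$ directly. Instead I will bound the two pairings $\langle\nu,V^{\sigma_\mu}\rangle$ and $\langle\nu,V^\Pi\rangle$ separately, each by $\TV\cdot\operatorname{osc}$, which yields the factor $2$. I expect this bookkeeping, confirming that the oscillation contraction holds for each value function and that the zero-mass pairing absorbs the constants, to be the main point needing care. The remaining steps are direct.
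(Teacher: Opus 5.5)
Your proof is the paper's argument: the unilateral change in the aggregate gives $\TV\le L_sw_i\Delta_{i,s}$, the pairing bound $|\langle\alpha-\alpha',v\rangle|\le\TV(\alpha,\alpha')\operatorname{sp}(v)$ together with the occupancy being a probability vector gives $\mathsf C(\widehat\mu_T)\le 2H\max_{i,s}L_sw_i\Delta_{i,s}$, and under \eqref{eq:transition-common-minorization} the span recursion $\operatorname{sp}(V_i^\sigma)\le 1+\gamma(1-\eta)\operatorname{sp}(V_i^\sigma)$, applied to each of the two value functions, yields the sharper constant. One correction: your aside that $\operatorname{osc}(D)\le H$, and hence that the factor $2$ is slack, is false. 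The difference $D=V_i^{\sigma_\mu}-V_i^\Pi$ only lies in $[-H,H]$, and its span can equal $2H$ (e.g.\ $V_i^{\sigma_\mu}=(H,0)$ and $V_i^\Pi=(0,H)$). So the factor $2$ is genuinely needed, and your bound holds only through the other route you wrote, $|\langle\nu,D\rangle|\le 2\,\TV\,\|D\|_\infty\le 2H\,\TV$.
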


Thus, for a sequence of games with $w_i=1/N$, if
$\max_{i,s}L_s\Delta_{i,s}$ is bounded independently of $N$, an
$O(\eps)$-AMCR guarantee yields an $O(\eps+1/N)$-approximate MCCE. In
particular, exact AMCR yields an $O(1/N)$-approximate MCCE. 

We now provide natural examples of transition-aggregative games. The examples below display case~(i). For
Algorithm~\ref{alg:reinforce-AMCR}, delete $\kappa_{\mathrm M}\eps$ from
the displayed bounds.

\begin{example}[Dynamic routing]
Let $E_s$ be the set of network edges available at state $s$. For a route
$a_i\in A_i$, define $\phi_{i,s}(a_i)\in\{0,1\}^{|E_s|}$ by
$[\phi_{i,s}(a_i)]_e=\one\{e\text{ is used by route }a_i\}$. If $w_i$ is
player $i$'s share of total demand, then the $e$th coordinate of
$\sum_iw_i\phi_{i,s}(a_i)$ is the fraction of demand using edge $e$. The
transition law can therefore describe how current traffic affects future
congestion.

Use the $\ell_1$ norm and suppose every feasible route uses at most $\ell$
edges. Changing routes removes at most $\ell$ edges and adds at most $\ell$
edges, so $\Delta_{i,s}\le2\ell$. Proposition~\ref{prop:transition-aggregative-mcce}
gives
\[
\limsup_{T\to\infty}\GapMCCE(\widehat\sigma_T)
\le
\kappa_{\mathrm M}\eps
+
\frac{4\gamma\ell}{1-\gamma}
\max_{i,s}L_sw_i.
\]
Hence, if $\max_i w_i=O(1/N)$ and $\max_sL_s=O(1)$ as $N\to\infty$ we get   $O(\eps + 1/N)$-approximate MCCE.
We note that stage payoffs may include nonlinear congestion costs, heterogeneous tolls,
and other player-specific parameters.
\end{example}

\begin{example}[Dynamic common-pool resource]
Let $\cS=\{\mathrm g,\mathrm d\}$ denote good and degraded resource
conditions. Each $A_i$ is a finite menu of resource-use decisions, and action
$a_i$ induces a depletion intensity $x_i(a_i)\in[0,1]$. Let
$m(a):=N^{-1}\sum_jx_j(a_j)$ and suppose
\[
P(\mathrm d\mid\mathrm g,a)=\alpha_{\mathrm g}+b_{\mathrm g}m(a),
\qquad
P(\mathrm g\mid\mathrm d,a)=\alpha_{\mathrm d}-b_{\mathrm d}m(a),
\]
where $b_{\mathrm g},b_{\mathrm d}\ge0$. Greater aggregate depletion makes
degradation more likely and recovery less likely. Assume every transition
probability belongs to $[\underline p,1-\underline p]$ for every feasible
aggregate, where $\underline p\in(0,1/2]$.

Take $w_i=1/N$, $\phi_{i,s}(a_i)=x_i(a_i)$,
$L=\max\{b_{\mathrm g},b_{\mathrm d}\}$, and $\Delta_{i,s}\le1$. Moreover,
$P(\cdot\mid s,a)\ge2\underline p\,q$ for $q=(1/2,1/2)$, so
\eqref{eq:transition-common-minorization} holds with
$\eta=2\underline p$. Proposition~\ref{prop:transition-aggregative-mcce}
therefore gives
\[
\limsup_{T\to\infty}\GapMCCE(\widehat\sigma_T)
\le
\kappa_{\mathrm M}\eps
+\frac{2\gamma L}{N[1-\gamma(1-2\underline p)]}
\qquad\text{almost surely}.
\]
Thus, we get $O(\eps+1/N)$-approximate MCCE in this example too. Rewards may include nonlinear revenue, state-dependent operating
costs, and asymmetric strategic interactions. The same argument applies to
any finite collection of resource-quality states when
$P(\cdot\mid s,a)=\mathcal P_s(m(a))$ is Lipschitz.
\end{example}

The final result in this section shows that, when each statewise stage game is
smooth, the transition-aggregative condition also implies the Markov smoothness
condition in Proposition~\ref{prop:mbcce-smoothness}, and therefore yields
welfare guarantees for MBCCE.  Routing games of the type considered above are examples in
which the stage game is smooth under  suitable utility
specifications.

\begin{corollary}[Welfare under transition aggregation]
\label{cor:transition-aggregative-mbcce-welfare}
Suppose the game is transition-aggregative and, for some $\lambda>0$ and
$\nu\ge0$, every statewise stage game satisfies
\begin{equation}
\label{eq:stage-utility-smoothness}
\sum_{i=1}^N u_i(s,(a_i^\star(s),a_{-i}))
\ge
\lambda\sum_{i=1}^N u_i(s,a^\star(s))
-\nu\sum_{i=1}^N u_i(s,a)
\qquad\forall s,a.
\end{equation}
Then every MBCCE $\mu$ satisfies, at each state in the support of its
state marginal,
\begin{equation}
\label{eq:transition-aggregative-mbcce-welfare-uniform}
\E_{\substack{\Pi\sim\mu(\cdot\mid s)\\ a\sim\Pi(\cdot\mid s)}}
\left[\sum_{i=1}^N u_i(s,a)\right]
\ge
\frac{\lambda}{1+\nu}\sum_{i=1}^N u_i(s,a^\star(s))
-
\frac{\gamma H L_s}{1+\nu}\sum_{i=1}^N w_i\Delta_{i,s}.
\end{equation}
Under \eqref{eq:transition-common-minorization}, $H$ can be replaced by
$[1-\gamma(1-\eta)]^{-1}$.
\end{corollary}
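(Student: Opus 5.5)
The plan is to invoke Proposition~\ref{prop:mbcce-smoothness} with an explicit residual $\xi$. This requires showing that the stochastic game is $(\lambda,\nu,\xi)$-Markov smooth with
\[
\xi(s,\Pi,a):=\gamma H L_s\sum_{i=1}^N w_i\Delta_{i,s},
\]
a constant in $(\Pi,a)$. Substituting this $\xi$ into \eqref{eq:mbcce-smoothness-bound} gives \eqref{eq:transition-aggregative-mbcce-welfare-uniform} directly, because the expectation of a constant is the constant.

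To verify \eqref{eq:markov-q-smoothness}, fix $s$, $\Pi$ and $a$, and write $a_i^\star:=a_i^\star(s)$. From the Bellman pair \eqref{eq:canonical-bellman-system},
\[
Q_i^\Pi(s,(a_i^\star,a_{-i}))-Q_i^\Pi(s,a)
= u_i(s,(a_i^\star,a_{-i}))-u_i(s,a)
+\gamma\bigl\langle P(\cdot\mid s,(a_i^\star,a_{-i}))-P(\cdot\mid s,a),V_i^\Pi\bigr\rangle .
\]
Summing over $i$ and applying stage smoothness \eqref{eq:stage-utility-smoothness} turns the utility part into at least
\[
\lambda\sum_i u_i(s,a^\star(s))-(1+\nu)\sum_i u_i(s,a).
\]
The continuation term remains. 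Its two arguments are probability vectors and $V_i^\Pi\in[0,H]$ by Lemma~\ref{lem:Bellman-tying-unique}. The inner product is therefore at least $-H\,\TV(\cdot,\cdot)$, using the fact that for probability vectors $\langle\alpha-\beta,f\rangle\ge -(\max f-\min f)\TV(\alpha,\beta)$.

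The two aggregates differ only in player $i$'s term, by $w_i(\phi_{i,s}(a_i^\star)-\phi_{i,s}(a_i))$. So \eqref{eq:transition-aggregative-kernel} and \eqref{eq:transition-aggregative-lipschitz} bound the total variation by $L_s w_i\Delta_{i,s}$. Summing over $i$ gives the residual $\gamma H L_s\sum_i w_i\Delta_{i,s}$. The aggregate $\sum_j w_j\phi_{j,s}(\cdot)$ at $(a_i^\star,a_{-i})$ is feasible because it is generated by an action profile.

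For the minorization refinement, the oscillation $H$ must be replaced by a bound on $\max_{s'}V_i^\Pi(s')-\min_{s'}V_i^\Pi(s')$. I would prove that this span is at most $[1-\gamma(1-\eta)]^{-1}$ by writing
\[
V_i^\Pi(s)=\E_{a\sim\Pi(\cdot\mid s)}\bigl[u_i(s,a)+\gamma\langle P(\cdot\mid s,a),V_i^\Pi\rangle\bigr].
\]
Next decompose each kernel as $P=\eta q+(1-\eta)\tilde P$ with $\tilde P$ a probability kernel. The $\eta q$ part contributes the same amount $\gamma\eta\langle q,V_i^\Pi\rangle$ at every state, so it cancels in differences. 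The span $D$ then satisfies $D\le 1+\gamma(1-\eta)D$, since payoffs lie in $[0,1]$.

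The main obstacle, though a mild one, is this span bound together with using the correct span-versus-total-variation inequality. Everything else is a direct substitution into Proposition~\ref{prop:mbcce-smoothness}.
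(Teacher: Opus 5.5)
Your proposal is correct and follows the paper's proof essentially step by step. The paper likewise decomposes the $Q$-difference via the Bellman equations, applies stage smoothness to the utility part, and bounds the continuation part using $V_i^\Pi\in[0,H]$ together with the total-variation bound $L_s w_i\Delta_{i,s}$ before invoking Proposition~\ref{prop:mbcce-smoothness}; it also obtains the minorization refinement from the same span recursion $D\le 1+\gamma(1-\eta)D$, established in the proof of Proposition~\ref{prop:transition-aggregative-mcce}.
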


\section{Conclusions}

In this paper, we introduce adaptive Markov coarse regret (AMCR), a pathwise regret criterion for sequences of stationary policy profiles, and Markov Bayes coarse correlated equilibrium (MBCCE), a static distributional solution concept over states and stationary policy profiles. We show that vanishing AMCR implies that every accumulation point of the empirical state--policy distribution is an MBCCE, and that achieving AMCR reduces to statewise external regret and accurate evaluation of the current joint policy. We establish these properties for a decentralized asynchronous actor--critic algorithm and an episodic multi-agent projected policy-gradient method. Finally, we relate MBCCE to stationary MCCE and use this connection to obtain approximate MCCE guarantees for classes of stochastic games in which individual actions have limited effects on state transitions.

Several research questions remain open; we discuss two here. First, when a vanishing-AMCR path does converge, its limit is an MPE under state-space coverage, raising the question of when standard multi-agent reinforcement-learning algorithms possess such last-iterate guarantees. Existing results obtain convergence guarantees under additional game structure such as for policy-gradient methods in Markov potential games \cite{LeonardosEtAl2022}, but  last-iterate results remain limited. Second, our analysis is tabular. Extending the AMCR--MBCCE framework to large state spaces, where reinforcement-learning algorithms rely on function approximation for policy evaluation and optimization, is an important direction for bringing these equilibrium guarantees closer to modern reinforcement learning.

\clearpage
\appendix

\section{Example~\ref{ex:stochastic-investment-MBCCE} calculations}

\begin{proof}[Proof of the claims in Example~\ref{ex:stochastic-investment-MBCCE}]
We first verify the normal-form comparison. Let $\lambda_{ab}$ denote the
probability of action profile $(a,b)\in\{0,1\}^2$, where the first coordinate
is player $1$'s action. Following the draw gives each player expected payoff
$\lambda_{01}+\lambda_{10}$. If player $1$ deviates to action $1$, her expected
payoff is $\lambda_{00}+\lambda_{10}$; if she deviates to action $0$, it is
$\lambda_{01}+\lambda_{11}$. Hence her CCE inequalities are
$\lambda_{01}\ge\lambda_{00}$ and $\lambda_{10}\ge\lambda_{11}$. The
corresponding inequalities for player $2$ are
$\lambda_{10}\ge\lambda_{00}$ and $\lambda_{01}\ge\lambda_{11}$. Thus, the
distribution is a CCE if and only if
\[
\lambda_{01}\ge\lambda_{00},
\qquad
\lambda_{10}\ge\lambda_{00},
\qquad
\lambda_{10}\ge\lambda_{11},
\qquad
\lambda_{01}\ge\lambda_{11}.
\]

Let $q_1:=\lambda_{10}+\lambda_{11}$ and
$q_2:=\lambda_{01}+\lambda_{11}$ be the induced probabilities of action $1$.
Using $\sum_{a,b}\lambda_{ab}=1$,
\begin{align*}
2-(2q_1+q_2)
&=2\lambda_{00}+\lambda_{01}-\lambda_{11},&
2-(q_1+2q_2)
&=2\lambda_{00}+\lambda_{10}-\lambda_{11},\\
(2q_1+q_2)-1
&=\lambda_{10}+2\lambda_{11}-\lambda_{00},&
(q_1+2q_2)-1
&=\lambda_{01}+2\lambda_{11}-\lambda_{00}.
\end{align*}
The CCE inequalities therefore imply
\begin{equation}
\label{eq:example-cce-projection}
2q_1+q_2\le2,
\qquad
q_1+2q_2\le2,
\qquad
2q_1+q_2\ge1,
\qquad
q_1+2q_2\ge1.
\end{equation}

Conversely, suppose $(q_1,q_2)\in[0,1]^2$ satisfies
\eqref{eq:example-cce-projection}. If $q_1+q_2\le1$, take
$(\lambda_{00},\lambda_{10},\lambda_{01},\lambda_{11})
=(1-q_1-q_2,q_1,q_2,0)$. The two lower inequalities in
\eqref{eq:example-cce-projection} are exactly
$\lambda_{10}\ge\lambda_{00}$ and $\lambda_{01}\ge\lambda_{00}$; the other two
CCE inequalities are automatic. If $q_1+q_2\ge1$, take
$(\lambda_{00},\lambda_{10},\lambda_{01},\lambda_{11})
=(0,1-q_2,1-q_1,q_1+q_2-1)$. The two upper inequalities in
\eqref{eq:example-cce-projection} are exactly
$\lambda_{10}\ge\lambda_{11}$ and $\lambda_{01}\ge\lambda_{11}$; the other two
CCE inequalities are automatic. Hence \eqref{eq:example-cce-projection}
characterizes the projected CCE set. Its vertices are $(1,0)$,
$(2/3,2/3)$, $(0,1)$, and $(1/3,1/3)$. The Nash marginal profiles are
$(1,0)$, $(0,1)$, and $(1/2,1/2)$ by the best-response threshold $1/2$.

We next consider the stochastic investment game. Fix a stationary product
profile $\Pi$. Its Bellman equations are
\begin{align*}
V_i^\Pi(0)
&=
-\frac12\pi_i(1\mid0)
+\frac45\left[
\frac{\pi_1(1\mid0)+\pi_2(1\mid0)}{3}V_i^\Pi(1)
+\left(
1-\frac{\pi_1(1\mid0)+\pi_2(1\mid0)}{3}
\right)V_i^\Pi(0)
\right],\\
V_i^\Pi(1)
&=
1-\frac12\pi_i(1\mid1)
+\frac45\left[
\frac34V_i^\Pi(1)+\frac14V_i^\Pi(0)
\right].
\end{align*}
Solving these two equations for the value difference gives
\[
V_i^\Pi(1)-V_i^\Pi(0)
=
\frac{
1+\frac12\left[\pi_i(1\mid0)-\pi_i(1\mid1)\right]
}{
\frac25+\frac4{15}
\left[\pi_1(1\mid0)+\pi_2(1\mid0)\right]
}.
\]

For $j\ne i$, changing the current state-$0$ action from $0$ to $1$ lowers
the current payoff by $1/2$ and raises the probability of state $1$ by $1/3$,
independently of player $j$'s current action. Hence, for every
$a_j\in\{0,1\}$,
\begin{equation}
\label{eq:example-investment-action-difference}
Q_i^\Pi(0,(1,a_j))-Q_i^\Pi(0,(0,a_j))
=
\frac{
1-2\pi_j(1\mid0)-2\pi_i(1\mid1)
}{
6+4\pi_1(1\mid0)+4\pi_2(1\mid0)
}.
\end{equation}
At state $1$, changing action $0$ to action $1$ has gain $-1/2$, because the
action does not affect the transition law.

\medskip
\noindent
\emph{MPE.}
The one-deviation principle implies that every MPE satisfies
$\pi_i(1\mid1)=0$ for both players. The denominator in
\eqref{eq:example-investment-action-difference} is positive, so at state $0$
\[
\pi_i(1\mid0)\in
\begin{cases}
\{1\},&\pi_j(1\mid0)<\frac12,\\
\{0\},&\pi_j(1\mid0)>\frac12,\\
[0,1],&\pi_j(1\mid0)=\frac12.
\end{cases}
\]
Applying these conditions to both players yields exactly the state-$0$
marginal profiles $(1,0)$, $(0,1)$, and $(1/2,1/2)$. Conversely, each of these
profiles, together with $\pi_i(1\mid1)=0$ for both players, satisfies the
statewise one-deviation conditions and is therefore an MPE.

\medskip
\noindent
\emph{MCCE.}
Let $\sigma$ be a stationary statewise joint-action law, and let
$\bar\Pi^\sigma$ be the stationary product profile with the same statewise
marginals, so
$\bar\pi_i^\sigma(1\mid s)=\sigma_i^s(1)$. Under $\sigma$, player $i$'s
expected current payoff at state $s$ is
$s-\frac12\sigma_i^s(1)$. Moreover,
$P_\sigma(1\mid0)
=[\sigma_1^0(1)+\sigma_2^0(1)]/3$ and
$P_\sigma(1\mid1)=3/4$. These are exactly the expected current payoffs and
transition probabilities induced by $\bar\Pi^\sigma$. Hence
$V_i^\sigma=V_i^{\bar\Pi^\sigma}$.

Now fix a deterministic stationary deviation
$\beta_i:\cS\to\{0,1\}$ and let $j\ne i$. Against either $\sigma_{-i}$ or
$\bar\Pi_{-i}^\sigma$, the deviating player's current payoff at state $s$ is
$s-\frac12\beta_i(s)$. The induced transition probabilities are also the
same: at state $0$ they equal
$[\beta_i(0)+\sigma_j^0(1)]/3$, and at state $1$ they equal $3/4$. Therefore
$V_i^{\beta_i,\sigma_{-i}}
=V_i^{\beta_i,\bar\Pi_{-i}^\sigma}$ for every deterministic stationary
deviation $\beta_i$.

It follows that, for every player $i$, initial state $s$, and deterministic
stationary deviation $\beta_i$,
\[
V_i^{\beta_i,\sigma_{-i}}(s)-V_i^\sigma(s)
=
V_i^{\beta_i,\bar\Pi_{-i}^\sigma}(s)
-
V_i^{\bar\Pi^\sigma}(s).
\]
Thus the MCCE inequalities for $\sigma$ are exactly the MPE inequalities for
$\bar\Pi^\sigma$. A stationary joint-action law is therefore an MCCE if and
only if the product profile with the same statewise marginals is an MPE.
Consequently, its state-$0$ marginals are $(1,0)$, $(0,1)$, or
$(1/2,1/2)$, and its state-$1$ marginals are $(0,0)$.

\medskip
\noindent
\emph{MBCCE.}
We first record the implication of MBCCE obedience at state $1$. For every
stationary profile $\Pi$ and every current action profile $a$,
$Q_i^\Pi(1,(0,a_{-i}))-Q_i^\Pi(1,a)=a_i/2$. Hence, whenever state $1$ has
positive probability, the MBCCE inequality for the fixed deviation to action
$0$ implies
$\E[a_i\mid s=1]=0$. Since $a_i\ge0$, action $0$ is therefore chosen almost
surely at state $1$. Equivalently,
$\pi_i(1\mid1)=0$ for $\mu(\cdot\mid1)$-almost every $\Pi$. Notice that this
conclusion concerns the distribution of policies conditional on the current
state being $1$; it does not restrict the off-state components of policies
drawn conditional on state $0$.

We now characterize the shaded MBCCE region. For $a,b\in\{0,1\}$, let
$\Pi^{ab}$ prescribe $(a,b)$ at state $0$ and $(0,0)$ at state $1$. Fix any
full-support state marginal and consider MBCCE distributions whose
state-conditional policy distributions are supported on these four profiles.
At state $1$, every supported profile prescribes $(0,0)$, so the obedience
conditions hold for any distribution over these profiles. At state $0$, write
$\mu_{ab}:=\mu(\Pi^{ab}\mid0)$.

Substituting the four profiles into
\eqref{eq:example-investment-action-difference}, the gain from changing the
current action from $0$ to $1$ is
\[
\begin{array}{c|cccc}
&\Pi^{00}&\Pi^{10}&\Pi^{01}&\Pi^{11}\\ \hline
\text{player }1
&\frac16&\frac1{10}&-\frac1{10}&-\frac1{14}\\[0.2em]
\text{player }2
&\frac16&-\frac1{10}&\frac1{10}&-\frac1{14}.
\end{array}
\]
For player $1$, deviation to action $1$ is relevant only under
$\Pi^{00}$ and $\Pi^{01}$, whereas deviation to action $0$ is relevant only
under $\Pi^{10}$ and $\Pi^{11}$. Her two obedience inequalities are therefore
$\frac16\mu_{00}-\frac1{10}\mu_{01}\le0$ and
$-\frac1{10}\mu_{10}+\frac1{14}\mu_{11}\le0$. The corresponding inequalities
for player $2$ are symmetric. Since there are only two actions, these are all
the state-$0$ obedience conditions. Equivalently,
\begin{equation}
\label{eq:example-MBCCE-posterior}
\mu_{01}\ge\frac53\mu_{00},
\qquad
\mu_{10}\ge\frac53\mu_{00},
\qquad
\mu_{10}\ge\frac57\mu_{11},
\qquad
\mu_{01}\ge\frac57\mu_{11},
\end{equation}
with $\mu_{ab}\ge0$ and $\sum_{a,b}\mu_{ab}=1$.

Let $q_1:=\mu_{10}+\mu_{11}$ and
$q_2:=\mu_{01}+\mu_{11}$ denote the induced state-$0$ investment
probabilities. From \eqref{eq:example-MBCCE-posterior},
\begin{align*}
12-12q_1-5q_2
&=12\mu_{00}+7\mu_{01}-5\mu_{11}\ge0,\\
12-5q_1-12q_2
&=12\mu_{00}+7\mu_{10}-5\mu_{11}\ge0,\\
8q_1+5q_2-5
&=3\mu_{10}-5\mu_{00}+8\mu_{11}\ge0,\\
5q_1+8q_2-5
&=3\mu_{01}-5\mu_{00}+8\mu_{11}\ge0.
\end{align*}
Hence every induced marginal pair belongs to $[0,1]^2$ and satisfies
\begin{equation}
\label{eq:example-MBCCE-projection}
12q_1+5q_2\le12,
\qquad
5q_1+12q_2\le12,
\qquad
8q_1+5q_2\ge5,
\qquad
5q_1+8q_2\ge5.
\end{equation}

Conversely, suppose $(q_1,q_2)\in[0,1]^2$ satisfies
\eqref{eq:example-MBCCE-projection}. If $q_1+q_2\le1$, take
$(\mu_{00},\mu_{10},\mu_{01},\mu_{11})
=(1-q_1-q_2,q_1,q_2,0)$. The two lower inequalities in
\eqref{eq:example-MBCCE-projection} are exactly
$\mu_{10}\ge\frac53\mu_{00}$ and
$\mu_{01}\ge\frac53\mu_{00}$, while the other two inequalities in
\eqref{eq:example-MBCCE-posterior} are automatic.

If $q_1+q_2\ge1$, take
$(\mu_{00},\mu_{10},\mu_{01},\mu_{11})
=(0,1-q_2,1-q_1,q_1+q_2-1)$. The two upper inequalities in
\eqref{eq:example-MBCCE-projection} are exactly
$\mu_{01}\ge\frac57\mu_{11}$ and
$\mu_{10}\ge\frac57\mu_{11}$, while the other two inequalities in
\eqref{eq:example-MBCCE-posterior} are automatic.

Therefore, \eqref{eq:example-MBCCE-projection} characterizes the projected
state-$0$ MBCCE marginal set within the four-profile family. Its vertices are
$(1,0)$, $(12/17,12/17)$, $(0,1)$, and $(5/13,5/13)$, which proves the
claimed description of the shaded region.
\end{proof}

\section{Proofs for Section~\ref{sec:AMCR}}

This appendix establishes the two ingredients introduced in Section~\ref{sec:AMCR}. We first solve the frozen-policy Bellman system. We then compare each true deviation gain with the gain computed from the current critic; summing this pointwise comparison proves the regret--tracking principle.

\begin{proof}[Proof of Lemma~\ref{lem:Bellman-tying-unique}]
Fix a player $i$. For a statewise joint-action law $\rho$, define
\[
r_i^\rho(s):=\sum_a\rho^s(a)u_i(s,a),
\qquad
P_\rho(s'\mid s):=\sum_a\rho^s(a)P(s'\mid s,a).
\]
Averaging the first equation in \eqref{eq:canonical-bellman-system} under $\rho^s$ and using the second gives
\begin{equation}
\label{eq:canonical-value-fixed-point}
V_i^\rho=r_i^\rho+\gamma P_\rho V_i^\rho.
\end{equation}
Since $P_\rho$ is stochastic, the right-hand side of \eqref{eq:canonical-value-fixed-point} is a $\gamma$-contraction in the sup norm. Hence
\[
V_i^\rho
=(I-\gamma P_\rho)^{-1}r_i^\rho
=\sum_{k=0}^\infty\gamma^kP_\rho^k r_i^\rho
\]
is the unique solution of \eqref{eq:canonical-value-fixed-point}. The series representation and $0\le r_i^\rho\le1$ imply
$0\le V_i^\rho\le H$ componentwise.

Given $V_i^\rho$, the first equation in \eqref{eq:canonical-bellman-system} uniquely determines $Q_i^\rho$. Averaging that equation under $\rho^s$ recovers \eqref{eq:canonical-value-fixed-point}, so the tying equation also holds. Moreover,
$0\le Q_i^\rho(s,a)\le1+\gamma H=H$. The construction is independent across players, which proves existence, uniqueness, and the componentwise bounds.
\end{proof}

\begin{proof}[Proof of Proposition \ref{prop:AMCR-mpe}]
By Proposition~\ref{prop:empirical-markov-bcce}, every weak accumulation point
of the empirical state--policy distributions is an MBCCE. Since
$\Gamma_t\to\Pi^\star$, its policy marginal is concentrated on $\Pi^\star$,
while the state-coverage condition gives positive probability to every state.
Hence its Markov-BCCE obedience conditions reduce to the statewise
one-deviation conditions for $\Pi^\star$, so $\GapMPE(\Pi^\star)=0$.
\end{proof}

\begin{proof}[Proof of Theorem~\ref{thm:AMCR-regret-tracking}]
Fix a state $s$, player $i$, visit $t\in T_s(T)$, and deviation $a_i'$. The tying equation in \eqref{eq:canonical-bellman-system} gives
$V_i^{\Pi_t}(s)=\E_{a\sim\Pi_t(\cdot\mid s)}Q_i^{\Pi_t}(s,a)$. Therefore,
\begin{align}
&\mathcal A_i(\Pi_t;s,a_i')
-\left[
 g_{i,t}^{\mathrm{tar}}(a_i')
 -\left\langle\pi_{i,t}(\cdot\mid s),g_{i,t}^{\mathrm{tar}}\right\rangle
 \right]
\notag\\
&\quad=
\E_{a_{-i}\sim\Pi_{-i,t}(\cdot\mid s)}
\left[
 Q_i^{\Pi_t}(s,(a_i',a_{-i}))
 -Q_{i,t}(s,(a_i',a_{-i}))
\right]
\notag\\
&\qquad+
\E_{a\sim\Pi_t(\cdot\mid s)}
\left[Q_{i,t}(s,a)-Q_i^{\Pi_t}(s,a)\right].
\label{eq:AMCR-pointwise-identity}
\end{align}
Each expectation on the right-hand side of \eqref{eq:AMCR-pointwise-identity} is at most $e_t$ in absolute value. Thus
\begin{equation}
\label{eq:appendix-AMCR-pointwise}
\mathcal A_i(\Pi_t;s,a_i')
\le
 g_{i,t}^{\mathrm{tar}}(a_i')
 -\left\langle\pi_{i,t}(\cdot\mid s),g_{i,t}^{\mathrm{tar}}\right\rangle
 +2e_t.
\end{equation}
Sum \eqref{eq:appendix-AMCR-pointwise} over $t\in T_s(T)$, maximize over $a_i'$, and use $(x+y)_+\le x_++y$ for $y\ge0$. By \eqref{eq:target-law-regret},
\[
\left(
\max_{a_i'}\sum_{t\in T_s(T)}
\mathcal A_i(\Pi_t;s,a_i')
\right)_+
\le
R_{i,s}^{\mathrm{tar}}(N_s(T))
+2\sum_{t\in T_s(T)}e_t.
\]
Summing over $(i,s)$ and using that the sets $T_s(T)$ partition $\{1,\ldots,T\}$ proves \eqref{eq:AMCR-regret-tracking}.
\end{proof}

\section{Proofs for Section \ref{sec:markov-bcce}}

\begin{proof}[Proof of Proposition \ref{prop:empirical-markov-bcce}]
Fix $s$ with $N_s(T)>0$. Conditional on $s$ under $\widehat\mu_T$, the
posterior over policies is
$\widehat\mu_T(d\Pi\mid s)
=N_s(T)^{-1}\sum_{t\in T_s(T)}\delta_{\Gamma_t}(d\Pi)$.
Hence, for every $i$ and $b_i$,
\begin{align*}
&\frac{N_s(T)}{T}
\E_{\substack{\Pi\sim\widehat\mu_T(\cdot\mid s)\\
a\sim\Pi(\cdot\mid s)}}
\left[
Q_i^\Pi(s,(b_i,a_{-i}))-Q_i^\Pi(s,a)
\right]
\\
&\qquad=
\frac1T\sum_{t\in T_s(T)}
\mathcal A_i(\Gamma_t;s,b_i),
\end{align*}
where
$\E_{a\sim\Gamma_t(\cdot\mid s)}
Q_i^{\Gamma_t}(s,a)=V_i^{\Gamma_t}(s)$.
Maximizing over $b_i$, taking positive parts, and summing over players and
states proves the identity in the proposition.

Now suppose
$\mathcal G_T^{\mathrm{AMCR}}(\Gamma_{1:T})\to0$, and consider any subsequence
such that $\widehat\mu_{T_k}$ converges weakly to $\mu$. For every fixed
$i,s,b_i$,
\[
\frac1{T_k}\sum_{t\in T_s(T_k)}
\mathcal A_i(\Gamma_t;s,b_i)
\le
\mathcal G_{T_k}^{\mathrm{AMCR}}(\Gamma_{1:T_k}),
\]
and therefore its upper limit is at most zero.

By Lemma~\ref{lem:AMCR-advantage-lipschitz},
$\Pi\mapsto\mathcal A_i(\Pi;s,b_i)$ is continuous. Since $\cS$ is finite,
weak convergence implies
\[
\int_{\mathcal C}
\mathcal A_i(\Pi;s,b_i)\,
\mu(\{s\},d\Pi)
\le0.
\]
For any $s$ with $\mu(\{s\}\times\mathcal C)>0$, dividing by the probability
of $s$ gives
$\E_{\Pi\sim\mu(\cdot\mid s)}
[\mathcal A_i(\Pi;s,b_i)]\le0$.
Using the definition of $\mathcal A_i$ and
$V_i^\Pi(s)=\E_{a\sim\Pi(\cdot\mid s)}[Q_i^\Pi(s,a)]$, this is precisely
\eqref{eq:markov-bcce-obedience}. Hence $\mu$ is an MBCCE.
\end{proof}

\begin{proof}[Proof of Proposition \ref{prop:mbcce-smoothness}]
Fix a state $s$ in the support of the state marginal. Applying
\eqref{eq:markov-bcce-obedience} with $b_i=a_i^\star(s)$ gives
\[
\E_{\substack{\Pi\sim\mu(\cdot\mid s)\\
a\sim\Pi(\cdot\mid s)}}
\left[
Q_i^\Pi(s,(a_i^\star(s),a_{-i}))-Q_i^\Pi(s,a)
\right]
\le0
\]
for every player $i$. Summing over players and taking expectations in
\eqref{eq:markov-q-smoothness} therefore yields
\begin{align*}
0\ge{}&
\lambda\sum_{i=1}^N u_i(s,a^\star(s))
-(1+\nu)
\E_{\substack{\Pi\sim\mu(\cdot\mid s)\\
a\sim\Pi(\cdot\mid s)}}
\left[\sum_{i=1}^N u_i(s,a)\right]\\
&-
\E_{\substack{\Pi\sim\mu(\cdot\mid s)\\
a\sim\Pi(\cdot\mid s)}}
[\xi(s,\Pi,a)].
\end{align*}
Rearranging proves \eqref{eq:mbcce-smoothness-bound}.
\end{proof}

\section{Proofs for Section~\ref{sec:algorithm}}
\label{app:algorithm}

\subsection{Critic recursion and frozen-policy regularity}

The stochastic-approximation representation used in the proof is
\begin{equation*}
X_{t+1}
=X_t+\mathsf A_t\bigl(G_\eps(\Pi_t,X_t)-X_t+\xi_{t+1}\bigr).
\end{equation*}
The diagonal matrix $\mathsf A_t$ has entry
$\alpha_{s_t,a_t,n_t(s_t,a_t)}$ on each coordinate $Q_i(s_t,a_t)$,
entry $\eta_{s_t,n_t(s_t)}$ on each coordinate $V_i(s_t)$, and zero elsewhere. The nonzero innovation coordinates are
\begin{align*}
\xi_{t+1}^Q(i,s_t,a_t)
&:=\gamma\left(
V_{i,t}(s_{t+1})-
\sum_rP(r\mid s_t,a_t)V_{i,t}(r)
\right),\\
\xi_{t+1}^V(i,s_t)
&:=\left\langle\pi_{i,t}(\cdot\mid s_t),
Q_{i,t}(s_t,(\cdot,a_{-i,t}))\right\rangle
-G_\eps^V(\Pi_t,X_t)(i,s_t).
\end{align*}
and all remaining coordinates are zero.

For $\omega>0$, write
\[
\|(Q,V)\|_\omega
:=\max\{\|Q\|_\infty,\omega\|V\|_\infty\}.
\]

\begin{proposition}[Frozen critic fixed point]
\label{prop:critic-regularity}
Let $\omega=\sqrt\gamma$. Uniformly over $\eps\in[0,1]$:
\begin{enumerate}[label=(\roman*)]
\item $G_\eps$ maps $\mathcal C\times\mathcal X$ into $\mathcal X$ and is
jointly Lipschitz;
\item the affine map $G_\eps(\Pi,\cdot):\R^m\to\R^m$ is a $\sqrt\gamma$-contraction in $\|\cdot\|_\omega$; its unique fixed point $\Lambda_\eps(\Pi)$ belongs to $\mathcal X$;
\item $\Pi\mapsto\Lambda_\eps(\Pi)$ is Lipschitz;
\item $\Lambda_0(\Pi)=\Lambda(\Pi)$ and, for some $C_\Lambda^\infty<\infty$,
\begin{equation}
\label{eq:critic-Lambda-comparison}
\|\Lambda_\eps(\Pi)-\Lambda(\Pi)\|_\infty
\le C_\Lambda^\infty\eps.
\end{equation}
\end{enumerate}
\end{proposition}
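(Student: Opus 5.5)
The plan is to verify the four items directly from the explicit form of $G_\eps$ in \eqref{eq:critic-drift-q}--\eqref{eq:critic-drift-v}, treating the $Q$ and $V$ blocks separately and balancing them with the weight $\omega$.

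\emph{Item (i).} First, $G_\eps$ maps into $\mathcal X$. If $V_i\in[0,H]$, then $G^Q_\eps\in[0,1+\gamma H]=[0,H]$. The map $G^V_\eps$ is a convex average of $Q$ entries, so it lies in $[0,H]$ whenever $Q$ does. Joint Lipschitzness follows because both maps are polynomial in the finitely many coordinates of $(\Pi,X)$ on the compact set $\mathcal C\times\mathcal X$. Concretely, $G^V_\eps$ is multilinear in $\pi_i(\cdot\mid s)$ and the factors $p_j^\eps(\Pi)(\cdot\mid s)$, each of which is affine in $\Pi$ with coefficients bounded by $1$. Since $Q$ is bounded by $H$, a telescoping product bound gives
\[
|G^V_\eps(\Pi,X)-G^V_\eps(\Pi',X')|\le H\|\Pi-\Pi'\|_1+\|Q-Q'\|_\infty .
\]
This holds uniformly in $\eps$, since $\|p^\eps_j(\Pi)-p^\eps_j(\Pi')\|_1\le\|\pi_j-\pi_j'\|_1$.

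\emph{Item (ii).} For fixed $\Pi$ the map is affine, so contraction reduces to bounding the linear part. Set $d=X-X'$. Then
\[
\|G^Q(d)\|_\infty\le\gamma\|d_V\|_\infty=\sqrt\gamma\,\bigl(\omega\|d_V\|_\infty\bigr)\le\sqrt\gamma\|d\|_\omega,
\]
and
\[
\omega\|G^V(d)\|_\infty\le\sqrt\gamma\|d_Q\|_\infty\le\sqrt\gamma\|d\|_\omega .
\]
Banach's theorem on $\R^m$ then gives a unique fixed point $\Lambda_\eps(\Pi)$. Because $\mathcal X$ is closed and invariant under $G_\eps(\Pi,\cdot)$, iterating from any point of $\mathcal X$ shows that $\Lambda_\eps(\Pi)\in\mathcal X$.

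\emph{Item (iii).} This is the standard parametric fixed-point estimate. Let $L_\Pi$ be the $\Pi$-Lipschitz constant of $G_\eps$ in $\|\cdot\|_\omega$ from (i), taken on $\mathcal X$. Then
\[
\|\Lambda_\eps(\Pi)-\Lambda_\eps(\Pi')\|_\omega\le\|G_\eps(\Pi,\Lambda_\eps(\Pi))-G_\eps(\Pi',\Lambda_\eps(\Pi))\|_\omega+\sqrt\gamma\|\Lambda_\eps(\Pi)-\Lambda_\eps(\Pi')\|_\omega ,
\]
which yields the constant $L_\Pi/(1-\sqrt\gamma)$. Converting to $\|\cdot\|_\infty$ costs a factor $\omega^{-1}$, and this conversion is uniform in $\eps$.

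\emph{Item (iv).} At $\eps=0$ we have $p^0_{-i}=\Pi_{-i}$, so the $V$ equation becomes $V_i(s)=\sum_a\Pi(a\mid s)Q_i(s,a)$. Together with the $Q$ equation this is exactly \eqref{eq:canonical-bellman-system} for $\rho=\Pi$, and Lemma~\ref{lem:Bellman-tying-unique} gives $\Lambda_0(\Pi)=\Lambda(\Pi)$.

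For the comparison bound, apply the same fixed-point perturbation argument in the parameter $\eps$ instead of $\Pi$. Only $G^V$ depends on $\eps$. Since $\|p^\eps_j(\cdot\mid s)-\pi_j(\cdot\mid s)\|_1\le2\eps$, the product law satisfies $\|p^\eps_{-i}-\Pi_{-i}\|_1\le2(N-1)\eps$, and hence
\[
|G^V_\eps-G^V_0|\le2(N-1)H\eps \quad\text{on }\mathcal X .
\]
Therefore
\[
\|\Lambda_\eps-\Lambda_0\|_\omega\le\frac{2\omega(N-1)H\eps}{1-\sqrt\gamma},
\]
which gives $C_\Lambda^\infty=2(N-1)H/(\sqrt\gamma(1-\sqrt\gamma))$ after converting norms.

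No step is deep. The only point needing care is the choice of the weighted norm in (ii): the $V$ update alone is not contractive in the plain sup norm, because $G^V$ copies $Q$ with coefficient $1$. It is the $\omega=\sqrt\gamma$ weighting that splits the single factor $\gamma$ evenly across the two blocks. I also need to track that all constants are independent of $\eps$, which holds because $\eps$ enters only through the probability vectors $p^\eps$.
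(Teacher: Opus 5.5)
Your proposal is correct and follows the paper's proof essentially step for step: the same $\omega=\sqrt\gamma$ block-weighted contraction estimate, the same parametric fixed-point perturbation argument (in $\Pi$ for (iii) and in $\eps$ for (iv)), and the same product-measure telescoping bound on the opponents' law. One harmless slip: converting $\|\Lambda_\eps-\Lambda_0\|_\omega\le 2\omega(N-1)H\eps/(1-\sqrt\gamma)$ to the sup norm costs only a factor $\omega^{-1}$, which gives $C_\Lambda^\infty=2(N-1)H/(1-\sqrt\gamma)$ as in the paper, so your extra factor $1/\sqrt\gamma$ is unnecessary, although your larger constant is still valid.
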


\begin{proof}
If $X=(Q,V)\in\mathcal X$, then
$0\le G_\eps^Q(\Pi,X)\le1+\gamma H=H$ and
$0\le G_\eps^V(\Pi,X)\le H$. Hence
$G_\eps(\Pi,X)\in\mathcal X$.

For fixed $(i,s)$, let $\mu_{i,s}^\eps(\Pi)$ denote the product measure
$\pi_i(\cdot\mid s)\otimes_{j\ne i}p_j^\eps(\Pi)(\cdot\mid s)$ that appears in \eqref{eq:critic-drift-v}. A telescoping expansion of product measures gives
\[
\|\mu_{i,s}^\eps(\Pi)-\mu_{i,s}^\eps(\Pi')\|_1
\le\sum_{j=1}^N
\|\pi_j(\cdot\mid s)-\pi_j'(\cdot\mid s)\|_1.
\]
Linearity of the two blocks and $Q\in[0,H]$ now imply joint Lipschitz continuity, uniformly in $\eps$.

For arbitrary $X=(Q,V),X'=(Q',V')\in\R^m$,
\[
\|G_\eps^Q(\Pi,X)-G_\eps^Q(\Pi,X')\|_\infty
\le\gamma\|V-V'\|_\infty
\le\sqrt\gamma\|X-X'\|_\omega,
\]
while
\[
\omega\|G_\eps^V(\Pi,X)-G_\eps^V(\Pi,X')\|_\infty
\le\omega\|Q-Q'\|_\infty
\le\sqrt\gamma\|X-X'\|_\omega.
\]
The same affine formulas define $G_\eps(\Pi,\cdot)$ on $\R^m$, so Banach's fixed-point theorem gives a unique fixed point there. Since $G_\eps(\Pi,\cdot)$ maps $\mathcal X$ into itself, this fixed point belongs to $\mathcal X$, proving (ii).

Let $X=\Lambda_\eps(\Pi)$ and $X'=\Lambda_\eps(\Pi')$. The fixed-point identities and the preceding contraction give
\begin{equation}
\label{eq:critic-policy-lipschitz-step}
(1-\sqrt\gamma)\|X-X'\|_\omega
\le
\|G_\eps(\Pi,X')-G_\eps(\Pi',X')\|_\omega.
\end{equation}
The product-measure estimate and $\|X'\|_\infty\le H$ bound the right-hand side of \eqref{eq:critic-policy-lipschitz-step} by a constant times $\|\Pi-\Pi'\|_1$, proving (iii).

At $\eps=0$, the fixed-point equations are exactly \eqref{eq:canonical-bellman-system}, so $\Lambda_0(\Pi)=\Lambda(\Pi)$. With
$X=\Lambda_\eps(\Pi)$ and $Y=\Lambda(\Pi)$, contraction gives
\[
(1-\sqrt\gamma)\|X-Y\|_\omega
\le\|G_\eps(\Pi,Y)-G_0(\Pi,Y)\|_\omega.
\]
Only the opponents' law in the $V$ block changes. Total-variation tensorization and \eqref{eq:fixed-mixing} give
\[
\TV(\pi_i\otimes p_{-i}^\eps,\pi_i\otimes\pi_{-i})
\le(N-1)\eps,
\]
and hence
\[
\|G_\eps(\Pi,Y)-G_0(\Pi,Y)\|_\omega
\le2H\sqrt\gamma(N-1)\eps.
\]
Since $\|Z\|_\infty\le\omega^{-1}\|Z\|_\omega$, (iv) holds, for example, with
$C_\Lambda^\infty=2H(N-1)/(1-\sqrt\gamma)$.
\end{proof}

\begin{lemma}[Regularity of current-policy advantages]
\label{lem:AMCR-advantage-lipschitz}
There is $L_{\mathcal A}<\infty$ such that
\[
|\mathcal A_i(\Pi;s,a_i')-\mathcal A_i(\Pi';s,a_i')|
\le L_{\mathcal A}\|\Pi-\Pi'\|_1
\]
for every $i,s,a_i'$ and $\Pi,\Pi'\in\mathcal C$.
\end{lemma}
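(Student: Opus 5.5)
The advantage $\mathcal A_i(\Pi;s,a_i')$ is built from three ingredients: the opponents' statewise law $\Pi_{-i}(\cdot\mid s)$, the canonical action values $Q_i^\Pi$, and the value $V_i^\Pi(s)$. The plan is to show each ingredient is Lipschitz in $\Pi$ with respect to $\|\cdot\|_1$, and then combine them. Product structure makes the combination bilinear-type, so a telescoping (triangle-inequality) argument suffices.

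\textbf{Step 1: Lipschitz continuity of the canonical pair.} I would show $\Pi\mapsto(Q^\Pi,V^\Pi)=\Lambda(\Pi)$ is Lipschitz by invoking Proposition~\ref{prop:critic-regularity}(iii) at $\eps=0$ together with (iv), which gives $\Lambda_0=\Lambda$.

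As a self-contained check, one can also argue directly from Lemma~\ref{lem:Bellman-tying-unique}. Write $V_i^\Pi=(I-\gamma P_\Pi)^{-1}r_i^\Pi$. The one-step payoff $r_i^\Pi(s)$ and the kernel row $P_\Pi(\cdot\mid s)$ are averages against the product law $\Pi(\cdot\mid s)$. The telescoping bound for product measures gives
\[
\|\Pi(\cdot\mid s)-\Pi'(\cdot\mid s)\|_1\le\sum_j\|\pi_j(\cdot\mid s)-\pi'_j(\cdot\mid s)\|_1 .
\]
Subtracting the two fixed-point equations then yields
\[
\|V_i^\Pi-V_i^{\Pi'}\|_\infty\le \gamma\|V_i^\Pi-V_i^{\Pi'}\|_\infty+(1+\gamma H)\max_s\|\Pi(\cdot\mid s)-\Pi'(\cdot\mid s)\|_1 .
\]
Hence $\|V^\Pi-V^{\Pi'}\|_\infty\le H^2\|\Pi-\Pi'\|_1$. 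The $Q$-equation then gives $\|Q^\Pi-Q^{\Pi'}\|_\infty\le\gamma H^2\|\Pi-\Pi'\|_1$.

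\textbf{Step 2: decompose the difference.} Writing $\bar Q_i^\Pi(s,a_i'):=\sum_{a_{-i}}\Pi_{-i}(a_{-i}\mid s)Q_i^\Pi(s,(a_i',a_{-i}))$, I would split
\[
\bar Q_i^\Pi-\bar Q_i^{\Pi'}
=\sum_{a_{-i}}\bigl(\Pi_{-i}-\Pi'_{-i}\bigr)(a_{-i}\mid s)\,Q_i^\Pi(s,(a_i',a_{-i}))
+\sum_{a_{-i}}\Pi'_{-i}(a_{-i}\mid s)\bigl(Q_i^\Pi-Q_i^{\Pi'}\bigr)(s,(a_i',a_{-i})).
\]
The first term is bounded by $H\sum_{j\neq i}\|\pi_j(\cdot\mid s)-\pi'_j(\cdot\mid s)\|_1$, using $Q\in[0,H]$ and the product-measure bound. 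The second term is bounded by $\|Q^\Pi-Q^{\Pi'}\|_\infty$.

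\textbf{Step 3: combine.} Adding $|V_i^\Pi(s)-V_i^{\Pi'}(s)|$ gives
\[
|\mathcal A_i(\Pi;s,a_i')-\mathcal A_i(\Pi';s,a_i')|\le L_{\mathcal A}\|\Pi-\Pi'\|_1,
\qquad L_{\mathcal A}=H+\gamma H^2+H^2 .
\]

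\textbf{Main obstacle.} The only nontrivial point is the Lipschitz bound on the Bellman solution in Step 1. It reduces to the contraction argument; everything else is bookkeeping with the telescoping product-measure estimate.
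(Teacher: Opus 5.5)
Your proof is correct and follows the paper's argument: Lipschitz continuity of the canonical pair via Proposition~\ref{prop:critic-regularity} at $\eps=0$, an add-and-subtract split of the opponent-averaged $Q$ term, and the product-measure telescoping bound. The only difference is that you interpolate through $\Pi'_{-i}$ with $Q^\Pi$ rather than through $\Pi_{-i}$ with $Q^{\Pi'}$, which is immaterial, and your direct Bellman computation makes the constant explicit, $L_{\mathcal A}=H+\gamma H^2+H^2$, in place of the paper's $2L_{\mathrm{ev}}+H$.
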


\begin{proof}
By Proposition~\ref{prop:critic-regularity} at $\eps=0$ and norm
equivalence, there is $L_{\mathrm{ev}}<\infty$ such that
\[
\max\left\{
\|Q^\Pi-Q^{\Pi'}\|_\infty,
\|V^\Pi-V^{\Pi'}\|_\infty
\right\}
\le
L_{\mathrm{ev}}\|\Pi-\Pi'\|_1.
\]
Fix $(i,s,a_i')$. Adding and subtracting
$
\sum_{a_{-i}}
\Pi_{-i}(a_{-i}\mid s)
Q_i^{\Pi'}(s,(a_i',a_{-i}))
$ 
gives
\[
\begin{aligned}
&
|\mathcal A_i(\Pi;s,a_i')
-\mathcal A_i(\Pi';s,a_i')|
\\
&\quad\le
\|Q^\Pi-Q^{\Pi'}\|_\infty
+
H\left\|
\Pi_{-i}(\cdot\mid s)
-\Pi_{-i}'(\cdot\mid s)
\right\|_1
+
\|V^\Pi-V^{\Pi'}\|_\infty.
\end{aligned}
\]
The product-measure telescoping inequality implies
\[
\left\|
\Pi_{-i}(\cdot\mid s)
-\Pi_{-i}'(\cdot\mid s)
\right\|_1
\le
\sum_{j\ne i}
\left\|
\pi_j(\cdot\mid s)
-\pi_j'(\cdot\mid s)
\right\|_1
\le
\|\Pi-\Pi'\|_1.
\]
Hence the claim holds with
$L_{\mathcal A}=2L_{\mathrm{ev}}+H$.
\end{proof}

\subsection{Admissible actor families}

\begin{proof}[Proof of Proposition~\ref{prop:standard-admissible-actors}]
Fix a finite action set $A$. The cited regret results are stated for losses;
we apply them to the negatives of the gains. Let $\|\cdot\|$ be a norm on
$\R^A$, let $\|\cdot\|_*$ be its dual, and choose $K_A<\infty$ such that
$\|x\|_1\le K_A\|x\|$. Since $g\in[0,H]^A$,
$G_A:=\max_{g\in[0,H]^A}\|g\|_*<\infty$.

First consider FTRL. Let $h:\Delta(A)\to\R$ be lower
semicontinuous and $\mu$-strongly convex with respect to $\|\cdot\|$, and
suppose its oscillation
$\Omega_h:=\sup_{\pi}h(\pi)-\inf_{\pi}h(\pi)$ is finite. Subtracting the
constant $\inf h$, assume $\inf h=0$. Define
\[
Q_h(y)
:=
\arg\max_{\pi\in\Delta(A)}
\{\langle y,\pi\rangle-h(\pi)\}.
\]
Compactness and lower semicontinuity give existence, and strong convexity
gives uniqueness. The two optimality inequalities imply
\begin{equation}
\label{eq:choice-map-lipschitz}
\|Q_h(y)-Q_h(y')\|
\le
\frac1\mu\|y-y'\|_*.
\end{equation}
Indeed,
$\mu\|Q_h(y)-Q_h(y')\|^2
\le\langle y-y',Q_h(y)-Q_h(y')\rangle$.

Let $S_{n-1}:=\sum_{q<n}g_q$ and set
$\pi_n=Q_h(\beta_nS_{n-1})$. For the constrained loss problem on
$\Delta(A)$, let $\ell_q(\pi):=-\langle g_q,\pi\rangle$ and define the
centered incremental regularizers by $r_0=h/\beta_1$ and
$r_q=(\beta_{q+1}^{-1}-\beta_q^{-1})h$ for $q\ge1$, with the simplex
constraint encoded by its indicator. Their cumulative sum before round $n$
is $h/\beta_n$. Because $(\beta_n)$ is nonincreasing and $h\ge0$, every
$r_q$ is nonnegative. The resulting adaptive FTRL update is exactly
$\pi_n$. Moreover, $h/\beta_q$ is $\mu/\beta_q$-strongly convex, so the
dual-norm term in Theorem~1 in \cite{McMahan2017Adaptive} is
$\beta_q\|g_q\|_*^2/\mu$. The theorem therefore gives, for every $a\in A$,
\[
\sum_{q=1}^n
\bigl[g_q(a)-\langle\pi_q,g_q\rangle\bigr]
\le
\frac{h(e_a)}{\beta_n}
+
\frac1{2\mu}\sum_{q=1}^n\beta_q\|g_q\|_*^2.
\]
This is the simplex specialization of the regularized dual-averaging update
in Algorithm~1 in \cite{Xiao2010DualAveraging}. Hence
\begin{equation}
\label{eq:ftrl-admissible-regret}
\max_{a\in A}\sum_{q=1}^n
\bigl[g_q(a)-\langle\pi_q,g_q\rangle\bigr]
\le
\frac{\Omega_h}{\beta_n}
+
\frac{G_A^2}{2\mu}\sum_{q=1}^n\beta_q.
\end{equation}
Moreover, $n(\beta_n-\beta_{n+1})\le c\beta_n$ and
\eqref{eq:choice-map-lipschitz} give
\begin{align*}
\|\pi_{n+1}-\pi_n\|_1
&\le
\frac{K_A}{\mu}
\|\beta_{n+1}S_n-\beta_nS_{n-1}\|_*\\
&\le
\frac{K_A(1+c)G_A}{\mu}\beta_n.
\end{align*}
Thus FTRL is admissible.

A bounded predictable forecast may be inserted by taking
$\pi_n=Q_h(\beta_n(S_{n-1}+m_n))$, where
$\|m_n\|_*\le M_A$. Let
$\bar\pi_n:=Q_h(\beta_nS_{n-1})$. By
\eqref{eq:choice-map-lipschitz},
$\|\pi_n-\bar\pi_n\|\le\beta_nM_A/\mu$. Comparing the regret of
$(\pi_n)$ with that of $(\bar\pi_n)$ in
\eqref{eq:ftrl-admissible-regret} gives
\[
\max_{a\in A}\sum_{q=1}^n
\bigl[g_q(a)-\langle\pi_q,g_q\rangle\bigr]
\le
\frac{\Omega_h}{\beta_n}
+
\left(
\frac{G_A^2}{2\mu}+\frac{G_AM_A}{\mu}
\right)
\sum_{q=1}^n\beta_q.
\]
The same Lipschitz estimate yields
\[
\|\pi_{n+1}-\pi_n\|_1
\le
\frac{K_A[(1+c)G_A+2M_A]}{\mu}\beta_n.
\]
Hence the bounded-forecast FTRL actor is admissible.

We next consider online mirror ascent. Let $h$ be differentiable and
$\mu$-strongly convex, write $D_h$ for its Bregman divergence, and suppose
\[
B_h
:=
\max_{a\in A}\sup_{\pi\in\Delta(A)}D_h(e_a,\pi)
<\infty.
\]
Starting from any $z_1\in\Delta(A)$, let $(m_n)$ be predictable with
$\|m_n\|_*\le M_A$, and consider the optimistic mirror update
\begin{align*}
\pi_n
&\in
\arg\max_{\pi\in\Delta(A)}
\{\beta_n\langle\pi,m_n\rangle-D_h(\pi,z_n)\},\\
z_{n+1}
&\in
\arg\max_{z\in\Delta(A)}
\{\beta_n\langle z,g_n\rangle-D_h(z,z_n)\}.
\end{align*}
The ordinary mirror-ascent update is obtained by setting $m_n=0$, in which
case $\pi_n=z_n$. The one-step inequality in the proof of Lemma~3 in
\cite{RakhlinSridharan2013} is roundwise and therefore applies with the
learning rate $\beta_n$:
\[
\langle e_a-\pi_n,g_n\rangle
\le
\frac{D_h(e_a,z_n)-D_h(e_a,z_{n+1})}{\beta_n}
+
\frac{\beta_n}{2\mu}\|g_n-m_n\|_*^2.
\]
For $d_q:=D_h(e_a,z_q)$, monotonicity of $1/\beta_q$ gives
\[
\sum_{q=1}^n\frac{d_q-d_{q+1}}{\beta_q}
=
\frac{d_1}{\beta_1}
+
\sum_{q=2}^n
\left(\frac1{\beta_q}-\frac1{\beta_{q-1}}\right)d_q
-
\frac{d_{n+1}}{\beta_n}
\le
\frac{B_h}{\beta_n}.
\]
Summing the one-step inequality therefore gives
\[
\max_{a\in A}\sum_{q=1}^n
\bigl[g_q(a)-\langle\pi_q,g_q\rangle\bigr]
\le
\frac{B_h}{\beta_n}
+
\frac{(G_A+M_A)^2}{2\mu}
\sum_{q=1}^n\beta_q.
\]
Strong convexity of the two proximal problems also gives
$\|\pi_n-z_n\|\le\beta_nM_A/\mu$ and
$\|z_{n+1}-z_n\|\le\beta_nG_A/\mu$. Hence
\[
\|\pi_{n+1}-\pi_n\|_1
\le
\frac{K_A(G_A+2M_A)}{\mu}\beta_n.
\]
Thus optimistic mirror ascent, and ordinary mirror ascent as the case
$M_A=0$, are admissible.

For $h(\pi)=\sum_a\pi(a)\log\pi(a)$, FTRL is Hedge and
$\Omega_h=\log|A|$. For $h(\pi)=\|\pi\|_2^2/2$, mirror ascent is Euclidean
projected gradient ascent and $B_h\le1$, while FTRL is Euclidean
dual averaging. The bounded-forecast versions give optimistic Hedge and the
Euclidean optimistic mirror update, often called optimistic projected gradient.
\end{proof}

For the actor $\mathfrak A_{i,s}$ selected in Algorithm~\ref{alg:bao-ac},
fix admissibility constants
$D_{i,s}^{\mathrm A}$, $C_{i,s}^{\mathrm A}$, and
$L_{i,s}^{\mathrm A}$.

\subsection{Moving-target tracking}

The proof is quantitative from the outset. Assumption~\ref{ass:communication}
and fixed exploration imply that, on a block of $\lceil t^b\rceil$ stages,
every critic coordinate receives a fixed cumulative step except on an event
with exponentially small probability. Over the same block, the target profile
moves by $O(t^{-(c-b)})$, while the critic martingale fluctuation has expected
size $O(t^{-b/2})$ and an exponentially small tail. The same one-block
contraction therefore yields almost-sure tracking and the finite-time tracking
bound.

\begin{lemma}[One-step movement of admissible actors]
\label{lem:actor-one-step}
There is $L_\Pi<\infty$ such that
\begin{equation}
\label{eq:appendix-one-step-actor}
\|\Pi_{t+1}-\Pi_t\|_1
\le L_\Pi\beta_{s_t,n_t(s_t)}
\qquad\forall t.
\end{equation}
\end{lemma}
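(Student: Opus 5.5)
The plan is to read off the one-step change of the target profile directly from Algorithm~\ref{alg:bao-ac}. At date $t$, with $s=s_t$ and $n=n_t(s)$, the algorithm changes only the blocks $\pi_{i,\cdot}(\cdot\mid s)$ for $i=1,\ldots,N$. For every $r\ne s$ it sets $\pi_{i,t+1}(\cdot\mid r)=\pi_{i,t}(\cdot\mid r)$. So the definition of $\|\cdot\|_1$ on stationary product profiles gives
\[
\|\Pi_{t+1}-\Pi_t\|_1=\sum_{i=1}^N\|\pi_{i,t+1}(\cdot\mid s)-\pi_{i,t}(\cdot\mid s)\|_1 .
\]

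Next I identify each summand with a single round of the statewise actor $\mathfrak A_{i,s}$. The construction ties local round $n$ of $\mathfrak A_{i,s}$ to date $t_n(s)$. The $n$th decision is $\pi_{i,t_n(s)}(\cdot\mid s)$, and $\pi_{i,t+1}(\cdot\mid s)$ is the $(n+1)$st decision.

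I need to check that $\pi_{i,t}(\cdot\mid s)$ really is the $n$th decision. Between $t_{n-1}(s)+1$ and $t=t_n(s)$ the state $s$ is not visited, so the $s$-block is frozen and still equals the decision produced after round $n-1$. For $n=1$ it equals the initial decision. The gains $Q_{i,t}(s,(\cdot,a_{-i,t}))$ lie in $[0,H]^{A_i}$ because $X_t\in\mathcal X$ (Proposition~\ref{prop:critic-regularity}(i) together with the convex-combination form of the updates). Hence the admissibility interface applies, and the movement condition \eqref{eq:actor-movement-interface} gives
\[
\|\pi_{i,t+1}(\cdot\mid s)-\pi_{i,t}(\cdot\mid s)\|_1\le L_{i,s}^{\mathrm A}\beta_{s,n}.
\]

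Summing over $i$, I take $L_\Pi:=\max_{s}\sum_{i=1}^N L_{i,s}^{\mathrm A}$, which is finite because there are finitely many player--state pairs. This yields \eqref{eq:appendix-one-step-actor}.

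The only point needing care is this bookkeeping between the global clock $t$ and the local clock $n_t(s_t)$, including the confirmation that gains stay in $[0,H]$. Neither step poses a real obstacle.
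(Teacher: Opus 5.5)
Your proposal is correct and matches the paper's proof. Only the state-$s_t$ blocks change at date $t$; each player's block there moves by at most $L^{\mathrm A}_{i,s_t}\beta_{s_t,n_t(s_t)}$ by the movement interface \eqref{eq:actor-movement-interface}, and summing over players gives the bound. Your constant $\max_s\sum_i L^{\mathrm A}_{i,s}$ is slightly tighter than the paper's $\sum_i\max_s L^{\mathrm A}_{i,s}$, and your extra bookkeeping on local rounds and on the gains lying in $[0,H]^{A_i}$ is left implicit in the paper but is correct.
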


\begin{proof}
At stage $t$, only the player blocks at state $s_t$ receive a new gain. By
\eqref{eq:actor-movement-interface}, block $(i,s_t)$ moves by at most
$L_{i,s_t}^{\mathrm A}\beta_{s_t,n_t(s_t)}$. Summing over players proves
\eqref{eq:appendix-one-step-actor} with
$L_\Pi:=\sum_i\max_sL_{i,s}^{\mathrm A}$.
\end{proof}

The next lemma converts a uniform probability of observing an event within a
bounded number of stages into an exponential lower-tail bound for the number
of successful blocks.

\begin{lemma}[Concentration from uniform block hitting]
\label{lem:uniform-block-hitting}
Let $(\mathcal F_t)$ be a filtration and let
$E_t\in\mathcal F_{t+1}$. Suppose there are $L\in\mathbb N$ and
$\theta>0$ such that
\begin{equation}
\label{eq:uniform-block-hitting}
\bbP\left(
\bigcup_{r=t}^{t+L-1}E_r
\,\middle|\,
\mathcal F_t
\right)
\ge\theta
\qquad\forall t\ge1.
\end{equation}
For deterministic integers $t,B\ge1$, define
\[
Y_r
:=
\one\left\{
\bigcup_{u=t+(r-1)L}^{t+rL-1}E_u
\right\},
\qquad r=1,\ldots,B.
\]
Then
\begin{equation}
\label{eq:block-hit-azuma}
\bbP\left(
\sum_{r=1}^B Y_r<\frac{\theta B}{2}
\,\middle|\,
\mathcal F_t
\right)
\le
\exp\left(-\frac{\theta^2B}{8}\right)
\qquad\text{almost surely}.
\end{equation}
\end{lemma}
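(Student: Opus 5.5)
The plan is to compare the block indicators $Y_r$ with their conditional means and apply the Azuma--Hoeffding inequality to the resulting bounded martingale differences. Write $\mathcal G_r:=\mathcal F_{t+rL}$ for $r=0,\ldots,B$. Since $E_u\in\mathcal F_{u+1}$ and the last index in block $r$ is $u=t+rL-1$, each $Y_r$ is $\mathcal G_r$-measurable. Moreover, $t+(r-1)L\ge1$ is deterministic, so the hypothesis \eqref{eq:uniform-block-hitting} applied at the block start $t+(r-1)L$ gives
\[
\E[Y_r\mid\mathcal G_{r-1}]
=\bbP\Bigl(\bigcup_{u=t+(r-1)L}^{t+rL-1}E_u\,\Big|\,\mathcal F_{t+(r-1)L}\Bigr)\ge\theta
\quad\text{a.s.}
\]

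Next define $D_r:=Y_r-\E[Y_r\mid\mathcal G_{r-1}]$. This is a martingale difference sequence with respect to $(\mathcal G_r)$. Since $Y_r\in\{0,1\}$, each $D_r$ lies in an interval of length $1$, namely $[-\E[Y_r\mid\mathcal G_{r-1}],\,1-\E[Y_r\mid\mathcal G_{r-1}]]$. On the event $\sum_rY_r<\theta B/2$, the lower bound on the conditional means gives
\[
\sum_{r=1}^BD_r<\frac{\theta B}{2}-\theta B=-\frac{\theta B}{2}.
\]

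Finally, apply the conditional Azuma--Hoeffding inequality given $\mathcal G_0=\mathcal F_t$ to differences lying in intervals of length $1$ with predictable endpoints. This gives
\[
\bbP\Bigl(\sum_r D_r\le-\lambda\,\Big|\,\mathcal F_t\Bigr)\le\exp(-2\lambda^2/B).
\]
Taking $\lambda=\theta B/2$ yields the bound $\exp(-\theta^2B/2)$, which is stronger than the stated $\exp(-\theta^2B/8)$. The cruder version with $|D_r|\le1$, i.e. an interval of length $2$, gives $\exp(-\theta^2B/8)$ exactly, so either form suffices.

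The only delicate point is the measurability and indexing: we must check that $Y_r$ is measurable with respect to $\mathcal F_{t+rL}$ and that the hitting hypothesis is applied at the block starting times. Both follow directly from $E_u\in\mathcal F_{u+1}$ and from \eqref{eq:uniform-block-hitting} holding for all $t$. The conditional form of Azuma's inequality, conditioning on $\mathcal F_t$, is standard: one bounds the conditional moment generating function iteratively via the tower property and Hoeffding's lemma.
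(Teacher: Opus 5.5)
Your proposal is correct and follows the paper's proof essentially step for step: $Y_r$ is $\mathcal F_{t+rL}$-measurable, the hitting hypothesis at the block start gives conditional mean at least $\theta$, the centered indicators form a bounded martingale-difference sequence, and conditional Azuma--Hoeffding finishes the argument. Your remark that using the length-one predictable intervals gives the sharper bound $\exp(-\theta^2B/2)$ is also correct; the paper's proof uses only the cruder $|D_r|\le1$ version, which yields the stated $\exp(-\theta^2B/8)$.
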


\begin{proof}
For each $r=1,\ldots,B$, $Y_r$ is
$\mathcal F_{t+rL}$-measurable. Moreover,
\eqref{eq:uniform-block-hitting} applied at time $t+(r-1)L$ gives
$
\E[Y_r\mid\mathcal F_{t+(r-1)L}]\ge\theta.
$ 
Thus
\[
Y_r-\E[Y_r\mid\mathcal F_{t+(r-1)L}],
\qquad r=1,\ldots,B,
\]
is a martingale-difference sequence with respect to
$(\mathcal F_{t+rL})_{r=0}^B$, with increments bounded in absolute value
by one. If $\sum_{r=1}^B Y_r<\theta B/2$, then
\[
\sum_{r=1}^B
\left\{
Y_r-\E[Y_r\mid\mathcal F_{t+(r-1)L}]
\right\}
<-\frac{\theta B}{2}.
\]
Conditional Azuma--Hoeffding therefore gives
\eqref{eq:block-hit-azuma}.
\end{proof}

For a critic coordinate $j$, let $\chi_t(j)$ denote its actual step at stage
$t$: set $\chi_t(Q_i(s,a)):=\one\{(s_t,a_t)=(s,a)\}
\alpha_0n_t(s,a)^{-b}$ and
$\chi_t(V_i(s)):=\one\{s_t=s\}\eta_0n_t(s)^{-b}$. Also set
$\bar\chi_t:=\max_j\chi_t(j)$. Thus
$\mathsf A_t=\operatorname{diag}(\chi_t(j))_j$ and
$\|\mathsf A_t\|_\infty=\bar\chi_t$. For the remainder of this subsection,
set $m_t:=\lceil t^b\rceil$.

Recall $P^{\mathrm U}$ from \eqref{eq:uniform-action-transition}, and set
$L:=|\cS|$ and
$\underline P_{\mathrm U}:=\min\{P^{\mathrm U}(s'\mid s):
P^{\mathrm U}(s'\mid s)>0\}$. Define
$\theta_\eps:=(\eps^N\underline P_{\mathrm U})^{|\cS|-1}
\eps^N/|\cA|>0$ and
$\mu_\eps:=\kappa_\eps:=\theta_\eps/(8L)$. 

For each $t$, let
$\mathcal V_t:=\{n_{t-1}(s,a)\ge\mu_\eps t\quad
\forall(s,a)\in\cS\times\cA\}$ and
\begin{equation*}
\mathcal E_t
:=
\mathcal V_t\cap
\left\{
\sum_{u=t}^{t+m_t-1}\one\{(s_u,a_u)=(s,a)\}
\ge\kappa_\eps m_t
\quad\forall(s,a)\in\cS\times\cA
\right\}.
\end{equation*}
 The first event controls the local clocks before the block; the second also
requires every state--action pair to occur sufficiently often within the
block. Note that 
$n_{t-1}(s,a)$ is $\mathcal H_t$-measurable for every $(s,a)$, so
$\mathcal V_t\in\mathcal H_t$ but $\mathcal E_t$ also
depends on the observations in $[t,t+m_t)$ and need not belong to
$\mathcal H_t$.

\begin{lemma}[Quantitative critic blocks]
\label{lem:uniform-critic-blocks}
Under Assumption~\ref{ass:communication}, $\mathcal V_t\in\mathcal H_t$ for
every $t$. Moreover, there are $\delta_\eps,B_\eps,C_0,c_0>0$ and
$D_\eps<\infty$ such that, for every sufficiently large $t$,
\begin{equation}
\label{eq:critic-block-probabilities}
\bbP(\mathcal V_t^c)\le C_0e^{-c_0t},
\qquad
\bbP(\mathcal E_t^c)\le C_0e^{-c_0t^b}.
\end{equation}
On $\mathcal E_t$,
\begin{equation}
\label{eq:uniform-critic-block-steps}
\sum_{u=t}^{t+m_t-1}\chi_u(j)\ge\delta_\eps
\quad\forall j,
\qquad
\sum_{u=t}^{t+m_t-1}\bar\chi_u\le B_\eps,
\end{equation}
and
\begin{equation}
\label{eq:uniform-critic-block-actor}
\sup_{t\le k\le t+m_t}\|\Pi_k-\Pi_t\|_1
\le D_\eps t^{-(c-b)}.
\end{equation}
Moreover, almost surely, $\mathcal E_t$ occurs for every sufficiently large
$t$, and
\begin{equation}
\label{eq:critic-global-frequency}
n_t(s,a)\ge\mu_\eps t,
\qquad
n_t(s)\ge\mu_\eps t
\qquad\forall s,a
\end{equation}
holds for every sufficiently large $t$.
\end{lemma}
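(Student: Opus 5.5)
The proof has four parts: a uniform hitting bound, the two tail estimates in \eqref{eq:critic-block-probabilities}, the deterministic consequences on $\mathcal E_t$, and Borel--Cantelli. Measurability of $\mathcal V_t$ is immediate, since $n_{t-1}(s,a)$ is a function of $(s_u,a_u)_{u\le t-1}$, all of which lie in $\mathcal H_t$.

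\textbf{Uniform hitting.} The first step is a hitting estimate for Lemma~\ref{lem:uniform-block-hitting} with $L=|\cS|$ and $\theta=\theta_\eps$. Fix a target pair $(s,a)$ and a time $t$. Under \eqref{eq:fixed-mixing}, every joint action has conditional probability at least $\eps^N/|\cA|$, whatever the history. Hence for all $s,s'$,
\[
\sum_a p_u(a\mid s)P(s'\mid s,a)\ge \eps^N P^{\mathrm U}(s'\mid s).
\]
By Assumption~\ref{ass:communication}, from any state $s_t$ there is a path to $s$ of length at most $|\cS|-1$ whose $P^{\mathrm U}$-transitions are each at least $\underline P_{\mathrm U}$. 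Following this path and then playing $a$ at $s$ shows that, conditional on $\mathcal H_t$,
\[
\bbP\bigl(\exists\,u\in[t,t+L-1]:(s_u,a_u)=(s,a)\bigm|\mathcal H_t\bigr)\ge\theta_\eps .
\]
Paths shorter than $|\cS|-1$ only help, because $\eps^N\underline P_{\mathrm U}\le1$. The events are $E_u=\{(s_u,a_u)=(s,a)\}$, and $E_u\in\mathcal H_{u+1}$.

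\textbf{Tail bounds.} Split $[t,t+m_t)$ into $B=\lfloor m_t/L\rfloor$ blocks of length $L$. Lemma~\ref{lem:uniform-block-hitting} shows that, except with conditional probability $e^{-\theta_\eps^2B/8}$, at least $\theta_\eps B/2$ blocks contain a hit. That gives at least $\theta_\eps m_t/(2L)-\theta_\eps/2\ge\kappa_\eps m_t$ hits for large $t$. A union bound over the finitely many pairs $(s,a)$ yields the in-block part of \eqref{eq:critic-block-probabilities}, with rate $e^{-c_0t^b}$.

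The same argument on $[1,t-1]$ with $B\approx t/L$ gives $n_{t-1}(s,a)\ge\theta_\eps t/(2L)-O(1)\ge\mu_\eps t$ except with probability $C_0e^{-c_0t}$. This uses $\mu_\eps=\theta_\eps/(8L)$ and the slack between $1/8$ and $1/2$. Combining the two gives $\bbP(\mathcal E_t^c)\le C_0e^{-c_0t^b}$.

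\textbf{Deterministic consequences on $\mathcal E_t$.} On $\mathcal V_t$, every local clock at a time $u\in[t,t+m_t)$ lies between $\mu_\eps t$ and $u\le 2t$; the same bounds hold for $n_u(s)\ge n_u(s,a)$. So each nonzero step is at most $\max(\alpha_0,\eta_0)(\mu_\eps t)^{-b}$. Summing over the $m_t\le t^b+1$ stages bounds $\sum_u\bar\chi_u$ by a constant $B_\eps$.

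For the lower bound, on $\mathcal E_t$ each $Q$-coordinate receives at least $\kappa_\eps m_t$ steps, each at least $\alpha_0(2t)^{-b}$. Since $m_t\ge t^b$, this gives $\delta_\eps=\min(\alpha_0,\eta_0)\kappa_\eps2^{-b}$. The $V$-coordinates inherit the bound because each visit to $(s,a)$ is a visit to $s$.

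For \eqref{eq:uniform-critic-block-actor}, apply Lemma~\ref{lem:actor-one-step}. On $\mathcal V_t$ each $\beta$-step in the block is at most $\beta_0(\mu_\eps t)^{-c}$. Summing over $m_t$ stages gives $D_\eps t^{b-c}$.

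\textbf{Almost-sure statements.} Because $\sum_t e^{-c_0t^b}<\infty$ and $\sum_t e^{-c_0t}<\infty$, Borel--Cantelli shows that $\mathcal E_t$ holds eventually, and that \eqref{eq:critic-global-frequency} holds eventually. For the latter, apply the $\mathcal V$-type bound at time $t+1$, and note that $n_t(s)\ge n_t(s,a)$.

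\textbf{Main obstacle.} The delicate point is the uniform hitting bound. It must hold conditionally on an arbitrary history even though the policies are adaptive, and this rests entirely on the history-independent lower bound $\eps^N/|\cA|$ on joint-action probabilities. The remaining bookkeeping is choosing $\mu_\eps$ and $\kappa_\eps$ so that the floor and rounding losses are absorbed for large $t$.
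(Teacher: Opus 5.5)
Your proposal is correct and follows the paper's proof essentially step for step: the same path-following hitting bound with $L=|\cS|$ and $\theta_\eps$, Lemma~\ref{lem:uniform-block-hitting} applied on $[1,t)$ and on $[t,t+m_t)$ with a union bound over state--action pairs, the same local-clock bounds giving $\delta_\eps$, $B_\eps$ and $D_\eps$, and Borel--Cantelli. The differences are cosmetic. You combine the two tail estimates by a plain union bound rather than the paper's tower decomposition; this is valid because the in-block conditional bound holds for every history. You also obtain \eqref{eq:critic-global-frequency} from $\mathcal V_{t+1}$, whereas the paper uses $\mathcal E_t\subseteq\mathcal V_t$.
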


\begin{proof}
The minimum $\underline P_{\mathrm U}$ is positive because the state space
is finite. Fixed exploration gives $p_t(a\mid s)\ge\eps^N/|\cA|$ and
$\bbP(s_{t+1}=s'\mid\mathcal H_t)
\ge\eps^N P^{\mathrm U}(s'\mid s_t)$.

Fix $(s,a)$. By Assumption~\ref{ass:communication}, from the current state
there is a path to $s$ in the support graph of $P^{\mathrm U}$ with at most
$|\cS|-1$ edges. Conditional on traversing the preceding edges, every next
edge has probability at least $\eps^N\underline P_{\mathrm U}$, and upon
reaching $s$, action $a$ is sampled with probability at least
$\eps^N/|\cA|$. Multiplying these conditional lower bounds along the path
and using $\eps^N\underline P_{\mathrm U}\le1$ gives
\begin{equation*}
\bbP\left(
(s_r,a_r)=(s,a)\text{ for some }r\in\{t,\ldots,t+L-1\}
\,\middle|\,\mathcal H_t
\right)
\ge\theta_\eps
\end{equation*}
for every history.

Partition $[1,t)$ into $\lfloor(t-1)/L\rfloor$ complete intervals of length
$L$. For every sufficiently large $t$,
$\lfloor(t-1)/L\rfloor\ge t/(2L)$, so
$(\theta_\eps/2)\lfloor(t-1)/L\rfloor\ge2\mu_\eps t$. If
$n_{t-1}(s,a)<\mu_\eps t$, then fewer than $\theta_\eps
\lfloor(t-1)/L\rfloor/2$ of these intervals can contain $(s,a)$. Applying
\eqref{eq:block-hit-azuma} and taking a union bound over
$\cS\times\cA$ therefore gives the first inequality in
\eqref{eq:critic-block-probabilities}, after changing $C_0,c_0$.

Apply \eqref{eq:block-hit-azuma} conditionally on $\mathcal H_t$ to the
interval $[t,t+m_t)$. For every sufficiently large $t$,
$\lfloor m_t/L\rfloor\ge m_t/(2L)$, and hence
$(\theta_\eps/2)\lfloor m_t/L\rfloor\ge2\kappa_\eps m_t$. Thus, if a
state--action pair occurs fewer than $\kappa_\eps m_t$ times in the block,
fewer than $\theta_\eps\lfloor m_t/L\rfloor/2$ complete subintervals can
contain that pair. A union bound gives
\begin{equation*}
\bbP\left(
\left\{
\sum_{u=t}^{t+m_t-1}\one\{(s_u,a_u)=(s,a)\}
\ge\kappa_\eps m_t
\quad\forall(s,a)
\right\}^{c}
\,\middle|\,\mathcal H_t
\right)
\le C_0e^{-c_0m_t}.
\end{equation*}
Because $\mathcal V_t\in\mathcal H_t$, the tower property gives
\begin{align*}
\bbP(\mathcal E_t^c)
&\le
\bbP(\mathcal V_t^c)
+
\E\!\left[
\one\{\mathcal V_t\}
\bbP\left(
\left\{
\sum_{u=t}^{t+m_t-1}\one\{(s_u,a_u)=(s,a)\}
\ge\kappa_\eps m_t
\quad\forall(s,a)
\right\}^{c}
\,\middle|\,\mathcal H_t
\right)
\right]\\
&\le C_0e^{-c_0t}+C_0e^{-c_0m_t}.
\end{align*}
Using $m_t\ge t^b$ and changing $C_0,c_0$ proves the second inequality in
\eqref{eq:critic-block-probabilities}.

We next prove the deterministic conclusions on $\mathcal E_t$. For every
sufficiently large $t$, $t+m_t\le2t$. Each $Q_i(s,a)$ coordinate is
activated at least $\kappa_\eps m_t$ times on $[t,t+m_t)$. For a
$V_i(s)$ coordinate, fix any $a\in\cA$; every occurrence of $(s,a)$ also
activates that coordinate. At each such activation, the corresponding local
counter is at most $2t$. Therefore,
$\chi_u(j)\ge2^{-b}\min\{\alpha_0,\eta_0\}t^{-b}$ whenever coordinate
$j$ is activated. Since $m_t\ge t^b$, the first inequality in
\eqref{eq:uniform-critic-block-steps} holds with
$\delta_\eps:=\kappa_\eps2^{-b}\min\{\alpha_0,\eta_0\}$.

On $\mathcal V_t$, every active state--action counter after time $t$ is at
least $\mu_\eps t$. For each state $s$, fixing any $a\in\cA$ also gives
$n_u(s)\ge n_{t-1}(s,a)\ge\mu_\eps t$ for $u\ge t$. Thus
$\bar\chi_u\le\mu_\eps^{-b}\max\{\alpha_0,\eta_0\}t^{-b}$ for
$t\le u<t+m_t$. Since $m_t\le2t^b$ for every sufficiently large $t$, the
second inequality in \eqref{eq:uniform-critic-block-steps} holds with
$B_\eps:=2\mu_\eps^{-b}\max\{\alpha_0,\eta_0\}$.

By \eqref{eq:appendix-one-step-actor}, for $t\le k\le t+m_t$,
$\|\Pi_k-\Pi_t\|_1
\le L_\Pi\beta_0m_t(\mu_\eps t)^{-c}
\le D_\eps t^{-(c-b)}$, where
$D_\eps:=2L_\Pi\beta_0\mu_\eps^{-c}$. This proves
\eqref{eq:uniform-critic-block-actor}.

The probabilities in \eqref{eq:critic-block-probabilities} are summable over
$t$. The Borel--Cantelli lemma implies that $\mathcal E_t$ occurs for every
sufficiently large $t$ almost surely. Since $\mathcal E_t\subseteq
\mathcal V_t$, on this event
$n_t(s,a)\ge n_{t-1}(s,a)\ge\mu_\eps t$ for every state--action pair.
Fixing any joint action at state $s$ gives the state-frequency bound in
\eqref{eq:critic-global-frequency}.
\end{proof}

Because $\alpha_{s,a,n},\eta_{s,n}\in(0,1]$, each critic update is a convex
combination of its previous value and a target in $[0,H]$. Indeed,
$0\le u_i(s,a)+\gamma V_{i,t}(s')\le1+\gamma H=H$ and
$0\le
\langle\pi_{i,t}(\cdot\mid s),
Q_{i,t}(s,(\cdot,a_{-i,t}))\rangle
\le H$.
The initialization therefore implies $X_t\in\mathcal X=[0,H]^m$ for every
$t$.

For $t\le k\le t+m_t$, define
$M_{t,k}:=\sum_{u=t}^{k-1}\mathsf A_u\xi_{u+1}$ and
$\mathcal N_t:=\max_{t\le k\le t+m_t}\|M_{t,k}\|_\omega$.

\begin{lemma}[Blockwise critic noise]
\label{lem:blockwise-critic-noise}
There are $C_1,c_1>0$ such that, for every sufficiently large $t$,
\begin{align}
\E\!\left[\one\{\mathcal V_t\}\mathcal N_t\right]
&\le C_1t^{-b/2},
\label{eq:blockwise-critic-noise-expectation}\\
\bbP\left(\mathcal V_t\cap\{\mathcal N_t>x\}\right)
&\le C_1\exp(-c_1x^2t^b)
\qquad\forall x>0.
\label{eq:blockwise-critic-noise-tail}
\end{align}
\end{lemma}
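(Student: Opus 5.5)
We control $\mathcal N_t$ coordinatewise, and for each coordinate we apply a maximal martingale inequality on the event $\mathcal V_t$. The key point is that on $\mathcal V_t$ every step size inside the block is uniformly $O(t^{-b})$. The quadratic variation of the block martingale is therefore $O(m_t t^{-2b})=O(t^{-b})$. This gives both the $t^{-b/2}$ expectation bound and the sub-Gaussian tail with variance proxy of order $t^{-b}$.

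\textbf{Step 1: martingale structure.} The noise term $\mathsf A_u\xi_{u+1}$ is not a single martingale difference with respect to $(\mathcal H_u)$. The reason is that $\mathsf A_u$ depends on $a_u$, while $\xi^V_{u+1}$ is generated by $a_u$ itself. Following the overview, we split each stage into two half-steps. Let $\mathcal H_u^{+}:=\mathcal H_u\vee\sigma(a_u)$.
\begin{itemize}
\item For a $V_i(s)$ coordinate, the step $\chi_u(V_i(s))=\one\{s_u=s\}\eta_0 n_u(s)^{-b}$ is $\mathcal H_u$-measurable, since $n_u(s)$ counts the current visit and $s_u$ is known. Its innovation $\xi^V_{u+1}(i,s)$ has conditional mean zero given $\mathcal H_u$, because $a_{-i,u}\sim p_{-i,u}(\cdot\mid s_u)$ and $\Pi_u,X_u$ are $\mathcal H_u$-measurable. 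This innovation is $\mathcal H_u^+$-measurable.
\item For a $Q_i(s,a)$ coordinate, the step is $\mathcal H_u^+$-measurable, and $\xi^Q_{u+1}$ has conditional mean zero given $\mathcal H_u^+$.
\end{itemize}
Hence along the interleaved filtration $\mathcal H_t\subseteq\mathcal H_t^+\subseteq\mathcal H_{t+1}\subseteq\cdots$, each coordinate of $M_{t,k}$ is a martingale whose increments sit at the appropriate half-steps. Each increment $\chi_u(j)\xi_{u+1}(j)$ is bounded in absolute value by $2H\chi_u(j)$.

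\textbf{Step 2: stopped martingale with deterministic bounds.} On $\mathcal V_t\in\mathcal H_t$, the proof of Lemma~\ref{lem:uniform-critic-blocks} shows that for $t\le u<t+m_t$,
\[
\bar\chi_u\le c_\eps t^{-b},\qquad c_\eps:=\mu_\eps^{-b}\max\{\alpha_0,\eta_0\}.
\]
This bound is deterministic given $\mathcal V_t$, since counters only increase. Multiply the martingale by $\one\{\mathcal V_t\}$, which is $\mathcal H_t$-measurable, so the martingale property is preserved. The increments are then bounded by $2Hc_\eps t^{-b}$, and there are at most $2m_t\le 4t^b$ half-steps.

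\textbf{Step 3: tail bound.} Apply the conditional Azuma--Hoeffding maximal inequality (Doob's inequality applied to the exponential supermartingale) to each coordinate. This gives
\[
\bbP\Bigl(\mathcal V_t\cap\Bigl\{\max_{k}|M_{t,k}(j)|>x\Bigr\}\Bigr)\le 2\exp\!\left(-\frac{x^2}{2\cdot 4t^b(2Hc_\eps t^{-b})^2}\right)=2\exp(-c x^2 t^{b}).
\]
A union bound over the $m$ coordinates then proves \eqref{eq:blockwise-critic-noise-tail}. The weight $\omega\le1$ in $\|\cdot\|_\omega$ only helps.

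\textbf{Step 4: expectation bound.} Integrate the tail:
\[
\E[\one\{\mathcal V_t\}\mathcal N_t]=\int_0^\infty\bbP(\mathcal V_t\cap\{\mathcal N_t>x\})\,dx\le C_1 m\int_0^\infty e^{-c_1x^2t^b}\,dx=O(t^{-b/2}).
\]
This gives \eqref{eq:blockwise-critic-noise-expectation}. Alternatively, Doob's $L^2$ inequality with the same quadratic-variation bound gives the same rate.

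\textbf{Main obstacle.} The delicate part is Step 1: verifying the measurability claims exactly, especially that $n_u(\cdot)$ in the step size is fixed before $a_u$ for the $V$ coordinate but only after $a_u$ for the $Q$ coordinate. The interleaved filtration must make both innovations genuine martingale differences, with predictable multipliers, in a single martingale.
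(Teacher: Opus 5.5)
Your proposal is correct and follows essentially the same route as the paper. Both use the half-step refinement $\mathcal H_u\subseteq\sigma(\mathcal H_u,a_u)\subseteq\mathcal H_{u+1}$, localize by the $\mathcal H_t$-measurable indicator of $\mathcal V_t$ to get deterministic $O(t^{-b})$ increment bounds over $O(t^b)$ half-steps, and obtain the tail from a maximal Azuma--Hoeffding inequality with a union bound over coordinates. The only difference is that you get the expectation bound by integrating the sub-Gaussian tail, while the paper uses Doob's $L^2$ maximal inequality and orthogonality of increments; both give $O(t^{-b/2})$.

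One small correction to your motivation: $\mathsf A_u\xi_{u+1}$ is in fact already an $(\mathcal H_u)$-martingale difference. Each scalar coordinate is either a $V$-coordinate with an $\mathcal H_u$-measurable step, or a $Q$-coordinate whose conditional mean vanishes given $\sigma(\mathcal H_u,a_u)$, and hence, by the tower property, given $\mathcal H_u$. The half-step split is therefore a convenience rather than a necessity, though it does no harm.
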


\begin{proof}
The action both selects the active coordinates and generates the $V$-critic innovation, while the subsequent state transition generates the $Q$-critic innovation. We therefore separate these two sources of randomness by splitting every stage into two half-steps. For
$u\ge1$, set $\mathcal G_{2u}:=\mathcal H_u$,
$\mathcal G_{2u+1}:=\sigma(\mathcal H_u,a_u)$, and
$\mathcal G_{2u+2}:=\mathcal H_{u+1}$. Let $\Delta_{2u+1}$ have zero
$Q$ block and, for every $(i,s)$, set
\begin{align*}
\Delta_{2u+1}^{V}(i,s)
:=\one\{s_u=s\}\eta_{s,n_u(s)}
\bigl(&\langle\pi_{i,u}(\cdot\mid s),
Q_{i,u}(s,(\cdot,a_{-i,u}))\rangle
-G_\eps^V(\Pi_u,X_u)(i,s)\bigr).
\end{align*}
Let $\Delta_{2u+2}$ have zero $V$ block and, for every $(i,s,a)$, set
\begin{align*}
\Delta_{2u+2}^{Q}(i,s,a)
:=\one\{(s_u,a_u)=(s,a)\}\alpha_{s,a,n_u(s,a)}\gamma
\left(
V_{i,u}(s_{u+1})-
\sum_{s'}P(s'\mid s,a)V_{i,u}(s')
\right).
\end{align*}
Equations~\eqref{eq:behavior-sampling}, \eqref{eq:critic-drift-v}, and the
transition law imply
$\Delta_{2u+1}+\Delta_{2u+2}=\mathsf A_u\xi_{u+1}$,
$\E[\Delta_{2u+1}\mid\mathcal G_{2u}]=0$, and
$\E[\Delta_{2u+2}\mid\mathcal G_{2u+1}]=0$.

For fixed $t$, set $S_{t,2t}:=0$ and
$S_{t,\ell}:=\sum_{j=2t}^{\ell-1}\one\{\mathcal V_t\}\Delta_{j+1}$ for
$2t<\ell\le2(t+m_t)$. Since $\mathcal V_t\in\mathcal G_{2t}$,
$(S_{t,\ell})_{\ell=2t}^{2(t+m_t)}$ is a finite-dimensional martingale with
respect to $(\mathcal G_\ell)_{\ell=2t}^{2(t+m_t)}$. At every complete
stage, $S_{t,2k}=\one\{\mathcal V_t\}M_{t,k}$ for
$t\le k\le t+m_t$.

On $\mathcal V_t$, every nonzero critic step on the block is at most
$\mu_\eps^{-b}\max\{\alpha_0,\eta_0\}t^{-b}$. Since $X_u\in\mathcal X$,
the innovations are uniformly bounded. Hence there is a deterministic
$C<\infty$ such that every scalar coordinate of
$S_{t,\ell+1}-S_{t,\ell}$ has absolute value at most $Ct^{-b}$. The block
contains $2m_t\le4t^b$ half-step increments for every sufficiently large
$t$, so the sum of the squared bounds for each scalar coordinate is at most
$Ct^{-b}$, after increasing $C$. Finite dimensionality also gives
\begin{equation*}
\sum_{\ell=2t}^{2(t+m_t)-1}
\E\!\left[
\|S_{t,\ell+1}-S_{t,\ell}\|_2^2
\,\middle|\,
\mathcal G_\ell
\right]
\le Ct^{-b}
\qquad\text{almost surely}.
\end{equation*}

Let $C_\omega<\infty$ satisfy
$\|x\|_\omega\le C_\omega\|x\|_2$. At the complete stages,
$\one\{\mathcal V_t\}\mathcal N_t
=\max_{t\le k\le t+m_t}\|S_{t,2k}\|_\omega$, which is bounded by the
maximum over all half-steps. Applying Doob's $L^2$ maximal inequality to
each scalar coordinate, followed by Cauchy--Schwarz and orthogonality of
martingale differences, gives
\begin{align*}
\E\!\left[\one\{\mathcal V_t\}\mathcal N_t\right]
&\le
C_\omega
\left(
\sum_h\E\!\left[
\max_{2t\le\ell\le2(t+m_t)}|[S_{t,\ell}]_h|^2
\right]
\right)^{1/2}\\
&\le
2C_\omega
\left(
\sum_h\E\!\left[|[S_{t,2(t+m_t)}]_h|^2\right]
\right)^{1/2}\\
&=
2C_\omega
\left(
\sum_{\ell=2t}^{2(t+m_t)-1}
\E\!\left[\|S_{t,\ell+1}-S_{t,\ell}\|_2^2\right]
\right)^{1/2}\\
&\le C_1t^{-b/2}.
\end{align*}
This proves \eqref{eq:blockwise-critic-noise-expectation}.

For a scalar coordinate $h$, the increment bound $Ct^{-b}$ and the fact
that the block has at most $4t^b$ half-steps give a deterministic total
squared bound of at most $4C^2t^{-b}$. The maximal Azuma--Hoeffding
inequality therefore gives
$\bbP\{\max_{2t\le\ell\le2(t+m_t)}|[S_{t,\ell}]_h|>x\}
\le2\exp(-c x^2t^b)$ for a constant $c>0$. A union bound over the finitely
many coordinates and norm equivalence yield
$\bbP\{\max_{2t\le\ell\le2(t+m_t)}\|S_{t,\ell}\|_\omega>x\}
\le C_1\exp(-c_1x^2t^b)$. On $\mathcal V_t$,
$S_{t,2k}=M_{t,k}$ at every complete stage, so
$\mathcal V_t\cap\{\mathcal N_t>x\}$ is contained in this event. This
proves \eqref{eq:blockwise-critic-noise-tail}.
\end{proof}

The next deterministic estimate shows how contraction survives asynchronous
updates. Each coordinate may be updated at different stages, but it must
receive a fixed cumulative step during the block.

\begin{lemma}[Asynchronous contraction on one block]
\label{lem:asynchronous-block-contraction}
Let $T:\R^m\to\R^m$ be a $\theta$-contraction in $\|\cdot\|_\omega$ with
fixed point $x^\star$, where $\theta<1$. Suppose
\begin{equation*}
y_{n+1}=y_n+A_n\bigl(T(y_n)-y_n+r_n\bigr),
\qquad n=p,\ldots,q-1,
\end{equation*}
where $A_n$ is diagonal with entries in $[0,1]$. If
\begin{equation*}
\sum_{n=p}^{q-1}[A_n]_{jj}\ge\delta
\quad\forall j,
\qquad
\sum_{n=p}^{q-1}\|A_n\|_\infty\le B,
\end{equation*}
then, with $\rho:=\max_{p\le n<q}\|r_n\|_\omega$,
\begin{align}
\|y_q-x^\star\|_\omega
&\le
\left[\theta+(1-\theta)e^{-\delta}\right]
\|y_p-x^\star\|_\omega
+(1+\theta B)\rho,
\label{eq:asynchronous-block-contraction}\\
\max_{p\le n\le q}\|y_n-x^\star\|_\omega
&\le\|y_p-x^\star\|_\omega+B\rho.
\label{eq:asynchronous-block-envelope}
\end{align}
\end{lemma}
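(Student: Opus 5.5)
We argue coordinatewise. Write $z_n:=y_n-x^\star$ and $D_n:=\|z_n\|_\omega$. Since $x^\star=T(x^\star)$, the recursion gives, for each coordinate $j$,
\[
[z_{n+1}]_j=(1-a_{n,j})[z_n]_j+a_{n,j}\bigl([T(y_n)-T(x^\star)]_j+[r_n]_j\bigr),
\qquad a_{n,j}:=[A_n]_{jj}\in[0,1].
\]
Let $w_j$ be the coordinate weight in $\|\cdot\|_\omega$ (here $w_j\in\{1,\omega\}$), so that $\|x\|_\omega=\max_j w_j|x_j|$. Multiplying by $w_j$ and using the contraction bound $w_j|[T(y_n)-T(x^\star)]_j|\le\theta D_n$ gives the scalar inequality
\[
w_j|[z_{n+1}]_j|\le(1-a_{n,j})\,w_j|[z_n]_j|+a_{n,j}(\theta D_n+\rho).
\]

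\emph{Step 1: envelope bound.} This is a convex combination, so $D_{n+1}\le\max\{D_n,\theta D_n+\rho\}\le D_n+a_n\rho$ with $a_n:=\|A_n\|_\infty$. Summing these increments from $p$ gives $D_n\le D_p+\rho\sum_{k<n}a_k\le D_p+B\rho$ for every $n$, which is \eqref{eq:asynchronous-block-envelope}.

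\emph{Step 2: contraction.} Set $K:=\theta(D_p+B\rho)+\rho$. By Step 1, $\theta D_n+\rho\le K$ for every $n$ in the block. Write $e_n:=w_j|[z_n]_j|-K$. The scalar inequality becomes $e_{n+1}\le(1-a_{n,j})e_n$. For any $n$ at which $e_n\le0$, the recursion keeps $e_{n+1}\le0$. Otherwise $e_{n+1}\le(1-a_{n,j})e_n$, and iterating gives
\[
(e_q)_+\le\prod_{n}(1-a_{n,j})\,(e_p)_+\le e^{-\sum_n a_{n,j}}(e_p)_+\le e^{-\delta}(D_p-K)_+ .
\]
Hence
\[
w_j|[z_q]_j|\le K+e^{-\delta}(D_p-K)_+ .
\]

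\emph{Step 3: combining.} If $D_p\le K$, the bound reads $w_j|[z_q]_j|\le K=\theta D_p+(1+\theta B)\rho$. If $D_p>K$, it reads $(1-e^{-\delta})K+e^{-\delta}D_p$. Since $K\le\theta D_p+(1+\theta B)\rho$, this is at most
\[
\bigl[\theta+(1-\theta)e^{-\delta}\bigr]D_p+(1-e^{-\delta})(1+\theta B)\rho .
\]
In both cases the bound is dominated by the right-hand side of \eqref{eq:asynchronous-block-contraction}. Taking the maximum over $j$ finishes the proof.

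The only delicate point is the choice of a frozen threshold $K$ that is valid over the whole block. Step 1 provides it. The threshold $K$ must absorb the dependence of $T(y_n)$ on the still-moving other coordinates. This matters because the block is asynchronous: coordinates are updated at different times, so no single-step contraction holds. Step 1 makes this harmless by controlling all coordinates uniformly.
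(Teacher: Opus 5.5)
Your proof is correct and follows the paper's argument. Both use the weighted coordinatewise recursion, the envelope obtained from $D_{n+1}\le D_n+\|A_n\|_\infty\rho$, the frozen threshold $K=\theta(D_p+B\rho)+\rho$, and the bound $\prod_n(1-a_{n,j})\le e^{-\delta}$. Your positive-part bookkeeping in Steps 2--3 is equivalent to the paper's direct iteration to $P_jD_p+(1-P_j)K$.

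One step needs repair. In Step 1, the middle inequality $\max\{D_n,\theta D_n+\rho\}\le D_n+a_n\rho$ is false in general: take $D_n=0$, $\rho>0$ and $a_n<1$. The cruder bound $D_{n+1}\le\max\{D_n,\theta D_n+\rho\}$ only yields the envelope $\max\{D_p,\rho/(1-\theta)\}$, not $D_p+B\rho$. Instead keep the convex weights:
$$(1-a_{n,j})\,w_j|[z_n]_j|+a_{n,j}(\theta D_n+\rho)\le D_n+a_{n,j}\rho\le D_n+a_n\rho,$$
which is exactly how the paper obtains the envelope.
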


\begin{proof}
Let $W_\omega$ be diagonal with
$\|z\|_\omega=\|W_\omega z\|_\infty$. Since $A_n$ and $W_\omega$ commute,
we may work in the ordinary sup norm after this scaling. Set
$E_n:=\|y_n-x^\star\|_\omega$ and
$a_{n,j}:=[A_n]_{jj}$. Contraction gives
\begin{equation*}
|[W_\omega(y_{n+1}-x^\star)]_j|
\le
(1-a_{n,j})|[W_\omega(y_n-x^\star)]_j|
+a_{n,j}(\theta E_n+\rho).
\end{equation*}
Hence $E_{n+1}\le E_n+\|A_n\|_\infty\rho$, which proves
\eqref{eq:asynchronous-block-envelope}.

Let $P_j:=\prod_{n=p}^{q-1}(1-a_{n,j})$. Using
\eqref{eq:asynchronous-block-envelope} in the coordinate recursion and
iterating gives
\begin{equation*}
|[W_\omega(y_q-x^\star)]_j|
\le
P_jE_p+(1-P_j)\{\theta(E_p+B\rho)+\rho\}.
\end{equation*}
Moreover, $P_j\le\exp(-\sum_{n=p}^{q-1}a_{n,j})\le e^{-\delta}$. Taking
the maximum over $j$ proves \eqref{eq:asynchronous-block-contraction}.
\end{proof}

\begin{lemma}[One-block moving-target contraction]
\label{lem:one-block-tracking}
There are $q_\eps\in(0,1)$ and $K_\eps,\bar K_\eps<\infty$ such that, for
every sufficiently large $t$, on $\mathcal E_t$,
\begin{equation}
\label{eq:one-block-tracking-recursion}
\|X_{t+m_t}-\Lambda_\eps(\Pi_{t+m_t})\|_\omega
\le
q_\eps\|X_t-\Lambda_\eps(\Pi_t)\|_\omega
+K_\eps\left(t^{-(c-b)}+\mathcal N_t\right).
\end{equation}
Moreover, for every $t\le k\le t+m_t$, on $\mathcal E_t$,
\begin{equation}
\label{eq:one-block-tracking-envelope}
\|X_k-\Lambda_\eps(\Pi_k)\|_\omega
\le
\|X_t-\Lambda_\eps(\Pi_t)\|_\omega
+\bar K_\eps\left(t^{-(c-b)}+\mathcal N_t\right).
\end{equation}
\end{lemma}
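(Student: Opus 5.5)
The plan is to freeze the target at the start of the block and apply Lemma~\ref{lem:asynchronous-block-contraction} to a shifted version of the critic recursion. Set $x^\star:=\Lambda_\eps(\Pi_t)$ and $T:=G_\eps(\Pi_t,\cdot)$. By Proposition~\ref{prop:critic-regularity}(ii), $T$ is a $\sqrt\gamma$-contraction in $\|\cdot\|_\omega$ with fixed point $x^\star$. The martingale noise cannot be placed in $r_n$ directly, because Lemma~\ref{lem:asynchronous-block-contraction} only controls $\max_n\|r_n\|_\omega$ and the one-step noise is not small in that sense. We therefore subtract the accumulated noise and define $y_k:=X_k-M_{t,k}$ for $t\le k\le t+m_t$, so that $M_{t,t}=0$ and $y_t=X_t$. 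The critic recursion $X_{k+1}=X_k+\mathsf A_k(G_\eps(\Pi_k,X_k)-X_k+\xi_{k+1})$ then becomes
\[
y_{k+1}=y_k+\mathsf A_k\bigl(T(y_k)-y_k+r_k\bigr),
\qquad
r_k:=G_\eps(\Pi_k,X_k)-G_\eps(\Pi_t,y_k)+M_{t,k}.
\]
The $M_{t,k}$ term enters because $X_k-y_k=M_{t,k}$.

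To bound the residual, we use the joint Lipschitz property in Proposition~\ref{prop:critic-regularity}(i), the $\sqrt\gamma$-Lipschitz property in $X$, and \eqref{eq:uniform-critic-block-actor}. Together these give, on $\mathcal E_t$,
\[
\|r_k\|_\omega\le L_G\|\Pi_k-\Pi_t\|_1+(1+\sqrt\gamma)\|M_{t,k}\|_\omega\le L_GD_\eps t^{-(c-b)}+2\mathcal N_t=:\rho .
\]
On $\mathcal E_t$, \eqref{eq:uniform-critic-block-steps} supplies the hypotheses $\delta=\delta_\eps$ and $B=B_\eps$. The step matrices have entries in $[0,1]$ because $\alpha_0,\eta_0\le1$. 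Lemma~\ref{lem:asynchronous-block-contraction} therefore yields
\[
\|y_{t+m_t}-\Lambda_\eps(\Pi_t)\|_\omega\le q_\eps\|X_t-\Lambda_\eps(\Pi_t)\|_\omega+(1+B_\eps)\rho,
\qquad q_\eps:=\sqrt\gamma+(1-\sqrt\gamma)e^{-\delta_\eps}<1,
\]
together with the corresponding envelope bound for every intermediate $y_k$.

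It remains to return from $y$ and the frozen target to $X$ and the moving target. We use the triangle inequality
\[
\|X_k-\Lambda_\eps(\Pi_k)\|_\omega\le\|y_k-\Lambda_\eps(\Pi_t)\|_\omega+\|M_{t,k}\|_\omega+\|\Lambda_\eps(\Pi_k)-\Lambda_\eps(\Pi_t)\|_\omega .
\]
The middle term is at most $\mathcal N_t$. The last term is at most $L_\Lambda D_\eps t^{-(c-b)}$ by Proposition~\ref{prop:critic-regularity}(iii) and \eqref{eq:uniform-critic-block-actor}. Taking $k=t+m_t$ gives \eqref{eq:one-block-tracking-recursion}, and using the envelope bound for general $k$ gives \eqref{eq:one-block-tracking-envelope}. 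The constants $K_\eps$ and $\bar K_\eps$ collect $L_G$, $L_\Lambda$, $D_\eps$ and $B_\eps$.

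The main obstacle is the handling of the noise: the change of variables to $y_k$ is what converts the cumulative martingale sum into a uniformly small perturbation. A secondary check is that \eqref{eq:uniform-critic-block-steps} and \eqref{eq:uniform-critic-block-actor} are stated for the same block $[t,t+m_t)$ on $\mathcal E_t$, which they are.
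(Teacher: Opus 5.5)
Your proof is correct and follows the paper's argument essentially step for step. It uses the same shift $y_k=X_k-M_{t,k}$, the same residual bound obtained by inserting $G_\eps(\Pi_t,X_k)$, the same application of Lemma~\ref{lem:asynchronous-block-contraction} with $\theta=\sqrt\gamma$ and block constants $\delta_\eps,B_\eps$, and the same triangle inequality back to $X_k$ and $\Lambda_\eps(\Pi_k)$. One small slip: for the shifted recursion to hold, the residual must be $r_k=G_\eps(\Pi_k,X_k)-G_\eps(\Pi_t,y_k)-M_{t,k}$ (you subtract $X_k-y_k=M_{t,k}$), not $+M_{t,k}$, though the sign does not affect the norm bound.
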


\begin{proof}
Fix a sufficiently large $t$ and work on $\mathcal E_t$. We compare the critic trajectory over the block $[t,t+m_t]$ with the asynchronous iteration associated with the policy $\Pi_t$ frozen at the beginning of the block. For $t\le k\le t+m_t$, set $Y_k:=X_k-M_{t,k}$. Since
$M_{t,k+1}=M_{t,k}+\mathsf A_k\xi_{k+1}$, using the critic recursion we have
\begin{align}
Y_{k+1} &= X_{k+1}-M_{t,k+1} \nonumber\\
&= Y_k +A_k\left(G_{\varepsilon}(\Pi_k,X_k)-X_k\right) \nonumber\\
&=Y_k+\mathsf A_k
\left(
G_\eps(\Pi_t,Y_k)-Y_k+d_{t,k}
\right), \label{eq:finite-block-frozen-recursion}
\end{align}

where $d_{t,k}:=G_\eps(\Pi_k,X_k)-G_\eps(\Pi_t,Y_k)-(X_k-Y_k)$. This perturbation accounts for two effects: the movement of the target policy from $\Pi_t$ to $\Pi_k$, and the fact that $X_k$ and $Y_k$ differ by the accumulated martingale noise $M_{t,k}$.
The joint Lipschitz property in
Proposition~\ref{prop:critic-regularity} gives $L_G<\infty$ such that
\begin{equation*}
\|G_\eps(\Pi,X)-G_\eps(\Pi',X)\|_\omega
\le L_G\|\Pi-\Pi'\|_1
\qquad\forall X\in\mathcal X.
\end{equation*}
Let $L_\Lambda$ be a Lipschitz constant of $\Lambda_\eps$ in the weighted
norm. Adding and subtracting
$G_\eps(\Pi_t,X_k)$, using the contraction of
$G_\eps(\Pi_t,\cdot)$, and recalling that $X_k-Y_k=M_{t,k}$ give
\begin{align*}
\|d_{t,k}\|_\omega
&\le
\|G_\eps(\Pi_k,X_k)-G_\eps(\Pi_t,X_k)\|_\omega\\
&\quad+
\|G_\eps(\Pi_t,X_k)-G_\eps(\Pi_t,Y_k)\|_\omega
+\|X_k-Y_k\|_\omega\\
&\le
L_G\|\Pi_k-\Pi_t\|_1
+(1+\sqrt\gamma)\|M_{t,k}\|_\omega.
\end{align*}
Thus \eqref{eq:uniform-critic-block-actor} implies
\begin{equation}
\label{eq:one-block-frozen-perturbation}
\max_{t\le k<t+m_t}\|d_{t,k}\|_\omega
\le
L_GD_\eps t^{-(c-b)}
+(1+\sqrt\gamma)\mathcal N_t.
\end{equation}

Although $Y_k$ need not belong to $\mathcal X$,
Proposition~\ref{prop:critic-regularity} shows that $G_\varepsilon(\Pi_t,\cdot)$ is defined on all of $\mathbb R^m$ and remains a $\sqrt{\gamma}$-contraction there.
Apply Lemma~\ref{lem:asynchronous-block-contraction} to
\eqref{eq:finite-block-frozen-recursion} with
$T=G_\eps(\Pi_t,\cdot)$, $x^\star=\Lambda_\eps(\Pi_t)$ and  $\theta=\sqrt{\gamma}$. On $E_t$, \eqref{eq:uniform-critic-block-steps}  provides the required cumulative-update bounds
$\delta_\eps$ and $B_\eps$. Hence
$q_\eps:=\sqrt\gamma+(1-\sqrt\gamma)e^{-\delta_\eps}<1$.
Since $Y_t=X_t$, \eqref{eq:asynchronous-block-contraction} and
\eqref{eq:one-block-frozen-perturbation} give
\begin{align*}
\|Y_{t+m_t}-\Lambda_\eps(\Pi_t)\|_\omega
&\le
q_\eps\|X_t-\Lambda_\eps(\Pi_t)\|_\omega\\
&\quad+
(1+\sqrt\gamma B_\eps)\left[L_GD_\eps t^{-(c-b)}+ (1+\sqrt\gamma)\mathcal N_t\right].
\end{align*}
Moreover,
$\|M_{t,t+m_t}\|_\omega\le\mathcal N_t$ and
$\|\Lambda_\eps(\Pi_{t+m_t})-\Lambda_\eps(\Pi_t)\|_\omega
\le L_\Lambda D_\eps t^{-(c-b)}$. Since
$X_{t+m_t}=Y_{t+m_t}+M_{t,t+m_t}$, increasing $K_\eps$ proves
\eqref{eq:one-block-tracking-recursion}.

For $t\le k\le t+m_t$, \eqref{eq:asynchronous-block-envelope} and
\eqref{eq:one-block-frozen-perturbation} give
\begin{align*}
\|Y_k-\Lambda_\eps(\Pi_t)\|_\omega
&\le
\|X_t-\Lambda_\eps(\Pi_t)\|_\omega\\
&\quad+B_\eps L_GD_\eps t^{-(c-b)}
+B_\eps(1+\sqrt\gamma)\mathcal N_t.
\end{align*}
Combining this inequality with
$X_k=Y_k+M_{t,k}$,
$\|M_{t,k}\|_\omega\le\mathcal N_t$, and
$\|\Lambda_\eps(\Pi_k)-\Lambda_\eps(\Pi_t)\|_\omega
\le L_\Lambda D_\eps t^{-(c-b)}$, and then increasing
$\bar K_\eps$, proves \eqref{eq:one-block-tracking-envelope}.
\end{proof}

\begin{proposition}[Moving-target joint online policy evaluation]
\label{prop:tracking-algorithm}
Under Assumption~\ref{ass:communication}, Algorithm~\ref{alg:bao-ac}
satisfies
\begin{equation}
\label{eq:tracking-main}
\|X_t-\Lambda_\eps(\Pi_t)\|_\infty
\longrightarrow0
\qquad\text{almost surely}.
\end{equation}
Moreover, there is $C_{\eps,b,c}<\infty$ such that, for every $t\ge2$,
\begin{equation}
\label{eq:finite-time-critic-tracking}
\E\!\left[
\|X_t-\Lambda_\eps(\Pi_t)\|_\infty
\right]
\le
C_{\eps,b,c}
\left(t^{-(c-b)}+t^{-b/2}\right).
\end{equation}
\end{proposition}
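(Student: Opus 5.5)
We will iterate the one-block recursion of Lemma~\ref{lem:one-block-tracking} along a deterministic block grid. Fix a large $t_0$ and define $\tau_0:=t_0$, $\tau_{k+1}:=\tau_k+m_{\tau_k}$ with $m_t=\lceil t^b\rceil$. Write $E_k:=\|X_{\tau_k}-\Lambda_\eps(\Pi_{\tau_k})\|_\omega$. On $\mathcal E_{\tau_k}$, \eqref{eq:one-block-tracking-recursion} gives
\[
E_{k+1}\le q_\eps E_k+K_\eps\bigl(\tau_k^{-(c-b)}+\mathcal N_{\tau_k}\bigr).
\]
Between grid points, \eqref{eq:one-block-tracking-envelope} bounds the error at any $t\in[\tau_k,\tau_{k+1}]$ by $E_k+\bar K_\eps(\tau_k^{-(c-b)}+\mathcal N_{\tau_k})$. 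Norm equivalence then passes from $\|\cdot\|_\omega$ to $\|\cdot\|_\infty$, and $E_k\le C H$ always because $X_t$ and $\Lambda_\eps(\Pi_t)$ lie in $\mathcal X$.

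\textbf{Almost-sure tracking.} By Lemma~\ref{lem:uniform-critic-blocks}, almost surely $\mathcal E_t$ holds for all large $t$, so it holds at every grid point from some random index on. By \eqref{eq:blockwise-critic-noise-tail} with fixed $x>0$, the probabilities $\bbP(\mathcal V_t\cap\{\mathcal N_t>x\})\le C_1e^{-c_1x^2t^b}$ are summable. Since $\mathcal V_t$ eventually holds a.s., Borel--Cantelli gives $\mathcal N_t\to0$ a.s., and in particular $\mathcal N_{\tau_k}\to0$. The recursion $E_{k+1}\le q_\eps E_k+\zeta_k$ with $q_\eps<1$ and $\zeta_k\to0$ yields $E_k\to0$ by the standard argument ($\limsup E_k\le\limsup\zeta_k/(1-q_\eps)$). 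The envelope bound then extends this to all $t$, giving \eqref{eq:tracking-main}.

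\textbf{Expected rate.} We take expectations in the recursion, handling the bad events separately. Write
\[
E_{k+1}\le q_\eps E_k+K_\eps\bigl(\tau_k^{-(c-b)}+\one\{\mathcal V_{\tau_k}\}\mathcal N_{\tau_k}\bigr)+CH\,\one\{\mathcal E_{\tau_k}^c\},
\]
which is valid because on $\mathcal E_{\tau_k}^c$ the trivial bound holds, and $\mathcal E\subseteq\mathcal V$ means $\mathcal N$ appears only on $\mathcal V$. By \eqref{eq:blockwise-critic-noise-expectation} and \eqref{eq:critic-block-probabilities}, $\E[E_{k+1}]\le q_\eps\E[E_k]+C(\tau_k^{-(c-b)}+\tau_k^{-b/2}+e^{-c_0\tau_k^b})$. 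Unrolling gives $\E[E_k]\le q_\eps^k CH+C\sum_j q_\eps^{k-j}r(\tau_j)$ with $r(t):=t^{-(c-b)}+t^{-b/2}$.

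\textbf{Main obstacle.} The key step is showing the geometrically weighted sum is $O(r(\tau_k))$. This requires comparability $r(\tau_j)\le C\,\rho^{k-j}r(\tau_k)$ for some $\rho<q_\eps^{-1}$. Since $\tau_{j+1}/\tau_j=1+O(\tau_j^{b-1})\to1$, the ratio $r(\tau_j)/r(\tau_{j+1})$ tends to $1$. Consequently, for $j$ beyond some deterministic index the per-step ratio is at most $\rho$, and the sum is $\le C r(\tau_k)/(1-q_\eps\rho)$. The early terms and $q_\eps^kCH$ must also be absorbed. The grid satisfies $k\gtrsim\tau_k^{1-b}$, so $q_\eps^k$ decays like $\exp(-c\tau_k^{1-b})$, faster than any power; hence these terms are $O(r(\tau_k))$.

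For general $t$, choose the $k$ with $\tau_k\le t<\tau_{k+1}$, so $t\le2\tau_k$. Apply the envelope bound in expectation, again splitting on $\mathcal E_{\tau_k}$, to get \eqref{eq:finite-time-critic-tracking} for large $t$. Small $t\ge2$ are absorbed into $C_{\eps,b,c}$ via the bound $H$.
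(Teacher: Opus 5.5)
Your proposal is correct and follows essentially the same route as the paper. It uses the same deterministic block grid $t_{r+1}=t_r+\lceil t_r^b\rceil$, and Borel--Cantelli on the block events and noise tails for almost-sure tracking. For the rate, it unrolls the expected one-block recursion $u_{r+1}\le q_\eps u_r+C(t_r^{-(c-b)}+t_r^{-b/2})$ and extends it to interior dates by the envelope bound.

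The differences are cosmetic:
\begin{itemize}
\item \textbf{Tail bound.} The paper takes $x=t_r^{-b/4}$ rather than a fixed $x$. In your version, run the fixed-$x$ argument along a countable sequence $x\downarrow0$ to conclude $\mathcal N_t\to0$.
\item \textbf{Solving the recursion.} The paper splits the geometric sum at $j=r/2$, whereas you compare $r(\tau_j)/r(\tau_{j+1})$ step by step. Both work.
\end{itemize}
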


\begin{proof}
Choose a sufficiently large deterministic $t_0$ and define consecutive
blocks by $t_{r+1}:=t_r+\lceil t_r^b\rceil$. Write
$\widetilde e_t:=\|X_t-\Lambda_\eps(\Pi_t)\|_\omega$. Since
$X_t,\Lambda_\eps(\Pi_t)\in\mathcal X$ and $\sqrt\gamma\le1$, we have
$0\le\widetilde e_t\le H$.

We first prove \eqref{eq:tracking-main}. By
Lemma~\ref{lem:uniform-critic-blocks}, almost surely
$\mathcal E_{t_r}$ occurs for every sufficiently large $r$. Moreover,
\eqref{eq:blockwise-critic-noise-tail} gives
\[
\sum_{r=0}^\infty
\bbP\left(
\mathcal V_{t_r}\cap
\{\mathcal N_{t_r}>t_r^{-b/4}\}
\right)
<\infty,
\]
because the summand is at most
$C_1\exp(-c_1t_r^{b/2})$ and $t_r\ge t_0+r$.
The Borel--Cantelli lemma and
$\mathcal E_{t_r}\subseteq\mathcal V_{t_r}$ therefore imply that,
almost surely, for every sufficiently large $r$,
$
\mathcal E_{t_r}$ occurs and $
\mathcal N_{t_r}\le t_r^{-b/4}$. 
Equation~\eqref{eq:one-block-tracking-recursion} then gives

\begin{equation*}
\widetilde e_{t_{r+1}}
\le
q_\eps\widetilde e_{t_r}
+K_\eps\left(t_r^{-(c-b)}+t_r^{-b/4}\right)
\end{equation*}
for every sufficiently large $r$. The additive term converges to zero.
Taking upper limits and using boundedness yields
$\limsup_r\widetilde e_{t_r}
\le q_\eps\limsup_r\widetilde e_{t_r}$, so
$\widetilde e_{t_r}\to0$ almost surely. On the same probability-one event,
\eqref{eq:one-block-tracking-envelope} gives
$\sup_{t_r\le k\le t_{r+1}}\widetilde e_k
\le\widetilde e_{t_r}
+\bar K_\eps(t_r^{-(c-b)}+t_r^{-b/4})$ for every sufficiently large $r$.
The right-hand side converges to zero, and norm equivalence proves
\eqref{eq:tracking-main}.

We next prove \eqref{eq:finite-time-critic-tracking}. Let
$u_r:=\E[\widetilde e_{t_r}]$. On $\mathcal E_{t_r}$, apply
\eqref{eq:one-block-tracking-recursion}; on $\mathcal E_{t_r}^c$, use
$\widetilde e_{t_{r+1}}\le H$. Since
$\mathcal E_{t_r}\subseteq\mathcal V_{t_r}$,
\eqref{eq:blockwise-critic-noise-expectation} and
\eqref{eq:critic-block-probabilities} give, for every sufficiently large
$r$,
\begin{align}
u_{r+1}
&=
\E\!\left[\one\{\mathcal E_{t_r}\}\widetilde e_{t_{r+1}}\right]
+
\E\!\left[\one\{\mathcal E_{t_r}^c\}\widetilde e_{t_{r+1}}\right]
\nonumber\\
&\le
q_\eps\E\!\left[\one\{\mathcal E_{t_r}\}\widetilde e_{t_r}\right]
+K_\eps t_r^{-(c-b)}
+K_\eps\E\!\left[\one\{\mathcal E_{t_r}\}\mathcal N_{t_r}\right]
+H\bbP(\mathcal E_{t_r}^c)
\nonumber\\
&\le
q_\eps u_r
+K_\eps t_r^{-(c-b)}
+K_\eps C_1t_r^{-b/2}
+HC_0e^{-c_0t_r^b}
\nonumber\\
&\le
q_\eps u_r
+C_2\left(t_r^{-(c-b)}+t_r^{-b/2}\right).
\label{eq:finite-time-expected-block-recursion}
\end{align}

We solve \eqref{eq:finite-time-expected-block-recursion}. For every
sufficiently large $r$, $t_{r+1}\le2t_r$. The mean-value theorem and
$t_r^b\le t_{r+1}-t_r\le2t_r^b$ therefore give
$(1-b)2^{-b}
\le t_{r+1}^{1-b}-t_r^{1-b}\le2(1-b)$.
After summation, $t_r^{1-b}$ is bounded above and below by positive
multiples of $r$. In particular, there is $d>0$ such that
$t_j\ge d t_r$ whenever $r/2\le j\le r$.

Iterating \eqref{eq:finite-time-expected-block-recursion} gives
\begin{equation*}
u_r
\le
q_\eps^{r-r_0}u_{r_0}
+C_2\sum_{j=r_0}^{r-1}q_\eps^{r-1-j}
\left(t_j^{-(c-b)}+t_j^{-b/2}\right).
\end{equation*}
For $j\ge r/2$, the comparison between $t_j$ and $t_r$ bounds the
corresponding sum by a constant times
$t_r^{-(c-b)}+t_r^{-b/2}$. For $j<r/2$, the geometric coefficient is at
most $q_\eps^{r/2-1}$, while the number of terms is at most $r$ and the
remaining factors are bounded. Since $t_r$ grows polynomially in $r$,
$r q_\eps^{r/2-1}$ decreases faster than either power of $t_r$. Therefore,
after increasing the constant,
\begin{equation}
\label{eq:finite-time-tracking-block-endpoints}
u_r
\le
C_3\left(t_r^{-(c-b)}+t_r^{-b/2}\right).
\end{equation}

Fix $t_r\le k\le t_{r+1}$. On $\mathcal E_{t_r}$, apply
\eqref{eq:one-block-tracking-envelope}; on $\mathcal E_{t_r}^c$, use
$\widetilde e_k\le H$. Since
$\mathcal E_{t_r}\subseteq\mathcal V_{t_r}$,
\begin{align*}
\E[\widetilde e_k]
&\le
u_r
+\bar K_\eps t_r^{-(c-b)}
+\bar K_\eps
\E\!\left[\one\{\mathcal E_{t_r}\}\mathcal N_{t_r}\right]
+H\bbP(\mathcal E_{t_r}^c)\\
&\le
u_r
+\bar K_\eps t_r^{-(c-b)}
+\bar K_\eps C_1t_r^{-b/2}
+HC_0e^{-c_0t_r^b}\\
&\le
C_4\left(t_r^{-(c-b)}+t_r^{-b/2}\right),
\end{align*}
where the second inequality uses
\eqref{eq:blockwise-critic-noise-expectation} and
\eqref{eq:critic-block-probabilities}, and the last inequality uses
\eqref{eq:finite-time-tracking-block-endpoints}. Since
$t_r\le k\le t_r+2t_r^b\le2t_r$ for every sufficiently large $r$, the
same estimate holds with $k$ in place of $t_r$. Finally,
$\|z\|_\infty\le\gamma^{-1/2}\|z\|_\omega$. Increasing the constant to
cover the finitely many initial dates proves
\eqref{eq:finite-time-critic-tracking}.
\end{proof}

\subsection{Actor regret and the AMCR bound}

Under Assumption~\ref{ass:communication},
\eqref{eq:critic-global-frequency} implies that $t_q(s)<\infty$ for every
state $s$ and every $q\ge1$ almost surely. Each $t_q(s)$ is an
$(\mathcal H_t)_{t\ge1}$-stopping time. For any almost surely finite,
integer-valued stopping time $\tau$, let
\begin{equation*}
\mathcal H_\tau
:=
\left\{
A\in\mathcal F:
A\cap\{\tau=t\}\in\mathcal H_t
\ \text{for every }t\ge1
\right\}
\end{equation*}
denote the corresponding stopped sigma-field.\footnote{We use the following
elementary identity. If $\tau$ is an almost surely finite
$(\mathcal H_t)_{t\ge1}$-stopping time, $Y_t$ is bounded, and $Y_t$ is
$\mathcal H_{t+1}$-measurable for every $t$, then
\begin{equation*}
\E[Y_\tau\mid\mathcal H_\tau]
=
\sum_{t=1}^{\infty}
\one\{\tau=t\}\E[Y_t\mid\mathcal H_t].
\end{equation*}}

For a fixed state $s$, write
$\widehat g_{i,q}^s:=
Q_{i,t_q(s)}\bigl(s,(\cdot,a_{-i,t_q(s)})\bigr)$ and
$\mathcal H_{s,q}:=\mathcal H_{t_q(s)}$. Define
\begin{equation}
\label{eq:observed-gain-md}
g_{i,q}^{p,s}
:=
\E[\widehat g_{i,q}^s\mid\mathcal H_{s,q}],
\qquad
\Delta_{i,q}^s
:=
\widehat g_{i,q}^s-g_{i,q}^{p,s}.
\end{equation}
Applying the stopped-time identity above coordinatewise and using
\eqref{eq:behavior-sampling} gives, for every $a_i\in A_i$,
\begin{equation}
\label{eq:behavior-law-gain-at-visits}
g_{i,q}^{p,s}(a_i)
=
\sum_{a_{-i}}
p_{-i,t_q(s)}(a_{-i}\mid s)
Q_{i,t_q(s)}(s,(a_i,a_{-i})).
\end{equation}

\begin{proposition}[Observed-feedback regret]
\label{prop:observed-feedback-regret}
For every player $i$ and state $s$, there is a nonnegative process
$\mathcal M_{i,s}(n)$, $n\ge0$, with $\mathcal M_{i,s}(0)=0$, such that
$\mathcal M_{i,s}(n)/n\to0$ almost surely,
$\E[\mathcal M_{i,s}(n)]\le K_iH\sqrt n$ for every $n\ge1$, and
\begin{equation}
\label{eq:observed-feedback-regret}
R_{i,s}^{\mathrm{tar}}(n)
\le
\frac{D_{i,s}^{\mathrm A}}{\beta_{s,n}}
+C_{i,s}^{\mathrm A}\sum_{q=1}^n\beta_{s,q}
+\mathcal M_{i,s}(n)
+4H(N-1)\eps n
\end{equation}
for every $n\ge1$.
\end{proposition}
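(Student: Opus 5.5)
Fix $(i,s)$ and work on the visit clock of $s$. The admissible actor $\mathfrak A_{i,s}$ receives the observed gains $\widehat g_{i,q}^s\in[0,H]^{A_i}$ and plays $\pi_{i,t_q(s)}(\cdot\mid s)$. By \eqref{eq:actor-regret-interface}, its regret against these observed gains is at most $D_{i,s}^{\mathrm A}/\beta_{s,n}+C_{i,s}^{\mathrm A}\sum_{q\le n}\beta_{s,q}$. The plan is to pass from observed gains to behavior-law gains, and then from behavior-law gains to the target-law gains $g^{\mathrm{tar}}$ of \eqref{eq:target-law-gain}. Each transition costs a controlled error.

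\textbf{Observed gains to behavior-law gains.} For each fixed $a_i'\in A_i$, write the regret increment $\widehat g_{i,q}^s(a_i')-\langle\pi_{i,t_q(s)},\widehat g_{i,q}^s\rangle$. Decompose it via \eqref{eq:observed-gain-md} as the corresponding expression in $g_{i,q}^{p,s}$ plus $\Delta_{i,q}^s(a_i')-\langle\pi_{i,t_q(s)},\Delta_{i,q}^s\rangle$.

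Since $\pi_{i,t_q(s)}$ is $\mathcal H_{s,q}$-measurable, the second piece is a martingale difference with respect to $(\mathcal H_{s,q+1})_q$. The stopped-time identity in the footnote supplies the conditional mean. Measurability with respect to the next visit's $\sigma$-field holds because $\mathcal H_{t_q(s)+1}\subseteq\mathcal H_{t_{q+1}(s)}$. Each increment is bounded by $2H$.

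Let $Z_{a_i'}(n):=\sum_{q\le n}[\langle\pi,\Delta\rangle-\Delta(a_i')]$ and define $\mathcal M_{i,s}(n):=\max_{a_i'}(Z_{a_i'}(n))_+$. Then
\[
\max_{a'}\sum_q[g^{p}(a')-\langle\pi,g^{p}\rangle]\le \max_{a'}\sum_q[\widehat g(a')-\langle\pi,\widehat g\rangle]+\mathcal M_{i,s}(n).
\]
The properties of $\mathcal M_{i,s}$ follow from martingale bounds:
\begin{itemize}
\item $\E[(Z_{a'}(n))_+]\le(\E Z_{a'}^2)^{1/2}\le 2H\sqrt n$ by orthogonality. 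Summing over the $|A_i|$ actions gives $\E\mathcal M_{i,s}(n)\le K_iH\sqrt n$, with, e.g., $K_i=2|A_i|$.
\item $\mathcal M_{i,s}(n)/n\to0$ a.s. by the strong law for bounded martingale differences, e.g. Azuma plus Borel--Cantelli along $n$.
\end{itemize}
If $s$ is visited only finitely often, the statement concerns only the finite $n$ actually reached. Lemma~\ref{lem:uniform-critic-blocks} gives infinitely many visits a.s. anyway, so the stopping times $t_q(s)$ are finite.

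\textbf{Behavior-law gains to target-law gains.} By \eqref{eq:behavior-law-gain-at-visits}, $g^{p,s}_{i,q}$ and $g^{\mathrm{tar}}_{i,t_q(s)}$ average the same table $Q_{i,t_q(s)}\in[0,H]$ over $p_{-i}$ and $\Pi_{-i}$ respectively. By \eqref{eq:fixed-mixing} and product-measure tensorization,
\[
\|p_{-i}(\cdot\mid s)-\Pi_{-i}(\cdot\mid s)\|_1\le 2(N-1)\eps.
\]
Hence $\|g^{p}-g^{\mathrm{tar}}\|_\infty\le 2H(N-1)\eps$ coordinatewise. This bound controls both the $g(a')$ term and the $\langle\pi,g\rangle$ term, so each round changes the regret increment by at most $4H(N-1)\eps$.

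\textbf{Assembly.} Summing over $n$ rounds, maximizing over $a_i'$, and taking positive parts (using $(x+y)_+\le x_++y$ for $y\ge0$, together with nonnegativity of the actor bound) yields \eqref{eq:observed-feedback-regret}.

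The main obstacle is the stopping-time bookkeeping: verifying the martingale structure on the visit clock and that the filtration indexing is valid. This is routine given the footnote identity. No other step is delicate.
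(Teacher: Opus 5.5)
Your proposal is correct and follows essentially the same route as the paper: pathwise admissible-actor regret on the observed gains, a visit-clock martingale correction from observed to behavior-law gains, and a $2H(N-1)\eps$ per-coordinate bias from behavior law to target law, which gives the $4H(N-1)\eps n$ term. The one difference is how $\mathcal M_{i,s}(n)$ is defined. The paper takes it to be a running maximum over prefixes $m\le n$, controlled by Doob's $L^2$ maximal inequality, so it is nondecreasing in $n$. The statement does not require this, but the proof of Theorem~\ref{thm:finite-time-AMCR} uses it to bound $\E[\mathcal M_{i,s}(N_s(T))]$ at the random index $N_s(T)$; your terminal-time $\max_{a_i'}(Z_{a_i'}(n))_+$ would need that maximal version before it could be used there.
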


\begin{proof}
Apply \eqref{eq:actor-regret-interface} pathwise to the realized vectors
$\widehat g_{i,q}^s$. To analyze the sampling error on the local clock of state $s$, let $\mathcal K_{s,0}$ be the trivial sigma-field and
set $\mathcal K_{s,q}:=\mathcal H_{t_q(s)+1}$ for $q\ge1$. 
Thus, $\mathcal H_{s,q}=\mathcal H_{t_q(s)}$ is the information available immediately before the action at the $q$-th visit to $s$, whereas $\mathcal K_{s,q}$ contains the information available after that visit. In particular, $\Delta^s_{i,q}$ is $\mathcal K_{s,q}$-measurable. Moreover, for $q\ge2$,
$t_{q-1}(s)+1\le t_q(s)$, so monotonicity of stopped sigma-fields gives $\mathcal K_{s,q-1}\subseteq\mathcal H_{s,q}$.  
The same inclusion holds for $q=1$ because $\mathcal K_{s,0}$ is trivial.
Then, by definition of  $\Delta_{i,q}^s$, we have $\E[\Delta_{i,q}^s\mid\mathcal H_{s,q}]=0$. 
Therefore,
$\E[\Delta_{i,q}^s(a_i)\mid\mathcal K_{s,q-1}]=0$ for every action $a_i$.
Since $\pi_{i,t_q(s)}(\cdot\mid s)$ is
$\mathcal H_{s,q}$-measurable, the tower property also gives
\begin{equation*}
\E\!\left[
\left\langle
\pi_{i,t_q(s)}(\cdot\mid s),\Delta_{i,q}^s
\right\rangle
\,\middle|\,
\mathcal K_{s,q-1}
\right]
=0.
\end{equation*}
Consequently, for every action $a_i$,
\begin{equation*}
\left(\sum_{q=1}^n\Delta_{i,q}^s(a_i)\right)_{n\ge0}
\quad\text{and}\quad
\left(
\sum_{q=1}^n
\left\langle
\pi_{i,t_q(s)}(\cdot\mid s),\Delta_{i,q}^s
\right\rangle
\right)_{n\ge0}
\end{equation*}
are martingales with respect to $(\mathcal K_{s,n})_{n\ge0}$. Define
\begin{equation*}
\mathcal M_{i,s}(n)
:=
\max_{1\le m\le n}
\left\{
\max_{a_i}\left|\sum_{q=1}^m\Delta_{i,q}^s(a_i)\right|
+
\left|\sum_{q=1}^m
\left\langle
\pi_{i,t_q(s)}(\cdot\mid s),\Delta_{i,q}^s
\right\rangle\right|
\right\}.
\end{equation*}
Thus $\mathcal M_{i,s}(n)$ simultaneously controls, uniformly over all prefixes $m\le n$, both the coordinatewise sampling errors for every fixed action and the sampling error of the actions actually mixed by the actor.
All martingale increments are bounded. Hence, for each of the finitely
many martingales $Y$, the series
$\sum_{n\ge1}n^{-2}\E[(Y_n-Y_{n-1})^2]$ is finite, and the martingale
strong law gives $Y_n/n\to0$. It follows that
$\max_{m\le n}|Y_m|/n\to0$: after a fixed initial segment, use
$|Y_m|=o(m)$, while the initial segment divided by $n$ vanishes. Therefore,
$\mathcal M_{i,s}(n)/n\to0$ almost surely. The terminal second moment of
each martingale is at most $nH^2$. Applying Doob's $L^2$ maximal inequality
to these finitely many martingales and summing the resulting bounds gives
$\E[\mathcal M_{i,s}(n)]\le K_iH\sqrt n$.

It remains to compare the conditional mean in \eqref{eq:observed-gain-md}
with the target-law vector in \eqref{eq:target-law-gain}. By
\eqref{eq:behavior-law-gain-at-visits}, total-variation tensorization, and
\eqref{eq:fixed-mixing},
$\TV(p_{-i,t_q(s)}(\cdot\mid s),
\Pi_{-i,t_q(s)}(\cdot\mid s))\le(N-1)\eps$. Since every gain coordinate
lies in $[0,H]$,
\begin{equation}
\label{eq:behavior-target-gain-bias}
\|g_{i,q}^{p,s}-g_{i,t_q(s)}^{\mathrm{tar}}\|_\infty
\le2H(N-1)\eps.
\end{equation}
For every action $a_i$, \eqref{eq:observed-gain-md} gives
\begin{align}
&\sum_{q=1}^n
\left[
 g_{i,t_q(s)}^{\mathrm{tar}}(a_i)
 -\left\langle\pi_{i,t_q(s)}(\cdot\mid s),
 g_{i,t_q(s)}^{\mathrm{tar}}\right\rangle
\right]
\notag\\
&=\sum_{q=1}^n
\left[
 \widehat g_{i,q}^s(a_i)
 -\left\langle\pi_{i,t_q(s)}(\cdot\mid s),
 \widehat g_{i,q}^s\right\rangle
\right]
\notag\\
&\quad-\sum_{q=1}^n
\left[
 \Delta_{i,q}^s(a_i)
 -\left\langle\pi_{i,t_q(s)}(\cdot\mid s),
 \Delta_{i,q}^s\right\rangle
\right]
\notag\\
&\quad+\sum_{q=1}^n
\left[
 (g_{i,t_q(s)}^{\mathrm{tar}}-g_{i,q}^{p,s})(a_i)
 -\left\langle\pi_{i,t_q(s)}(\cdot\mid s),
 g_{i,t_q(s)}^{\mathrm{tar}}-g_{i,q}^{p,s}
 \right\rangle
\right].
\label{eq:observed-to-target-regret-decomposition}
\end{align}
The second sum on the right-hand side of
\eqref{eq:observed-to-target-regret-decomposition} is bounded in absolute
value by $\mathcal M_{i,s}(n)$. By
\eqref{eq:behavior-target-gain-bias}, the third sum is at most
$4H(N-1)\eps n$. Maximizing over $a_i$, taking the positive part, and
applying \eqref{eq:actor-regret-interface} to the first sum proves
\eqref{eq:observed-feedback-regret}.
\end{proof}

For reference, one admissible choice of constants in
Theorem~\ref{thm:algorithm-AMCR} is
\begin{equation}
\label{eq:kappa-am}
C_{\mathrm{reg}}:=4HN(N-1),
\qquad
\kappa_{\mathrm{AM}}
:=2NC_\Lambda^\infty
+4HN(N-1)
+2N^2|\cS|L_{\mathcal A}.
\end{equation}

\begin{proof}[Proof of Theorem~\ref{thm:algorithm-AMCR}]
Proposition~\ref{prop:tracking-algorithm} proves
\eqref{eq:main-critic-tracking}.

For $n\ge1$, the deterministic actor terms in
\eqref{eq:observed-feedback-regret} satisfy
$\beta_{s,n}^{-1}=O(n^c)$ and
$\sum_{q\le n}\beta_{s,q}=O(n^{1-c})$, so both are $o(n)$.
Equation~\eqref{eq:critic-global-frequency} gives $N_s(T)\to\infty$
for every state. Hence, for each $(i,s)$, the sum of the two deterministic
terms and $\mathcal M_{i,s}(N_s(T))$ is $o(N_s(T))$ almost surely. Since
$N_s(T)/T\le1$, each such contribution is $o(T)$. Summing over the finitely
many player--state blocks and using $\sum_sN_s(T)=T$ in the exploration
term, Proposition~\ref{prop:observed-feedback-regret} yields
\begin{equation}
\label{eq:aggregate-target-regret-bound}
\limsup_{T\to\infty}
\frac1T\sum_{s,i}R_{i,s}^{\mathrm{tar}}(N_s(T))
\le4HN(N-1)\eps
\qquad\text{almost surely}.
\end{equation}
This proves \eqref{eq:main-target-regret} with $C_{\mathrm{reg}}$ in
\eqref{eq:kappa-am}.

Moreover, Proposition~\ref{prop:critic-regularity} gives
$\|X_t-\Lambda(\Pi_t)\|_\infty
\le\|X_t-\Lambda_\eps(\Pi_t)\|_\infty+C_\Lambda^\infty\eps$.
Equation~\eqref{eq:main-critic-tracking} and Ces\`aro's lemma yield
\begin{equation}
\label{eq:aggregate-canonical-tracking}
\limsup_{T\to\infty}
\frac{2N}{T}\sum_{t=1}^T
\|X_t-\Lambda(\Pi_t)\|_\infty
\le2NC_\Lambda^\infty\eps.
\end{equation}
Substituting \eqref{eq:aggregate-target-regret-bound} and
\eqref{eq:aggregate-canonical-tracking} into
Theorem~\ref{thm:AMCR-regret-tracking} gives
\begin{equation*}
\limsup_{T\to\infty}
\mathcal G_T^{\mathrm{AMCR}}(\Pi_{1:T})
\le
\left[2NC_\Lambda^\infty+4HN(N-1)\right]\eps.
\end{equation*}

For the behavior path, Lemma~\ref{lem:AMCR-advantage-lipschitz} implies
\begin{equation*}
\mathcal G_T^{\mathrm{AMCR}}(p_{1:T})
\le
\mathcal G_T^{\mathrm{AMCR}}(\Pi_{1:T})
+
\frac{NL_{\mathcal A}}T\sum_{t=1}^T\|p_t-\Pi_t\|_1.
\end{equation*}
Under fixed mixing, $\|p_t-\Pi_t\|_1\le2N|\cS|\eps$. Hence the behavior
path satisfies \eqref{eq:algorithm-pathwise-AMCR-main} with the constant in
\eqref{eq:kappa-am}.
\end{proof}

\subsection{Finite-time AMCR rate}
\label{app:finite-time-AMCR-rate}

The quantitative conclusion of Proposition~\ref{prop:tracking-algorithm}
controls the evaluation term in Theorem~\ref{thm:AMCR-regret-tracking}. It
remains to retain the finite-horizon terms in the actor-regret estimate.

\begin{proof}[Proof of Theorem~\ref{thm:finite-time-AMCR}]
For $n\ge1$, Proposition~\ref{prop:observed-feedback-regret},
$\beta_{s,n}=\beta_0n^{-c}$, and
$\sum_{q=1}^nq^{-c}\le n^{1-c}/(1-c)$ give, pathwise,
\begin{equation*}
R_{i,s}^{\mathrm{tar}}(n)
\le
\frac{D_{i,s}^{\mathrm A}}{\beta_0}n^c
+
\frac{C_{i,s}^{\mathrm A}\beta_0}{1-c}n^{1-c}
+
\mathcal M_{i,s}(n)
+
4H(N-1)\eps n.
\end{equation*}
Together with $R_{i,s}^{\mathrm{tar}}(0)=\mathcal M_{i,s}(0)=0$, this bound
holds at $n=0$ and may be evaluated at $n=N_s(T)$. Monotonicity of
$\mathcal M_{i,s}$ and $N_s(T)\le T$ give
$\E[\mathcal M_{i,s}(N_s(T))]\le\E[\mathcal M_{i,s}(T)]
\le K_iH\sqrt T$. Concavity and $\sum_sN_s(T)=T$ imply
$\sum_sN_s(T)^c\le|\cS|^{1-c}T^c$ and
$\sum_sN_s(T)^{1-c}\le|\cS|^cT^{1-c}$. Consequently,
\begin{align}
\frac1T
\sum_{s,i}
\E\!\left[
R_{i,s}^{\mathrm{tar}}(N_s(T))
\right]
&\le
4HN(N-1)\eps
+
C_{\mathrm A,1}T^{-(1-c)}
\nonumber\\
&\quad+
C_{\mathrm A,2}T^{-c}
+
C_{\mathrm A,3}T^{-1/2},
\label{eq:finite-time-aggregate-regret}
\end{align}
where
$C_{\mathrm A,1}:=|\cS|^{1-c}\beta_0^{-1}
\sum_{i=1}^N\max_{s\in\cS}D_{i,s}^{\mathrm A}$,
$C_{\mathrm A,2}:=\beta_0|\cS|^c(1-c)^{-1}
\sum_{i=1}^N\max_{s\in\cS}C_{i,s}^{\mathrm A}$, and
$C_{\mathrm A,3}:=H|\cS|\sum_{i=1}^NK_i$.

Let $e_t^0:=\|X_t-\Lambda(\Pi_t)\|_\infty$.
Proposition~\ref{prop:critic-regularity} and
\eqref{eq:finite-time-critic-tracking} give, for $t\ge2$,
$\E[e_t^0]\le C_\Lambda^\infty\eps
+C_{\eps,b,c}(t^{-(c-b)}+t^{-b/2})$, while $\E[e_1^0]\le H$.
Since $0<c-b<1$ and $0<b/2<1$, applying the integral comparison to the
two sums yields
\begin{equation}
\label{eq:finite-time-average-tracking}
\frac{2N}{T}\sum_{t=1}^T\E[e_t^0]
\le
2NC_\Lambda^\infty\eps
+
C_{\eps,b,c}^{\mathrm{tr}}
\left(T^{-(c-b)}+T^{-b/2}\right)
\end{equation}
for a finite constant $C_{\eps,b,c}^{\mathrm{tr}}$, after absorbing
$2NH/T$.

Theorem~\ref{thm:AMCR-regret-tracking} is pathwise. Taking expectations,
substituting \eqref{eq:finite-time-aggregate-regret} and
\eqref{eq:finite-time-average-tracking}, and increasing
$C_{\eps,b,c}$ to absorb the finite actor constants gives
\begin{align*}
\E\!\left[
\mathcal G_T^{\mathrm{AMCR}}(\Pi_{1:T})
\right]
&\le
\left[4HN(N-1)+2NC_\Lambda^\infty\right]\eps\\
&\quad+
C_{\eps,b,c}
\left(
T^{-(1-c)}+T^{-(c-b)}+T^{-b/2}
\right).
\end{align*}
The terms $T^{-c}$ and $T^{-1/2}$ in
\eqref{eq:finite-time-aggregate-regret} are absorbed into
$T^{-(1-c)}$ because $c>1/2$.

For the behavior path, Lemma~\ref{lem:AMCR-advantage-lipschitz} and fixed
mixing give, pathwise,
\begin{equation*}
\mathcal G_T^{\mathrm{AMCR}}(p_{1:T})
\le
\mathcal G_T^{\mathrm{AMCR}}(\Pi_{1:T})
+
2N^2|\cS|L_{\mathcal A}\eps.
\end{equation*}
Consequently,
\begin{equation}
\label{eq:finite-time-AMCR-general}
\E\!\left[
\mathcal G_T^{\mathrm{AMCR}}(p_{1:T})
\right]
\le
\kappa_{\mathrm{AM}}\eps
+
C_{\eps,b,c}
\left(
T^{-(1-c)}+T^{-(c-b)}+T^{-b/2}
\right),
\end{equation}
where $\kappa_{\mathrm{AM}}$ is defined in \eqref{eq:kappa-am}.

For arbitrary admissible $b,c$, the polynomial exponent in
\eqref{eq:finite-time-AMCR-general} is
$\min\{1-c,c-b,b/2\}$. Moreover,
$\min\{1-c,c-b\}\le[(1-c)+(c-b)]/2=(1-b)/2<1/4$, where the strict
inequality follows from $b>1/2$. Thus $1/4$ is the supremum of the
obtainable exponents under the power restrictions of
Algorithm~\ref{alg:bao-ac}.

Fix $r\in(0,1/4)$ and choose $b=1-2r$ and $c=1-r$. Then
$1/2<b<c<1$, $1-c=c-b=r$, and $b/2=1/2-r>r$. Substitution into
\eqref{eq:finite-time-AMCR-general}, followed by an increase of the constant
to cover the finitely many initial horizons, proves
\eqref{eq:finite-time-AMCR-behavior}.
\end{proof}

\section{Proofs for Section~\ref{sec:reinforce-AMCR}}
\label{app:reinforce-AMCR}

\begin{proof}[Proof of Theorem~\ref{thm:reinforce-AMCR}]
Let $\mathscr F_e$ be the $\sigma$-field available immediately before episode $e$. In particular, the target and behavior profiles used during episode $e$ are $\mathscr F_e$-measurable. Conditional on $\mathscr F_e$, the initial state $s_{e,0}$ is drawn from $\rho_0$, and independently the terminal index $L_e$ is drawn according to \eqref{eq:reinforce-horizon}. The episode is then generated using the fixed behavior profile $p_e$ and the transition kernel $P$. The value of $L_e$ is not revealed to the players; at each stage they only observe whether the episode has terminated. Let $\mathscr F_{e+1}$ contain $\mathscr F_e$, the complete record of episode $e$, and the update in \eqref{eq:reinforce-update}. The draws $(s_{e,0},L_e)$ are independent across episodes, and the random draws used for initialization and termination are independent of those used for actions and state transitions.

We next introduce a within-episode filtration that is indexed by deterministic stage numbers even though the episode length is random. For this purpose, enlarge the state, action, and payoff spaces by a dummy value and set all corresponding variables after termination equal to that value. For each deterministic $k\ge0$, let $\mathscr H_{e,k}$ be the $\sigma$-field generated by $\mathscr F_e$, the extended episode record before stage $k$, the extended state at stage $k$, and the survival indicators $\one{L_e\ge r}$ for $0\le r\le k$. Thus, on the event ${L_e\ge k}$, $\mathscr H_{e,k}$ represents exactly the information available upon arrival at $s_{e,k}$ and before the stage-$k$ action. In particular, it reveals that the episode has survived through stage $k$, but not the value of $L_e$ or any additional information about the remaining horizon. This distinction will allow us to use the memoryless property of \eqref{eq:reinforce-horizon} conditional on $\mathscr H_{e,k}$. Write
$N_{e,s}:=\sum_{k=0}^{L_e}\one\{s_{e,k}=s\}$. Finally, $\widehat G_{i,e}$ is square-integrable. Indeed, $p_{i,e}(a_i\mid s)\ge\varepsilon_e/|A_i|$, so for fixed $e$ every score term in \eqref{eq:reinforce-estimator} is bounded by a constant times $\varepsilon_e^{-1}$, while the return and the number of score terms are both bounded by $L_e+1$. Since the geometric random variable $L_e$ has finite moments of every order, $\mathbb E[|\widehat G_{i,e}|_2^2]<\infty$.

Fix player $i$, state $s$, and action $b_i\in A_i$. In this proof, let
$d_{i,e}^{s,b_i}:=e_{b_i}-\pi_{i,e}(\cdot\mid s)$ and
\begin{equation*}
q_{i,e}^s(a_i)
:=
\sum_{a_{-i}\in A_{-i}}
 p_{-i,e}(a_{-i}\mid s)
 Q_i^{p_e}(s,(a_i,a_{-i})).
\end{equation*}
The coordinates of $d_{i,e}^{s,b_i}$ sum to zero. Under the direct
parameterization,
\begin{align*}
&\left\langle
 d_{i,e}^{s,b_i},
 \nabla_{\pi_i(\cdot\mid s)}
 \log p_{i,e}(a_{i,e,k}\mid s_{e,k})
\right\rangle
\\
&\qquad=
(1-\varepsilon_e)\one\{s_{e,k}=s\}
\frac{d_{i,e}^{s,b_i}(a_{i,e,k})}
{p_{i,e}(a_{i,e,k}\mid s)}.
\end{align*}
Therefore, on $\{L_e\ge k\}$,
\begin{align*}
&\E\!\left[
\left\langle
 d_{i,e}^{s,b_i},
 \nabla_{\pi_i(\cdot\mid s)}
 \log p_{i,e}(a_{i,e,k}\mid s_{e,k})
\right\rangle
\,\middle|\,
\mathscr H_{e,k}
\right]
\\
&\qquad=
(1-\varepsilon_e)\one\{s_{e,k}=s\}
\sum_{a_i\in A_i}d_{i,e}^{s,b_i}(a_i)
=0.
\end{align*}
Every payoff realized before stage $k$ is $\mathscr H_{e,k}$-measurable, so
its product with this directional score has conditional expectation zero.

Conditional on $\mathscr H_{e,k}$ and $a_{e,k}=a$, the memoryless property
of \eqref{eq:reinforce-horizon} gives, on $\{L_e\ge k\}$,
\begin{equation*}
\E\!\left[
\sum_{\ell=k}^{L_e}u_i(s_{e,\ell},a_{e,\ell})
\,\middle|\,
\mathscr H_{e,k},\ a_{e,k}=a
\right]
=
Q_i^{p_e}(s_{e,k},a).
\end{equation*}
To make the resulting directional-unbiasedness identity explicit, write
\[
Z_{e,k}
:=
\left\langle
d^{s,b_i}_{i,e},
\nabla_{\pi_i(\cdot\mid s)}
\log p_{i,e}(a_{i,e,k}\mid s_{e,k})
\right\rangle .
\]
Expanding \eqref{eq:reinforce-estimator} over the score times gives
\[
\left\langle
d^{s,b_i}_{i,e},
\widehat G^s_{i,e}
\right\rangle
=
\sum_{k=0}^{L_e}
Z_{e,k}
\left(
\sum_{\ell<k}u_i(s_{e,\ell},a_{e,\ell})
+
\sum_{\ell=k}^{L_e}u_i(s_{e,\ell},a_{e,\ell})
\right).
\]
Conditional on $\mathscr H_{e,k}$, the first sum is measurable and
$\E[Z_{e,k}\mid\mathscr H_{e,k}]=0$, so the contribution of rewards
realized before stage $k$ vanishes. For the remaining rewards, conditioning
additionally on $a_{e,k}=a$ and using the memoryless identity above gives
\[
\E\left[
Z_{e,k}
\sum_{\ell=k}^{L_e}u_i(s_{e,\ell},a_{e,\ell})
\,\middle|\,
\mathscr H_{e,k}
\right]
=
(1-\varepsilon_e)
\one\{L_e\ge k,\ s_{e,k}=s\}
\left\langle
d^{s,b_i}_{i,e},
q^s_{i,e}
\right\rangle .
\]
Indeed, conditional on $\mathscr H_{e,k}$ and $s_{e,k}=s$, averaging the
stage-$k$ action under $p_e$ yields
\[
\sum_{a_i,a_{-i}}
p_{i,e}(a_i\mid s)
p_{-i,e}(a_{-i}\mid s)
\frac{(1-\varepsilon_e)d^{s,b_i}_{i,e}(a_i)}
     {p_{i,e}(a_i\mid s)}
Q_i^{p_e}(s,(a_i,a_{-i}))
=
(1-\varepsilon_e)
\left\langle
d^{s,b_i}_{i,e},
q^s_{i,e}
\right\rangle .
\]
Summing over $k$ and conditioning on $\mathscr F_e$, while noting that
$d^{s,b_i}_{i,e}$ and $q^s_{i,e}$ are $\mathscr F_e$-measurable, gives
\begin{equation}
\label{eq:reinforce-directional-identity}
\E\!\left[
\left\langle
 d_{i,e}^{s,b_i},
 \widehat G_{i,e}^{\,s}
\right\rangle
\,\middle|\,
\mathscr F_e
\right]
=
\E[N_{e,s}\mid\mathscr F_e]
(1-\varepsilon_e)
\left\langle d_{i,e}^{s,b_i},q_{i,e}^s\right\rangle,
\end{equation}
where $\widehat G_{i,e}^{\,s}$ is the state-$s$ block of
$\widehat G_{i,e}$.

The second equation in \eqref{eq:canonical-bellman-system} gives
$V_i^{p_e}(s)=\langle p_{i,e}(\cdot\mid s),q_{i,e}^s\rangle$. Hence
\begin{align}
\mathcal A_i(p_e;s,b_i)
&=
q_{i,e}^s(b_i)
-
\left\langle p_{i,e}(\cdot\mid s),q_{i,e}^s\right\rangle
\notag\\
&\le
(1-\varepsilon_e)
\left\langle d_{i,e}^{s,b_i},q_{i,e}^s\right\rangle
+\varepsilon_e H.
\label{eq:reinforce-exploration-comparison}
\end{align}
Indeed, the difference between the two sides before the term
$\varepsilon_e H$ is
$\varepsilon_e[q_{i,e}^s(b_i)-\langle\mathsf U_i,q_{i,e}^s\rangle]\le
\varepsilon_e H$.

Projection in \eqref{eq:reinforce-update} separates across states. The
standard projected-gradient inequality with nonincreasing steps gives
\begin{equation*}
\sum_{e=1}^E
\left\langle
 d_{i,e}^{s,b_i},
 \widehat G_{i,e}^{\,s}
\right\rangle
\le
\frac1{\eta_E}
+
\frac12\sum_{e=1}^E
\eta_e\|\widehat G_{i,e}^{\,s}\|_2^2;
\end{equation*}
see \cite{Zinkevich2003}. The first term is at most $1/\eta_E$ because the
squared Euclidean diameter of a simplex is at most two.

Define
\begin{align*}
\Delta M_{i,s,b_i,e}
:={}&
N_{e,s}(1-\varepsilon_e)
\left\langle d_{i,e}^{s,b_i},q_{i,e}^s\right\rangle
-
\left\langle d_{i,e}^{s,b_i},\widehat G_{i,e}^{\,s}\right\rangle,
\\
M_{i,s,b_i}(E)
:={}&\sum_{e=1}^E\Delta M_{i,s,b_i,e}.
\end{align*}
Each increment is $\mathscr F_{e+1}$-measurable, and
equation~\eqref{eq:reinforce-directional-identity} gives
$\E[\Delta M_{i,s,b_i,e}\mid\mathscr F_e]=0$. Thus
$(M_{i,s,b_i}(E),\mathscr F_{E+1})_{E\ge0}$ is a martingale. Combining the
projected-gradient inequality with
\eqref{eq:reinforce-exploration-comparison} gives
\begin{align*}
\sum_{e=1}^E N_{e,s}\mathcal A_i(p_e;s,b_i)
\le{}&
\frac1{\eta_E}
+
\frac12\sum_{e=1}^E
\eta_e\|\widehat G_{i,e}^{\,s}\|_2^2
+
M_{i,s,b_i}(E)
+
H\sum_{e=1}^E\varepsilon_eN_{e,s}.
\end{align*}
Maximizing over $b_i$, taking positive parts, and summing over players and
states yields
\begin{align}
\mathcal G_{K_E}^{\mathrm{AMCR}}
\le{}&
\frac{N|\cS|}{K_E\eta_E}
+
\frac1{2K_E}
\sum_{e=1}^E\eta_e
\sum_{i=1}^N\|\widehat G_{i,e}\|_2^2
\notag\\
&+
\frac1{K_E}
\sum_{i=1}^N\sum_{s\in\cS}
\max_{b_i\in A_i}|M_{i,s,b_i}(E)|
+
\frac{NH}{K_E}
\sum_{e=1}^E\varepsilon_e(L_e+1).
\label{eq:reinforce-master-bound}
\end{align}

Conditional on $\mathscr F_e$, the profile and exploration level are fixed.
Lemma~D.4 in \cite{LeonardosEtAl2022} applies to
\eqref{eq:reinforce-estimator} with its exploration parameter
$\alpha=\varepsilon_e$. Using $H=(1-\gamma)^{-1}$, it gives
\begin{equation}
\label{eq:reinforce-second-moment}
\E\!\left[
\|\widehat G_{i,e}\|_2^2
\,\middle|\,
\mathscr F_e
\right]
\le
\frac{24A_{\max}^2H^4}{\varepsilon_e}.
\end{equation}

We next bound the martingale increments. Since
$|(1-\varepsilon_e)\langle d_{i,e}^{s,b_i},q_{i,e}^s\rangle|\le H$,
\begin{equation*}
\sum_{s\in\cS}\sum_{b_i\in A_i}
N_{e,s}^2(1-\varepsilon_e)^2
\left\langle d_{i,e}^{s,b_i},q_{i,e}^s\right\rangle^2
\le
A_{\max}H^2(L_e+1)^2.
\end{equation*}
Also, Cauchy--Schwarz and
$\sum_{b_i\in A_i}\|e_{b_i}-\pi_{i,e}(\cdot\mid s)\|_2^2\le2|A_i|$
give, for every state block,
\begin{equation*}
\sum_{b_i\in A_i}
\left\langle
 e_{b_i}-\pi_{i,e}(\cdot\mid s),
 \widehat G_{i,e}^{\,s}
\right\rangle^2
\le
2|A_i|\|\widehat G_{i,e}^{\,s}\|_2^2.
\end{equation*}
Using $(x-y)^2\le2x^2+2y^2$,
$\E[(L_e+1)^2]=(1+\gamma)H^2\le2H^2$, and
\eqref{eq:reinforce-second-moment}, we obtain
\begin{equation}
\label{eq:reinforce-martingale-second-moment}
\sum_{s\in\cS}\sum_{b_i\in A_i}
\E\!\left[
(\Delta M_{i,s,b_i,e})^2
\,\middle|\,
\mathscr F_e
\right]
\le
\frac{100A_{\max}^3H^4}{\varepsilon_e}.
\end{equation}
Moreover,
\begin{equation*}
\left(
\sum_{i=1}^N\sum_{s\in\cS}
\max_{b_i\in A_i}|M_{i,s,b_i}(E)|
\right)^2
\le
N|\cS|
\sum_{i=1}^N\sum_{s\in\cS}\sum_{b_i\in A_i}
M_{i,s,b_i}(E)^2.
\end{equation*}
Since martingale increments from different episodes are orthogonal,
\eqref{eq:reinforce-martingale-second-moment} implies
\begin{equation}
\label{eq:reinforce-martingale-aggregate}
\E\!\left[
\left(
\sum_{i=1}^N\sum_{s\in\cS}
\max_{b_i\in A_i}|M_{i,s,b_i}(E)|
\right)^2
\right]
\le
100N^2|\cS|A_{\max}^3H^4
\sum_{e=1}^E\frac1{\varepsilon_e}.
\end{equation}

Let $Y_e:=L_e+1$. The variables $(Y_e)$ are independent geometric random
variables on $\{1,2,\ldots\}$ with mean $H$, and
$K_E=\sum_{e=1}^E Y_e$. The variable $K_E$ has probability mass
$\bbP(K_E=k)=\binom{k-1}{E-1}(1-\gamma)^E\gamma^{k-E}$ for $k\ge E$.
For $E\ge3$, the identities
\begin{equation*}
\frac1{k-1}\binom{k-1}{E-1}
=
\frac1{E-1}\binom{k-2}{E-2},
\qquad
\frac1{(k-1)(k-2)}\binom{k-1}{E-1}
=
\frac1{(E-1)(E-2)}\binom{k-3}{E-3}
\end{equation*}
and the binomial series give
$\E[(K_E-1)^{-1}]=[H(E-1)]^{-1}$ and
$\E[((K_E-1)(K_E-2))^{-1}]=[H^2(E-1)(E-2)]^{-1}$.
Therefore, for $E\ge3$,
\begin{equation}
\label{eq:reinforce-inverse-horizon}
\E\!\left[\frac1{K_E}\right]
\le
\frac1{H(E-1)},
\qquad
\E\!\left[\frac1{K_E^2}\right]
\le
\frac1{H^2(E-1)(E-2)}.
\end{equation}

The estimator from episode $e$ depends on $Y_e$, so it cannot be separated
directly from $K_E^{-1}$. Let $m:=\lfloor E/2\rfloor$. If $e\le m$, the
sum $\sum_{j=m+1}^E Y_j$ is independent of $\widehat G_{i,e}$; hence
\begin{align*}
\E\!\left[
\frac{\|\widehat G_{i,e}\|_2^2}{K_E}
\right]
&\le
\E[\|\widehat G_{i,e}\|_2^2]
\E\!\left[\frac1{\sum_{j=m+1}^E Y_j}\right].
\end{align*}
If $e>m$, the sum $\sum_{j=1}^mY_j$ is $\mathscr F_e$-measurable, so
\begin{align*}
\E\!\left[
\frac{\|\widehat G_{i,e}\|_2^2}{K_E}
\right]
&\le
\E\!\left[
\frac1{\sum_{j=1}^mY_j}
\E[\|\widehat G_{i,e}\|_2^2\mid\mathscr F_e]
\right].
\end{align*}
Each half contains at least three episodes when $E\ge6$. Applying
\eqref{eq:reinforce-second-moment} and
\eqref{eq:reinforce-inverse-horizon} to the relevant half gives
\begin{equation}
\label{eq:reinforce-estimator-random-horizon}
\E\!\left[
\frac{\|\widehat G_{i,e}\|_2^2}{K_E}
\right]
\le
\frac{96A_{\max}^2H^3}{E\varepsilon_e},
\qquad E\ge6.
\end{equation}

For the martingale term, Cauchy--Schwarz,
\eqref{eq:reinforce-martingale-aggregate}, and
\eqref{eq:reinforce-inverse-horizon} give
\begin{align*}
&\E\!\left[
\frac1{K_E}
\sum_{i=1}^N\sum_{s\in\cS}
\max_{b_i\in A_i}|M_{i,s,b_i}(E)|
\right]
\\
&\qquad\le
\frac{20NA_{\max}^{3/2}H\sqrt{|\cS|}}{E}
\left(\sum_{e=1}^E\frac1{\varepsilon_e}\right)^{1/2}.
\end{align*}
Finally, exchangeability and $\sum_{e=1}^EY_e/K_E=1$ give
$\E[Y_e/K_E]=1/E$ for every $e$.

Taking expectations in \eqref{eq:reinforce-master-bound} and using
\eqref{eq:reinforce-estimator-random-horizon} gives, for $E\ge6$,
\begin{align}
\E\!\left[\mathcal G_{K_E}^{\mathrm{AMCR}}\right]
\le{}&
\frac{2N|\cS|}{HE\eta_E}
+
\frac{48NA_{\max}^2H^3}{E}
\sum_{e=1}^E\frac{\eta_e}{\varepsilon_e}
\notag\\
&+
\frac{20NA_{\max}^{3/2}H\sqrt{|\cS|}}{E}
\left(\sum_{e=1}^E\frac1{\varepsilon_e}\right)^{1/2}
+
\frac{NH}{E}\sum_{e=1}^E\varepsilon_e.
\label{eq:reinforce-expected-bound}
\end{align}
Under the schedules in the theorem,
\begin{align*}
\frac1{E\eta_E}
&=O\!\left(
A_{\max}H^2|\cS|^{-1/2}E^{-1/3}
\right),
&
\sum_{e=1}^E\frac{\eta_e}{\varepsilon_e}
&=O\!\left(
\frac{|\cS|^{1/2}}{A_{\max}H^2}E^{2/3}
\right),
\\
\sum_{e=1}^E\varepsilon_e
&=O(E^{2/3}),
&
\sum_{e=1}^E\frac1{\varepsilon_e}
&=O(E^{4/3}).
\end{align*}
Substitution into \eqref{eq:reinforce-expected-bound} proves
\eqref{eq:reinforce-expected-rate} for $E\ge6$. Increasing $C$ covers
$E<6$ because $\mathcal G_{K_E}^{\mathrm{AMCR}}\le NH$.

We next prove almost-sure convergence. The strong law gives $K_E/E\to H$.
Equation~\eqref{eq:reinforce-second-moment} and the schedules imply
\begin{align*}
\E\!\left[
\frac1E\sum_{e=1}^E\eta_e
\sum_{i=1}^N\|\widehat G_{i,e}\|_2^2
\right]
&=O(E^{-1/3}),
\\
\E\!\left[
\frac1E\sum_{e=1}^E\varepsilon_eY_e
\right]
&=O(E^{-1/3}).
\end{align*}
The two numerators are nonnegative and nondecreasing. For either numerator,
write it as $S_E$. Markov's inequality at $E=2^m$ gives
$\bbP(S_{2^m}/2^m>2^{-m/6})=O(2^{-m/6})$. The Borel--Cantelli lemma gives
$S_{2^m}/2^m\to0$, and monotonicity gives $S_E/E\to0$ between consecutive
dyadic horizons. Therefore, almost surely,
\begin{equation*}
\frac1E\sum_{e=1}^E\eta_e
\sum_{i=1}^N\|\widehat G_{i,e}\|_2^2
\longrightarrow0,
\qquad
\frac1E\sum_{e=1}^E\varepsilon_eY_e
\longrightarrow0.
\end{equation*}

For each fixed $(i,s,b_i)$,
\eqref{eq:reinforce-martingale-second-moment} and
$\varepsilon_e^{-1}=O(e^{1/3})$ imply
\begin{equation*}
\sum_{e=1}^{\infty}
\frac{\E[(\Delta M_{i,s,b_i,e})^2]}{e^2}
<\infty.
\end{equation*}
Hence the martingale
$\sum_{e=1}^E\Delta M_{i,s,b_i,e}/e$ has uniformly bounded second moments
and converges almost surely. Kronecker's lemma then gives
$M_{i,s,b_i}(E)/E\to0$ almost surely. There are finitely many
player--state--action triples. Since $1/(E\eta_E)\to0$ and $K_E/E\to H$,
\eqref{eq:reinforce-master-bound} now gives
\begin{equation}
\label{eq:reinforce-completed-episodes}
\mathcal G_{K_E}^{\mathrm{AMCR}}\longrightarrow0
\qquad\text{almost surely}.
\end{equation}

It remains to consider a horizon inside an episode. For sufficiently large
$T$, let $E_T:=\max\{E:K_E\le T\}$. Since each one-stage advantage belongs
to $[-H,H]$,
\begin{align*}
T\mathcal G_T^{\mathrm{AMCR}}
&\le
K_{E_T}\mathcal G_{K_{E_T}}^{\mathrm{AMCR}}
+NH(T-K_{E_T})
\\
&\le
K_{E_T}\mathcal G_{K_{E_T}}^{\mathrm{AMCR}}
+NHY_{E_T+1}.
\end{align*}
Thus
\begin{equation*}
\mathcal G_T^{\mathrm{AMCR}}
\le
\mathcal G_{K_{E_T}}^{\mathrm{AMCR}}
+NH\frac{Y_{E_T+1}}{K_{E_T}}.
\end{equation*}
The geometric tail and the Borel--Cantelli lemma imply $Y_{E+1}/E\to0$
almost surely. Together with $K_E/E\to H$, this gives
$Y_{E+1}/K_E\to0$. Equation~\eqref{eq:reinforce-completed-episodes} completes
the proof.
\end{proof}

\section{Proofs for Section~\ref{sec:mcce}}
\label{app:mcce}

\begin{proof}[Proof of Lemma~\ref{lem:exact-mbcce-mcce}]
Fix $i,s$, and $b_i$. By \eqref{eq:mbcce-induced-action-law},
\eqref{eq:averaged-one-step-gain}, and the tying equation,
$g_i^\mu(s,b_i)$ is the expectation under
$\Pi\sim\mu(\cdot\mid s)$ and $a\sim\Pi(\cdot\mid s)$ of
$Q_i^{\sigma_\mu}(s,(b_i,a_{-i}))-Q_i^{\sigma_\mu}(s,a)$. For every
$(\Pi,a)$, the Bellman equations give
\begin{align*}
&Q_i^{\sigma_\mu}(s,(b_i,a_{-i}))-Q_i^{\sigma_\mu}(s,a)
\\
&\quad=
Q_i^\Pi(s,(b_i,a_{-i}))-Q_i^\Pi(s,a)
+
\gamma\left\langle
P(\cdot\mid s,(b_i,a_{-i}))-P(\cdot\mid s,a),
V_i^{\sigma_\mu}-V_i^\Pi
\right\rangle.
\end{align*}
Averaging proves \eqref{eq:exact-mbcce-mcce-local}.

Fix a deterministic stationary deviation $\beta_i$, and let
$P_i^{\beta_i,\sigma_{\mu,-i}}$ be its transition matrix against
$\sigma_{\mu,-i}$. Its Bellman equation and
\eqref{eq:averaged-one-step-gain} imply
\begin{align*}
V_i^{\beta_i,\sigma_{\mu,-i}}-V_i^{\sigma_\mu}
={}&
g_i^\mu(\cdot,\beta_i(\cdot))
+
\gamma P_i^{\beta_i,\sigma_{\mu,-i}}
\left(V_i^{\beta_i,\sigma_{\mu,-i}}-V_i^{\sigma_\mu}\right).
\end{align*}
Iterating this identity and using
\eqref{eq:normalized-deviation-occupancy} gives
\begin{equation*}
(1-\gamma)
\left[V_i^{\beta_i,\sigma_{\mu,-i}}(s_0)-V_i^{\sigma_\mu}(s_0)\right]
=
\sum_s\omega_{i,s_0}^{\beta_i,\sigma_{\mu,-i}}(s)
 g_i^\mu(s,\beta_i(s)).
\end{equation*}
Maximizing over $i,s_0,\beta_i$ proves
\eqref{eq:exact-mbcce-mcce}.
\end{proof}

\begin{proof}[Proof of Corollary~\ref{cor:algorithm-empirical-mcce}]
We first bound the MCCE gap of the average behavior law $\sigma_T$.

Consider case~(i). Equation~\eqref{eq:critic-global-frequency} implies that
every state is visited infinitely often almost surely. Fix $i$ and $s$, and
write $e_t^0:=\|X_t-\Lambda(\Pi_t)\|_\infty$. Summing
\eqref{eq:appendix-AMCR-pointwise} over the first $n$ visits to $s$ gives
\begin{equation*}
\frac1n
\left(
\max_{b_i}\sum_{q=1}^n
\mathcal A_i(\Pi_{t_q(s)};s,b_i)
\right)_+
\le
\frac{R_{i,s}^{\mathrm{tar}}(n)}n
+
\frac2n\sum_{q=1}^n e_{t_q(s)}^0.
\end{equation*}
Proposition~\ref{prop:observed-feedback-regret} gives
$\limsup_nR_{i,s}^{\mathrm{tar}}(n)/n\le4H(N-1)\eps$ almost surely. Also,
\eqref{eq:tracking-main} and \eqref{eq:critic-Lambda-comparison} give
$e_t^0\le\|X_t-\Lambda_\eps(\Pi_t)\|_\infty+C_\Lambda^\infty\eps$.
Ces\`aro's lemma therefore yields
\begin{equation*}
\limsup_{n\to\infty}\frac1n
\left(
\max_{b_i}\sum_{q=1}^n
\mathcal A_i(\Pi_{t_q(s)};s,b_i)
\right)_+
\le
[4H(N-1)+2C_\Lambda^\infty]\eps.
\end{equation*}
By Lemma~\ref{lem:AMCR-advantage-lipschitz} and fixed exploration,
$|\mathcal A_i(p_t;s,b_i)-\mathcal A_i(\Pi_t;s,b_i)|
\le2N|\cS|L_{\mathcal A}\eps$. Since there are finitely many
player--state--action triples,
\begin{equation*}
\limsup_{T\to\infty}
\max_{i,s,b_i}
\left(
\frac1{N_s(T)}\sum_{t\in T_s(T)}\mathcal A_i(p_t;s,b_i)
\right)_+
\le
\kappa_{\mathrm M}\eps,
\end{equation*}
where one may take
$\kappa_{\mathrm M}:=4H(N-1)+2C_\Lambda^\infty
+2N|\cS|L_{\mathcal A}$.
Applying \eqref{eq:mbcce-mcce-upper-bound} to $\widehat\mu_T$ proves the
case~(i) bound in \eqref{eq:algorithm-empirical-mcce} with
$\widehat\sigma_T$ replaced by $\sigma_T$.

Consider case~(ii). For $s\in\cS$, let
$C_s(E):=\sum_{e=1}^E\one\{s_{e,0}=s\}$. The strong law gives
$C_s(E)/E\to\rho_0(s)$. The proof of
Theorem~\ref{thm:reinforce-AMCR} gives $K_E/E\to H$ and
$(L_{E+1}+1)/E\to0$. Hence, for $E_T:=\max\{E:K_E\le T\}$,
$T/E_T\to H$. Since the first $E_T$ episode starts occur among the first
$T$ played stages,
\begin{equation}
\label{eq:reinforce-state-frequency}
\liminf_{T\to\infty}\frac{N_s(T)}T
\ge
\frac{\rho_0(s)}H
>0
\qquad\forall s\in\cS
\end{equation}
almost surely. For every $i,s,b_i$,
\begin{equation*}
\left(
\frac1{N_s(T)}\sum_{t\in T_s(T)}
\mathcal A_i(p_t;s,b_i)
\right)_+
\le
\frac{T}{N_s(T)}\mathcal G_T^{\mathrm{AMCR}}.
\end{equation*}
Theorem~\ref{thm:reinforce-AMCR} and
\eqref{eq:reinforce-state-frequency} make the right-hand side converge to
zero uniformly over the finitely many triples $(i,s,b_i)$. Applying
\eqref{eq:mbcce-mcce-upper-bound} to $\widehat\mu_T$ proves the case~(ii) bound
in \eqref{eq:algorithm-empirical-mcce} with $\widehat\sigma_T$ replaced by
$\sigma_T$.

It remains to replace the average behavior law by realized empirical play.
Let $\mathcal J_t$ be the information available immediately before the
$t$th action in the relevant record. For
Algorithm~\ref{alg:bao-ac}, $\mathcal J_t=\mathcal H_t$; for
Algorithm~\ref{alg:reinforce-AMCR}, it is the corresponding episodic
pre-action history. In either case, $p_t$ is $\mathcal J_t$-measurable and
$\bbP(a_t=a\mid\mathcal J_t)=p_t(a\mid s_t)$.

Fix $s$ and $a$, and let $\mathcal K_{s,q}$ be the history immediately after
the action at the $q$th visit to $s$, with $\mathcal K_{s,0}$ trivial. Then
$\mathcal K_{s,q-1}\subseteq\mathcal J_{t_q(s)}$, and
\begin{equation*}
\E\!\left[
\one\{a_{t_q(s)}=a\}-p_{t_q(s)}(a\mid s)
\,\middle|\,
\mathcal K_{s,q-1}
\right]
=0.
\end{equation*}
The martingale strong law and infinite visitation imply
$\max_s\|\widehat\sigma_T^s-\sigma_T^s\|_1\to0$ almost surely.

Let $\delta_T:=\max_s\|\widehat\sigma_T^s-\sigma_T^s\|_1$. For every
$v\in[0,H]^{|\cS|}$, the Bellman operators induced by
$\widehat\sigma_T$ and $\sigma_T$ differ by at most $H\delta_T$ in sup norm;
the same bound holds for the operators induced by any fixed deviation
$\beta_i$ against their opponents' marginals. Since these operators are
$\gamma$-contractions,
\begin{equation*}
\max_i\left\{
\|V_i^{\widehat\sigma_T}-V_i^{\sigma_T}\|_\infty,
\sup_{\beta_i}
\|V_i^{\beta_i,\widehat\sigma_{T,-i}}
-V_i^{\beta_i,\sigma_{T,-i}}\|_\infty
\right\}
\le
H^2\delta_T.
\end{equation*}
Therefore,
$|\GapMCCE(\widehat\sigma_T)-\GapMCCE(\sigma_T)|
\le2H\delta_T\to0$ almost surely. This proves
\eqref{eq:algorithm-empirical-mcce}.
\end{proof}

For $v\in\R^{|\cS|}$, write
$\operatorname{sp}(v):=\max_{r\in\cS}v(r)-\min_{r\in\cS}v(r)$.

\begin{proof}[Proof of Proposition~\ref{prop:transition-aggregative-mcce}]
Fix $i,s,b_i$, and $\Pi$, and draw $a\sim\Pi(\cdot\mid s)$. By
\eqref{eq:transition-aggregative-kernel} and
\eqref{eq:transition-aggregative-lipschitz},
\begin{equation*}
\TV\!\left(
P(\cdot\mid s,(b_i,a_{-i})),P(\cdot\mid s,a)
\right)
\le
L_sw_i\|\phi_{i,s}(b_i)-\phi_{i,s}(a_i)\|
\le
L_sw_i\Delta_{i,s}.
\end{equation*}
For probability vectors $\alpha,\alpha'$ and any vector $v$,
$|\langle\alpha-\alpha',v\rangle|
\le\TV(\alpha,\alpha')\operatorname{sp}(v)$. In
$\mathsf C(\widehat\mu_T)$, the averaged law is $\sigma_T$ and each sampled
policy is one of the behavior profiles $p_t$. Since their canonical values belong to $[0,H]^{|\cS|}$,
$\operatorname{sp}(V_i^{\sigma_T}-V_i^{p_t})\le2H$. Substitution into
\eqref{eq:continuation-discrepancy} gives
\[
\mathsf C(\widehat\mu_T)
\le
\frac{2}{1-\gamma}
\max_{\substack{i,\,s_0\in\cS\\ \beta_i:\cS\to A_i}}
\sum_s
\omega_{i,s_0}^{\beta_i,\sigma_{T,-i}}(s)
L_sw_i\Delta_{i,s}.
\]
Because $\omega_{i,s_0}^{\beta_i,\sigma_{T,-i}}$ is a probability distribution
over $\cS$,
\[
\mathsf C(\widehat\mu_T)
\le
\frac{2}{1-\gamma}
\max_{i,s}L_sw_i\Delta_{i,s}.
\]
Corollary~\ref{cor:algorithm-empirical-mcce} proves the claimed bounds.

Under \eqref{eq:transition-common-minorization}, any stationary transition
matrix satisfies
$P_\sigma=\eta\mathbf 1q^\top+(1-\eta)\widetilde P_\sigma$
for some stochastic matrix $\widetilde P_\sigma$. Hence
$\operatorname{sp}(P_\sigma v)\le(1-\eta)\operatorname{sp}(v)$ for every
vector $v$. Since
$V_i^\sigma=r_i^\sigma+\gamma P_\sigma V_i^\sigma$ and
$r_i^\sigma\in[0,1]^{|\cS|}$,
\[
\operatorname{sp}(V_i^\sigma)
\le
1+\gamma(1-\eta)\operatorname{sp}(V_i^\sigma),
\]
so
$\operatorname{sp}(V_i^\sigma)\le[1-\gamma(1-\eta)]^{-1}$.
Therefore,
$\operatorname{sp}(V_i^{\sigma_T}-V_i^{p_t})
\le2[1-\gamma(1-\eta)]^{-1}$, which gives the sharper bound.
\end{proof}

\begin{proof}[Proof of Corollary~\ref{cor:transition-aggregative-mbcce-welfare}]
Fix $s$, $\Pi$, and $a$. The Bellman equations give
\begin{align}
&\sum_{i=1}^N
\left[
Q_i^\Pi(s,(a_i^\star(s),a_{-i}))-Q_i^\Pi(s,a)
\right]
\notag\\
&\quad=
\sum_{i=1}^N
\left[
u_i(s,(a_i^\star(s),a_{-i}))-u_i(s,a)
\right]
\notag\\
&\qquad+
\gamma\sum_{i=1}^N
\left\langle
P(\cdot\mid s,(a_i^\star(s),a_{-i}))-P(\cdot\mid s,a),
V_i^\Pi
\right\rangle.
\label{eq:transition-aggregative-q-decomposition}
\end{align}
By \eqref{eq:stage-utility-smoothness}, the first term on the right-hand side
of \eqref{eq:transition-aggregative-q-decomposition} is at least
$\lambda\sum_i u_i(s,a^\star(s))-(1+\nu)\sum_i u_i(s,a)$.
Moreover, Definition~\ref{def:transition-aggregative-game} gives
\[
\TV\!\left(
P(\cdot\mid s,(a_i^\star(s),a_{-i})),
P(\cdot\mid s,a)
\right)
\le L_sw_i\Delta_{i,s}.
\]
Since $V_i^\Pi\in[0,H]^{|\cS|}$, the continuation term in
\eqref{eq:transition-aggregative-q-decomposition} is therefore bounded below by
$-\gamma H L_s\sum_i w_i\Delta_{i,s}$.
Proposition~\ref{prop:mbcce-smoothness} now proves
\eqref{eq:transition-aggregative-mbcce-welfare-uniform}.

Under \eqref{eq:transition-common-minorization}, the range of every
$V_i^\Pi$ is at most $[1-\gamma(1-\eta)]^{-1}$ by the argument above, so the
same proof gives the sharper bound.
\end{proof}

\bibliographystyle{abbrvnat}
\bibliography{AMCR}

\end{document}